\newcommand{\algmargin}{\the\ALG@thistlm}   
\algnewcommand{\parState}[1]{\State%
    \parbox[t]{\dimexpr\linewidth-\algmargin}{\strut #1\strut}}
\newlist{myEnumerate}{enumerate}{9}
\setlist[myEnumerate,1]{label=(\arabic*)}
\setlist[myEnumerate,2]{label=(\Roman*)}
\setlist[myEnumerate,3]{label=(\Alph*)}
\setlist[myEnumerate,4]{label=(\roman*)}
\setlist[myEnumerate,5]{label=(\alph*)}
\setlist[myEnumerate,6]{label=(\arabic*)}
\setlist[myEnumerate,7]{label=(\Roman*)}
\setlist[myEnumerate,8]{label=(\Alph*)}
\setlist[myEnumerate,9]{label=(\roman*)}
\definecolor{green}{rgb}{0.0, 0.5, 0.0}
\crefname{lemma}{lemma}{lemmata}
\crefname{claim}{claim}{claims}
\crefname{theorem}{theorem}{theorems}
\crefname{proposition}{proposition}{propositions}
\crefname{corollary}{corollary}{corollaries}
\crefname{claim}{claim}{claims}
\crefname{remark}{remark}{remarks}
\crefname{definition}{definition}{definitions}
\crefname{fact}{fact}{facts}
\crefname{question}{question}{questions}
\crefname{condition}{condition}{conditions}
\crefname{algorithm}{algorithm}{algorithms}
\crefname{assumption}{assumption}{assumptions}
\crefname{notation}{notation}{notation}
\crefname{cond}{Condition}{Conditions}
\crefname{contModel}{Contamination Model}{Contamination Models}
   \par\noindent{\bfseries\upshape Proof Sketch\ }%
\newtheorem{theorem}{Theorem}[section]
\newtheorem{lemma}[theorem]{Lemma}
\newtheorem{corollary}[theorem]{Corollary}
\newtheorem{claim}[theorem]{Claim}
\newtheorem{definition}[theorem]{Definition}
\newtheorem{fact}[theorem]{Fact}
\theoremstyle{definition}
\newtheorem{remark}[theorem]{Remark}
\newlist{itemizec}{itemize}{2}
\setlist[itemizec,1]{label=\faCaretRight ,wide, parsep= 0.05pt, left = 15pt}
\newcommand{\eps}{\epsilon}
\renewcommand{\tilde}{\widetilde}
\newcommand{\Ind}{\mathds{1}}
\newcommand{\1}{\Ind}
\renewcommand{\Pr}{\operatorname*{\mathbf{Pr}}}
\newcommand{\Cov}{\operatorname*{\mathrm{Cov}}}
\newcommand{\E}{\operatorname*{\mathbf{E}}}
\newcommand{\poly}{\operatorname*{\mathrm{poly}}}
\newcommand{\polylog}{\operatorname*{\mathrm{polylog}}}
\newcommand{\Bernoulli}{\mathrm{Ber}}
\renewcommand{\vec}[1]{\boldsymbol{\mathbf{#1}}}
\renewcommand{\d}{\mathrm{d}}
\def\R{\mathbb R}
\def\N{\mathbb N}
\def\Z{\mathbb Z}
\newcommand{\cA}{\mathcal{A}}
\newcommand{\cD}{\mathcal{D}}
\newcommand{\cE}{\mathcal{E}}
\newcommand{\cI}{\mathcal{I}}
\newcommand{\cN}{\mathcal{N}}
\newcommand{\cS}{\mathcal{S}}
\newcommand{\cU}{\mathcal{U}}
\newcommand{\cV}{\mathcal{V}}
\newcommand{\bA}{\vec{A}}
\newcommand{\bI}{\vec{I}}
\newcommand{\hide}[1]{}
\DeclareMathOperator*{\pr}{\mathbf{Pr}}
\newcommand{\normal}{\mathcal{N}}
\def\d{\mathrm{d}}
\newcommand{\tr}{\mathrm{tr}}
\let\vec\mathbf
\def\colorful{0}
\newcommand{\snote}[1]{\footnote{{\bf \color{violet}[Sihan: { #1}\bf ]}}}
\newcommand{\tnote}[1]{\footnote{{\bf [Thanasis: {#1}\bf ] }}}
\newcommand{\inote}[1]{\footnote{{\bf [[Ilias: {#1}\bf ]] }}}
\newcommand{\anote}[1]{}
\newcommand{\tnote}[1]{}
\newcommand{\snote}[1]{}
\newcommand{\inote}[1]{}
\newcommand{\lp}{\left}
\newcommand{\rp}{\right}
\newcommand{\Mpop}{\vec {\tilde M}_{(\text{pop})}}
\newcommand{\accept}{\mathcal A}
\newcommand{\recursivefunction}{\textsc{RecursiveEstimate}}
\newcommand{\Memp}{\vec {\tilde M}_{(\text{emp})}}
\newcommand{\meanini}{{\tilde \mu}}  %
\newcommand{\muh}{\mu_{\mathrm{high}}}
\newcommand{\mul}{\mu_{\mathrm{low}}}
\newcommand{\Sigmavg}{  \vec {\tilde\Sigma}_{\mathrm{avg}} }
\newcommand{\Sacc}{ S_{\mathrm{accept}} }
\newcommand{\Snice}{ \mathcal{S}_{\mathrm{nice}} }
\newcommand{\mean}{\mu}
\newcommand{\normb}{4d\log(nd/\tau)}
\newcommand{\anormb}{d\log(nd/\tau)}
\newcommand{\varb}{ \log(nd/\tau) }
\DeclarePairedDelimiter\ceil{\lceil}{\rceil}
\DeclarePairedDelimiter\floor{\lfloor}{\rfloor}
\newcommand{\RowP}{\vec P}
\renewcommand{\L}{\log D}
\renewcommand{\hat}{\widehat}
\renewcommand{\Sacc}{ S_{\accept} }
\newcommand{\rr}{ \ceil{\log_2 N} }
\newcommand{\mm}{ \log_2 D }
\newcommand{\rejmean}{\tilde \mu}
\newcommand{\f}{f_\delta}
\newcommand{\pnormb}{2 d \log(n/\tau)}
\newcommand{\SearchSubSpace}{\textsc{FindSubSpace}}
\newcommand{\PartialEstimate}{\textsc{PartialEstimate}}
\title{Entangled Mean Estimation in High-Dimensions}
\author{
Ilias Diakonikolas\thanks{Supported by NSF Medium Award CCF-2107079 and an H.I. Romnes Faculty Fellowship.}\\
University of Wisconsin-Madison\\
{\tt ilias@cs.wisc.edu}\\
\and
Daniel M. Kane\thanks{Supported by NSF Medium Award CCF-2107547 and NSF Award CCF-1553288 (CAREER).}\\
University of California, San Diego\\
{\tt dakane@cs.ucsd.edu}
\and
Sihan Liu\thanks{Supported by NSF Medium Award CCF-2107547 and NSF Award CCF-1553288 (CAREER). Part of this work was done when the author was visiting the National Institute of Informatics (NII).}\\
University of California, San Diego\\
{\tt sil046@ucsd.edu}\\
\and
Thanasis Pittas\thanks{Supported by NSF Medium Award CCF-2107079 and NSF Award DMS-2023239 (TRIPODS).}\\
University of Wisconsin-Madison\\
{\tt pittas@wisc.edu}\\
}
\begin{document}

\maketitle

\begin{abstract}
We study the task of high-dimensional entangled mean estimation in the 
subset-of-signals model. Specifically, given $N$ independent random points 
$x_1,\ldots,x_N$ in $\R^D$ and a parameter $\alpha \in (0, 1)$ such that 
each $x_i$ is drawn from a Gaussian with mean $\mu$ and unknown covariance, and an unknown $\alpha$-fraction of the points have {\em identity-bounded} 
covariances, the goal is to estimate the common mean $\mu$. The one-dimensional 
version of this task has received significant attention in theoretical computer 
science and statistics over the past decades. Recent work~\cite{LiaYua20, 
compton2024near} has given near-optimal upper and lower bounds for the one-dimensional setting. On the other hand, our understanding of even 
the information-theoretic aspects of the
multivariate setting has remained limited.

In this work, we design a computationally efficient algorithm achieving an information-theoretically near-optimal error. Specifically, we show that the optimal 
error (up to polylogarithmic factors) is $f(\alpha,N) + \sqrt{D/(\alpha N)}$, 
where the term $f(\alpha,N)$ 
is the error of the one-dimensional problem and the second term 
is the sub-Gaussian error rate.
Our algorithmic approach employs an iterative refinement 
strategy, whereby we progressively learn more accurate 
approximations $\hat \mu$ to $\mu$. This is achieved via a 
novel rejection sampling procedure that removes points  
significantly deviating from $\hat \mu$, 
as an attempt to filter out 
unusually noisy samples. A complication that arises is that 
rejection sampling introduces bias in the distribution of the 
remaining points. To address this issue, 
we perform a careful analysis 
of the bias, develop an iterative dimension-reduction strategy, 
and employ a novel subroutine inspired by list-decodable learning 
that leverages the one-dimensional result.
\end{abstract}

\thispagestyle{empty}

\newpage
\setcounter{page}{1}

\section{Introduction}\label{sec:intro}

Classical statistics has traditionally focused on the idealized scenario where 
the input dataset consists of independent and identically distributed 
samples drawn from a fixed but unknown distribution. 
In a wide range of modern data analysis applications, there is an increasing need to move beyond this assumption since datasets are often collected from heterogeneous sources \cite{dundar2007learning,steinwart2009fast,zhu2014correlated,fan2014challenges,flaxman2015gaussian,gao2022survey}.
A natural formalization of heterogeneity in the context of mean estimation (the focus of this work) 
involves having each datapoint drawn independently
from a potentially different distribution within a (known) family that shares a \emph{common} mean parameter. Distributions with this property are referred to as \emph{entangled}, and the setting is also known as sample \emph{heterogeneity} or \emph{heteroskedasticity}. 

The task of estimating the mean of entangled distributions has gained significant 
attention in recent years for a number of reasons.
First, from a practical viewpoint, entangled distributions intuitively 
capture the idea of collecting samples from diverse sources. One of the early works 
that studied this task~\cite{ChiDKL14} illustrates this with the following 
crowdsourcing example. Suppose that multiple users rate a product with 
some true value $\mu$. Each user $i$ has their own level of knowledge about the 
product, captured by a standard deviation parameter $\sigma_i$. The rating from user 
$i$ is {assumed to be} sampled from a Gaussian distribution 
with mean $\mu$ and covariance $\sigma_i^2$, 
and the goal is to estimate $\mu$ in small absolute error using these samples.
Other practical examples include datasets collected from sensors under varying 
environmental conditions; see, e.g.,~{\cite{kamm2023survey}}.

From a theoretical viewpoint, statistical estimation given access to 
non-identically distributed, heterogeneous data is a natural 
and fundamental task, whose roots trace back several decades 
in the statistics literature. Early work~\cite{hoeffding1956distribution,sen1968asymptotic,weiss1969asymptotic,
sen1970note,stigler1976effect,shorack2009empirical} studied the asymptotic properties 
of such distributions. Specifically, \cite{ibragimov1976local,beran1982robust} 
studied maximum likelihood estimators and \cite{nevzorov1984rate,hallin1997unimodality,hallin2001sample,mizera1998necessary} analyzed the median estimator for non-identically distributed samples. 
Heterogeneity has also been studied for moments of distributions \cite{gordon2006minimum} and linear regression \cite{el1999l1,knight1999asymptotics}.
Mean estimation for entangled distributions, including 
the Gaussian setting considered in~\cite{ChiDKL14}, 
is also related to the classical task of parameter 
learning for mixture models---albeit in a 
regime that is qualitatively different than the one commonly studied. 
While in the canonical setting---see~\cite{Das99,AroKan01,AchMcs05,KanSV05} for classic references and~\cite{belkin2015polynomial,moitra2010settling,ChaSV17,HopLi18,KotSS18,DiaKS18-list,kong2020meta,DiaKKLT22-cluster,BakDJKKV22,LiuLi22,DiaKKLT22-cluster,diakonikolas2023sq}
for more recent work---one typically assumes a small (constant) 
number of components $k \ll N$ with different means, 
in the entangled setting each sample comes from its own component ($k=N$). Importantly, the shared mean assumption 
allows for meaningful results 
despite the high number of components.

\smallskip

\noindent {\bf Prior Work} 
We now summarize prior work for mean estimation of entangled Gaussians, 
starting with the (now well-understood) one-dimensional case. 
In this setting, we have access to samples $x_i \sim \cN(\mu,\sigma_i^2)$ 
with unknown $\sigma_i$ values.
For a concrete and simple configuration for the $\sigma_i$'s, 
we  consider the so-called \emph{subset-of-signals model}, 
introduced in \cite{LiaYua20}. In this model, it is assumed that 
at least an $\alpha$-fraction of the samples have $\sigma_i \leq 1$, 
while the remaining can have arbitrary variances. 
The goal is to estimate $\mu$ in absolute error that 
is as small as possible in terms of the number of samples $N$ and the rate $\alpha$.
A series of works \cite{ChiDKL14,PenJL19,PenJL19-isit,pensia_estimating_2021,xia2019non,yuan2020learning,LiaYua20,DevLLZ23,compton2024near} has established upper and lower bounds for this task. Specifically, the recent 
work \cite{compton2024near} gave an estimator 
with error matching (up to polylogarithmic factors) 
the lower bound of \cite{LiaYua20} in the subset-of-signals model 
(for a very wide regime of $\alpha$ values). Entangled Gaussian mean estimation 
in the subset-of-signals model is thus essentially resolved in one dimension.
Additional discussion on related work is provided in \Cref{app:additional_related_work}.

\smallskip

\noindent {\bf Entangled Mean Estimation in High Dimensions}
In contrast, the multivariate version of this problem is much less understood. 
Some of the prior work \cite{ChiDKL14,PenJL19,pensia_estimating_2021,compton2024near} 
only tangentially considered higher dimensions,  
focusing on the rather restricted setting that the covariance matrices are 
spherical, i.e., of the form $\vec \Sigma_i = \sigma_i^2 \vec I$. 
For this specific special case, it turns out that the problem becomes {\em easier} 
in higher dimensions---as each coordinate provides {more} 
information about {the scalar parameter $\sigma_i$.}
A more general formulation would 
be to replace the sphericity assumption on the $\vec \Sigma_i$'s by a boundedness assumption.
This leads to the following high-dimensional 
formalization of the subset-of-signals model. %

\begin{restatable}[Subset-of-Signals Model For High-Dimensional Gaussians]{definition}{ORIGINALMODEL}\label{def:model}
 Let $\mu \in \R^D$ be a target vector and $\alpha \in (0,1)$ be a parameter. A set of $N$ points in $\R^D$ is generated as follows: First, an adversary chooses $N$ positive semidefinite (PSD)  matrices $\vec \Sigma_1,\ldots, \vec \Sigma_N \in \R^{D \times D}$ under the 
 constraint that $\sum_{i=1}^N\1( \vec \Sigma_i \preceq \vec I) \geq \alpha N$. 
 Then, for each $i=1,\ldots,N$, the sample $x_i$ is drawn independently from $\cN(\mu,\vec \Sigma_i)$. The final dataset $\{x_1,\ldots,x_N\}$ is the input 
 provided to the learning algorithm. We call $\mean$ the \emph{common mean} and $\alpha$ the \emph{signal-to-noise rate} of the model.
\end{restatable}

\noindent This natural definition was 
suggested by Jerry Li~\cite{Li24-TTIC} at the TTIC 
Workshop on New Frontiers in Robust Statistics, 
where the complexity of the problem was posed as 
an open question.

We emphasize that our understanding of entangled mean estimation 
in the aforementioned setting is fairly limited---even information-theoretically.
The results in \cite{ChiDKL14, compton2024near, LiaYua20} already imply 
that any estimator for the arbitrary covariance setting must incur 
error that is larger, by at least a polynomial factor, 
than the error achievable in the spherical covariance case 
(see \Cref{sec:comparison} for more details). This suggests that the 
bounded covariance setting is more challenging than the spherical case 
and requires new ideas. Specifically, prior to this work, the optimal 
rate for the bounded covariance case was open---even ignoring 
computational considerations. 

A standard attempt to obtain a (potentially tight) 
upper bound on the error involves
using the one-dimensional estimator along an exponentially large cover 
of the unit ball in $\R^D$, and combining these estimates into a vector via a linear program (ala Tukey median)~\cite{Tuk75}. Unfortunately, this approach may fail in our setting, 
due to the following issue. 
Establishing correctness of the approach 
requires that 
the failure probability of the one-dimensional 
estimator is exponentially small in $D$. 
However, the currently best known error 
guarantees~\cite{compton2024near} 
{hold only with probability 
$1 - \poly(N)$.}\footnote{
{
Though one could amplify the success probability of \cite{compton2024near}
in a black-box manner using the standard ``median trick'', we remark that such a strategy will lead to a factor of $D$ loss in the error guarantee if the goal is to achieve success probability $1 - \exp( -\Theta(D) )$.
}
} Moreover, even if this 
obstacle could be circumvented, we would still 
end up with an exponential-time estimator. Finally, we note that 
a simple and natural computationally efficient approach 
involves applying a one-dimensional estimator 
for each axis of the space. Unfortunately, the 
error incurred by this approach is $\sqrt{D}$ 
times that of the one-dimensional estimator, 
which turns out to be significantly suboptimal.

 In summary, none of the known approaches yields error better than $\poly(D) f(\alpha,N)$, where $f(\alpha,N)$ is the error of the optimal one-dimensional estimator, {leaving even the information-theoretically tight bound wide open.} This leads to the core question of our work:
 \begin{center}
\emph{What is the optimal error rate for high-dimensional entangled mean estimation, 
both \\(i) from an information-theoretic perspective, and (ii) for computationally efficient algorithms?}
\end{center}
In this work, we resolve 
both aspects of this question (up to polylogarithmic factors) 
for a wide range of the parameters $N,D,\alpha$.

\subsection{Main Result}
Before we formally state our contributions, 
we recall the error guarantee of the $1$-d estimator 
given in \cite{compton2024near}. In particular, if
we denote by $N$ the number of samples and $\alpha$ the signal-to-noise rate (fraction of points with variances bounded from above by one), 
then their estimator $\hat \mu \in \R$ satisfies $|\hat \mu - \mu| \leq f(\alpha,N)$ with high probability, where
\begin{align}
\label{eq:simple-f-def}    
    f(\alpha,N) =  (\log(N/\alpha))^{O(1)} \cdot \begin{cases}
          \alpha^{-2} N^{-3/2},  &\displaystyle \Omega \lp( (\log N)/N \rp)  \leq \alpha \leq {N^{-3/4}} \\[0pt]
        \displaystyle {\alpha^{-2/3} N^{-1/2}}, &\displaystyle {N^{-3/4}} < \alpha < 1 \\[0pt]
        \infty , &\text{otherwise} \;.
    \end{cases}
\end{align}
The above error upper bound had been previously shown~\cite{LiaYua20} 
to be best possible up to polylogarithmic factors 
in the regime $\Omega \lp( \log N / N \rp) \leq \alpha \leq O(N^{1-\eps})$ for any arbitrarily small constant $\eps>0$.

Roughly speaking, the error of our high-dimensional estimator is equal, up to polylogarithmic factors, 
to the sum of the above $1$-d error and the statistical error for mean estimation 
of isotropic Gaussians. Specifically, our main result is the following: 
\begin{theorem}[High-Dimensional Entangled Mean Estimation]\label{thm:main}
\Call{EntangledMeanEstimation}{$N$} in \Cref{alg:mean_estimation} satisfies the following guarantee: 
    The algorithm draws $N$ samples in $\R^D$ from the subset-of-signals model of \Cref{def:model} with common mean $\mu \in \R^D$ and signal-to-noise rate $\alpha \in (0,1)$. If $N \geq \tfrac{D}{\alpha} \log^C(\tfrac{D}{\alpha})$, where $C$ is a  sufficiently large absolute constant, the output $\hat{\mu} \in \R^D$ of the algorithm satisfies the following with probability at least $0.99$:
    \begin{align*}
        \| \hat \mu - \mu \|_2 & \leq \log^{O(1)}(N) \left(\sqrt{\frac{D}{\alpha N} } + f(\alpha,N)  \right) \;,
    \end{align*}
    where $f(\cdot)$ is the function defined in \Cref{eq:simple-f-def}. Moreover, the algorithm runs in time $\poly(D,N)$.
\end{theorem}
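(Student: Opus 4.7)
\begin{proofsketch}
The plan is to prove the theorem through an iterative refinement procedure. The algorithm maintains a current estimate $\hat \mu$ and progressively sharpens it, so that the final error decomposes into the two terms of the bound: $\sqrt{D/(\alpha N)}$ arises from averaging the $\alpha N$ bounded-covariance samples once they dominate the surviving subset, while $f(\alpha, N)$ is inherited from invocations of the one-dimensional entangled estimator of \cite{compton2024near} along a small number of carefully chosen directions.

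The central subroutine is rejection sampling relative to the current estimate: keep only samples $x_i$ with $\|x_i - \hat \mu\|_2 \leq \tau$ for a threshold $\tau$ of order $\sqrt{D}\,\polylog(N/\alpha)$. When $\|\hat \mu - \mu\|_2$ is much smaller than $\sqrt{D}$, essentially all good samples (those with $\vec \Sigma_i \preceq \vec I$) survive, while many samples with very large covariance are discarded, so the empirical mean of the survivors is a strictly sharper estimate of $\mu$. Bootstrapping this loop only requires a coarse initial estimate, which can be obtained by applying the one-dimensional entangled estimator coordinate-wise, or by a bounded-covariance robust mean estimator treating the high-variance samples as corruption (in fact easier, since they share the correct mean).

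The main obstacle is that conditioning on $\|x_i - \hat \mu\|_2 \leq \tau$ biases the conditional mean of each accepted $x_i$ in a direction roughly aligned with $\hat \mu - \mu$, by an amount depending on $\vec \Sigma_i$; left uncorrected, this bias accumulates across iterations and blocks convergence. To handle it, I would use an iterative dimension-reduction scheme: from the empirical second-moment matrix of the accepted samples identify the low-dimensional subspace $V$ along which the variance (and therefore also the induced bias) is anomalously large; outside $V$, the empirical mean of the accepted samples is already accurate to $\widetilde O(\sqrt{D/(\alpha N)})$. Inside $V$, invoke the one-dimensional estimator of \cite{compton2024near} along each direction of a polynomially sized net on $V$, producing a short list of candidate corrections in a list-decodable-learning style, from which the correct one is selected by cross-consistency. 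Iterating this refinement $O(\log N)$ times geometrically shrinks both $\|\hat \mu - \mu\|_2$ and $\dim(V)$, yielding the claimed error bound. The $\poly(D,N)$ runtime follows since each round---rejection sampling, an eigenspace computation, a polynomial number of one-dimensional estimates, and a selection step---runs in polynomial time.
\end{proofsketch}
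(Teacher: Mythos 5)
Your overall architecture---warm start, rejection sampling around the current estimate, splitting into a low-variance subspace where the empirical mean of survivors is used and a high-variance subspace handled separately, iterated $O(\log N)$ times---matches the paper's. But there is a genuine gap in how you treat the high-variance subspace $V$. The averaging argument (trace of the surviving samples' average covariance is $O(d\log(nd))$) only guarantees that the \emph{low}-variance subspace has dimension at least $d/2$; the complementary high-variance subspace $V$ can have dimension $\Theta(D)$. Your plan to cover $V$ with a ``polynomially sized net'' and run the $1$-d estimator along each net direction cannot work there: a $\delta$-net of the unit sphere in an $m$-dimensional subspace has size $(1/\delta)^{\Omega(m)}$, so for $m = \Theta(D)$ the net is exponentially large, and even then the union bound would require the $1$-d estimator to succeed with probability $1 - \exp(-\Omega(D))$, which \cite{compton2024near} does not provide (the paper explicitly flags exactly this failure mode for the Tukey-median-style approach in the introduction). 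The paper's fix is structural: it \emph{recurses} on the high-variance subspace---project the data onto $V$, obtain a fresh $O(\sqrt{\dim V})$ warm start there via a tournament, rerun rejection sampling with the acceptance probability rescaled to the new dimension (which is what makes the filter progressively more stringent), and split again---so the dimension halves at each level and the coordinate-wise $1$-d estimator is only invoked at the base case of dimension $\polylog(ND)$, where the $\sqrt{\dim}$ blowup is affordable.

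Two secondary points. First, your warm start is underspecified in the hard regime $f(\alpha,N) = \omega(1)$: the coordinate-wise $1$-d estimator then gives error $\sqrt{D}\,f(\alpha,N) \gg \sqrt{D}$, and a ``robust mean estimator treating high-variance samples as corruption'' faces corruption fraction $1-\alpha > 1/2$, which forces a list-decodable output rather than a single estimate; the paper resolves this by drawing $\Theta(1/\alpha)$ candidate points (one of which is within $O(\sqrt{D})$ of $\mu$ w.h.p.) and selecting among them with a tournament based on $1$-d estimates along pairwise difference directions. Second, the paper uses \emph{soft} rejection (accept with probability $e^{-\|x_i-\tilde\mu\|_2^2/d}$) rather than your hard threshold; this keeps the conditional law of each accepted sample exactly Gaussian, which yields the closed-form bias $\frac{2}{d}\,\Sigmavg(\tilde\mu-\mu)$ on which the low-variance-subspace analysis rests. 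A hard threshold could plausibly be analyzed, but you would need to redo the bias computation for truncated Gaussians.
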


We remark that the error bound achieved by our algorithm 
is optimal up to poly-logarithmic factors 
in the subset-of-signals model, {provided that $N \geq \tilde \Omega(D/\alpha)$.} 
To show that the second error term, $f(\alpha, N)$, is necessary, we can simply embed the $1$-d hard instance of \cite{LiaYua20} in the $D$-dimensional space. 
Specifically, we can set the mean and variance of the first coordinate according 
to the $1$-d hard instance, and set the remaining coordinates to be deterministically $0$. 
The second term $\sqrt{D / (\alpha N)}$ is the statistical error rate of estimating 
the mean of isotropic Gaussians. This term is also necessary, as 
can be seen by embedding the standard hard instance of $D$-dimensional 
isotropic Gaussian mean estimation into the $\alpha N$ many samples with bounded covariances, and setting the covariances of the rest of the samples to be 
sufficiently large so that  they reveal almost no information.\looseness=-1

Finally, the algorithm succeeds whenever $N > D/\alpha$ (times polylogarithmic factors).
We remark that this is necessary for any estimator to achieve errors 
smaller than a constant, i.e., $\eps < 1/2$, even when the 
identities of the samples with bounded covariances are revealed to the 
algorithm. Extending the result to any $N \in \Z_+$ is an interesting 
open question that we leave for future work.

\subsection{Brief Overview of Techniques}

In this section, we summarize our approach for obtaining an estimator achieving the guarantees of \Cref{thm:main}. Towards this end, we will start by explaining how to obtain an initial rough estimate 
$\tilde \mu$ such that $\|\tilde \mu - \mu\|_2 \lesssim \sqrt{D}$.  We note that the main novelty (and bulk) 
of our technical work will be on developing 
a recursive procedure that iteratively improves upon $\tilde \mu$.

We now provide an efficient method to achieve the warm start. Specifically, provided that we are in the regime where $f(\alpha, N) = O(1)$, 
such an estimate $\tilde \mu$ can be easily obtained by running 
the $1$-d estimator from \cite{compton2024near} along each axis.
For the other regime, we design a sophisticated tournament 
procedure that outputs an estimate within $O(\sqrt{D})$ from the true mean (cf. \Cref{sec:tournament}). 

We next describe how to achieve improved estimation accuracy. 
Na\"ive approaches such as sample means are destined to fail in the 
subset-of-signals model, due to the fact that no assumptions are made 
on the covariances of $(1-\alpha)$-fraction of the samples. 
Specifically, these matrices can have arbitrarily large operator norms, 
which can cause the average of the samples to suffer from arbitrarily 
high {statistical errors in $\ell_2$ distance}. Our approach to 
overcome this issue 
is to use our initial estimator $\tilde \mu$ (warm start) 
to detect and reject samples that are too far from the true mean $\mu$ in Euclidean norm. 
Algorithmically, the rejection sampling procedure is to accept each sample $x$ with probability 
$\exp( - \|  x - \tilde \mu \|_2^2 / D )$. 
On the one hand, most of the ``good'' samples (the $\alpha N$ points 
with bounded covariance) survive the rejection sampling: this is because a Gaussian sample with covariance bounded above by $\vec I$ is $O(\sqrt{D})$-far from $\mu$ (and hence from $\tilde \mu$) with high probability, which causes its acceptance with high probability.
Regarding the remaining points (i.e., the ones 
with covariances $\vec \Sigma_i$ such that 
$ \tr\lp( \vec \Sigma_i \rp) \gg D$), 
the rejection sampling step %
ensures that the probability of acceptance is small enough so that
the average of the covariance matrices of the surviving points (inliers and outliers), which we denote by $\Sigmavg$, 
will have its trace bounded from above by $O(D)$.
We show that this essentially allows for accurate estimation of 
the population mean of the surviving points; roughly speaking, 
this follows from the fact that the standard error for mean estimation 
of a bounded covariance random variable $X$ is $O(\sqrt{\tr \lp( \Cov(X) \rp)}/N)$.

Unfortunately, the aforementioned approach does not quite work 
for the following reason: 
the rejection sampling procedure will also cause the population mean 
of the surviving samples to be biased.
In particular, since the acceptance probability is given by the exponential of some quadratic in the input point, 
the resulting distribution of each point $x_i$ conditioned on its acceptance 
will be some new Gaussian distribution 
whose means and covariances are functions of the true mean $\mu$, 
the covariance $\vec \Sigma_i$ of $x_i$, and the center $\tilde \mu$ used in the rejection sampling.
After a careful calculation, one can show that the bias 
of the new population mean of a set of samples,  
conditioned on their acceptance, will be given by 
$ \Sigmavg \; \lp(  \mu - \tilde \mu \rp)/D $, 
where $\Sigmavg$ is the average of the conditional 
covariance of the surviving points.
Since the operator norm of $\Sigmavg$ could be as large as 
its trace, which could itself be as large as $D$,
the bias caused by the rejection sampling over 
the \emph{entire space} could hence be prohibitively large.
That being said, if we can find some \emph{low-variance} subspace 
$\cV$ such that $v^\top \Sigmavg v$ is small for any $v \in \cV$, 
the magnitude of the bias %
within $\cV$ will be only a small constant multiple 
of $\| \mu - \tilde \mu \|_2$.
Fortunately, since the trace of $\Sigmavg$ is at most $O(D)$, 
by a simple averaging argument, 
$\Sigmavg$ must have at least $D/2$ many eigenvalues 
that are at most $O(1)$. Thus, the subspace spanned 
by the corresponding eigenvectors 
gives the desired \emph{low-variance} subspace.
Moreover, we show that this subspace can be approximately computed from the samples 
(more precisely, we can compute a subspace of \emph{similarly} 
low variance, up to polylogarithmic factors).
It then remains to estimate the mean $\mu$ 
within the complement high-variance subspace $\cV^{\perp}$.
To achieve this goal, the idea is to just project 
the datapoints onto $\cV^{\perp}$, and recursively run 
the same algorithm in that lower dimensional 
(of dimension $D/2$) subspace.\footnote{Here we assume 
that the algorithm can take multiple datasets, 
where each of them is generated by \Cref{def:model}.
We argue that this can be easily simulated with a single dataset  generated by \Cref{def:model}; see \Cref{def:model2} for more details.}
When the recursive procedure reaches a subspace with dimension $\polylog(D)$, we can simply run the $1$-d estimator 
along each axis to finish the recursion.
Since $\cV^{\perp}$ now has dimension only $D/2$, 
the recursion terminates after $\log D$ many iterations.

In our description so far, one full execution of the recursive algorithm yields some $\hat \mu$ with error $ \| \hat \mu - \mu  \|_2 \leq \|  \tilde \mu -  \mu  \|_2 / 2 + \sqrt{D/(\alpha N)} + f(\alpha, N)$ (up to polylogarithmic factors). 
The first term corresponds to the bias caused by the rejection sampling and 
the second term corresponds to the statistical error of $D$-dimensional mean estimation 
of the ``good'' $\alpha N$ samples of bounded covariance. Finally, 
the last term corresponds to the error of the $1$-d estimator used in the base case of the recursion.
Thus, the execution of this recursive algorithm improves the estimation error 
by a constant factor, provided that $\|  \tilde \mu -  \mu  \|_2$ 
is still significantly larger than the error bound stated in \Cref{thm:main}.
By iteratively repeating this process for $\log(ND)$ many iterations, 
the estimation error can be brought down {to the error bound of  \Cref{thm:main}.}

\subsection{Related Work}\label{app:additional_related_work}

\subsubsection{Additional Related Work on Entangled Mean Estimation}

{In this section we discuss the works that are most closely related to 
this paper. We refer the reader to \cite{PenJL19,PenJL19-isit,pensia_estimating_2021} and \cite{DevLLZ23} for additional 
references and in-depth discussion of earlier work 
in the statistics literature. 

The work of \cite{ChiDKL14} studied entangled mean estimation 
in one dimension. Instead of assuming that a subset of the samples have 
bounded variances, like in \Cref{def:model}, 
the $N$ samples are $x_i \sim \cN(\mu,\sigma_i^2)$ 
with $\sigma_1 \leq \sigma_2 \leq \cdots \leq \sigma_N$ 
and the error guarantee is stated directly as a function of 
the $\sigma_i$'s. 
{They show that the best possible estimation error is on the order of $ 1 / \sqrt{ \sum_{i=1}^N \sigma_i^{-2} }$ when the variances are known a priori.
Otherwise, they show an error of $\tilde O( \sqrt{N} \sigma_{\log n} )$ is achievable in the absence of such knowledge.}
\cite{ChiDKL14} also studies the high-dimensional setting where the samples follow spherical Gaussian distributions, i.e., with covariances equal to $\sigma_i^2 \vec I$.
As already mentioned, the task with the spherical covariances 
becomes easier in higher dimensions, 
as it is possible to estimate the covariance scale parameter $\sigma_i$. Using this, they achieve an error bound 
on the order of 
$1 / \sqrt{ \sum_{i=2}^N \sigma_i^{-2} }$ in $\ell_\infty$ distance when $D \gg \log N$. Notably, this almost recovers the error bound 
when the covariances are known a priori, except for missing the 
dependency on $\sigma_1$.

Subsequent works \cite{PenJL19,PenJL19-isit,pensia_estimating_2021} 
explore the more challenging non-Gaussian setting under only the 
assumptions of unimodality and radial (spherical) symmetry.
Specifically, they give a hybrid estimator
that achieves an error rate of {$ \sqrt{D}   \sqrt{N}^{1/D} \sigma $} provided that at least $\Omega \lp( \log N / N \rp)$-fraction of samples have marginal variance bound $\sigma$. 
One could see that this result recovers that of \cite{ChiDKL14} by setting $D = 1$.
The authors also consider the more general settings where the distributions are only assumed to be \emph{centrally} symmetric, i.e., 
the density function $\rho: \R^d \mapsto \R_+$ satisfies $\rho(x) = \rho(-x)$, and achieve an error of $O(\sqrt{N})$. This setting covers our setup of non-spherical Gaussians. 
Yet, as pointed out in \cite{compton2024near}, the error bound given in \cite{PenJL19} is sub-optimal under the subset-of-signals model even in the $1$-d case.

The work of \cite{DevLLZ23} also uses
symmetry in place of Gaussianity.
Their algorithm is fully adaptive, i.e., requiring no parameter tuning for specific distribution families, and is made possible by the techniques of intersecting confidence intervals, which {has later inspired the work of \cite{compton2024near}} that leads to (nearly) optimal $1$-d estimators in the subset-of-signals model.

The work \cite{yuan2020learning,LiaYua20} introduced the subset-of-
signals model, provided a nearly optimal lower bound within the model, 
and showed theoretical guarantees for \emph{the iterative trimming 
algorithm}---a widely used heuristic for entangled mean estimation. 
Notably, the algorithm works by iteratively searching for a mean 
parameter that minimizes the square distance to a subset of samples, 
and then searching for a subset of samples that minimize the squared distance to a 
given estimate of the mean parameter. Our algorithm bears some 
similarity to these techniques as we are also using estimates from past 
iterations to perform rejection sampling on samples, and then use the 
surviving samples to construct new estimates. We refer the readers to 
\Cref{sec:mean_estimation_main} for a detailed outline of our 
techniques.

Finally, Theorem 1 in \cite{compton2024near} (restated as \Cref{thm:one-dim} here) gives a nearly optimal 
$1$-d estimator for the subset-of-signals model. Similar to \cite{ChiDKL14}, they show that the result can be easily applied in the multivariate spherical Gaussians setting to nearly recover the error bound achievable with prior knowledge on variance scales. Moreover, as an improvement, they only require the dimension to be at least $2$ for the multivariate bound to be effective. 

\subsubsection{Comparison of Optimal Error in Spherical vs Arbitrary Gaussians} \label{sec:comparison}
From the results of \cite{ChiDKL14,compton2024near} and \cite{LiaYua20}, it can be seen that 
there is a polynomial gap 
between the 
the errors in the cases of spherical Gaussians and those with arbitrary covariances {in the subset-of-signals model}.
The first observation is that in the arbitrary covariance matrix setting, any estimator must have an error of $f(\alpha, N)$ (up to polylogarithmic factors) as shown in \cite{LiaYua20} (this is because by allowing arbitrary covariances one can encode the hard one-dimensional instance in one of the axes). 
{This means that the error is at least $f(\alpha,N) \geq \tilde \Omega (1/(\alpha^{2/3}\sqrt{N}))$.}
On the other hand, for \emph{spherical} Gaussians, \cite{ChiDKL14} and \cite{compton2024near} give estimators with errors $O( 1 / \sqrt{ \sum_{i=1}^N \sigma_i^{-2} })$. {Using $\sigma_1\leq\ldots \leq \sigma_{\alpha N}\leq 1$ to translate that to the subset-of-signals model, the estimator for the spherical Gaussians can achieve an error $O(1/\sqrt{\alpha N})$. This shows that there is a $\poly(1/\alpha)$ gap between the optimal errors in the two settings.}

\subsubsection{Further Related Work}
\paragraph{Robust Statistics}
From a robustness perspective,  samples with arbitrarily large 
covariances in \Cref{def:model} can be viewed as outliers. Robust 
statistics typically considers stronger outlier models, 
such as the \emph{Huber contamination model} \cite{Hub64}, 
where each outlier point is sampled from an unknown and potentially 
arbitrary distribution; or the \emph{strong contamination model} 
where outliers can be adversarial and even violate 
the independence between samples. Unlike the results of 
this paper, consistency in these more challenging models 
is impossible, with the error {bounded from below} 
by some positive function of the fraction of outliers. 
Efficient high-dimensional mean estimation in the aforementioned corruption models was first achieved in 
\cite{DiaKKLMS16-focs,LaiRV16}, where the outlier fraction is a 
constant smaller than $1/2$. When this fraction is more than $1/2$, it 
is no longer possible to produce a single estimate with worst-case 
guarantee. This setting is known as ‘list-decodable’ mean estimation, 
first studied in \cite{ChaSV17}. We refer to the recent book 
\cite{DiaKan22-book} for an overview of algorithmic robust statistics.

\paragraph{Other models of semi-oblivious adversaries}
Similar to the subset-of-signals model, there are other frameworks for modeling less adversarial outliers that are assumed to be independent and have additional (Gaussian) structure. These models are commonly referred as semi-oblivious noise models. One such model assumes that inlier points are distributed as $x_i \sim \cN(\mu,1)$ and an $\alpha<1/2$ fraction of outlier points are sampled as $x_i \sim \cN(z_i,1)$ where $z_i$'s are arbitrary centers. 
Recent work \cite{kotekal2024optimal} has characterized the error for estimating $\mu$ in this model, improving upon a line of work \cite{collier2019multidimensional,cai2010optimal,jin2008proportion}. \cite{collier2019multidimensional} studies the multivariate extension of the model. Finally, beyond mean estimation, the idea of modeling outliers via mean shifts has also been explored in the context of regression in \cite{sardy2001robust,gannaz2007robust,mccann2007robust,she2011outlier}.

\section{The Dimensionality Reduction Algorithm and Proof Roadmap}\label{sec:mean_estimation_main}

This section describes our main algorithm in tandem with a detailed sketch of its correctness proof. 
The pseudocode is provided in \Cref{alg:mean_estimation,alg:recursive,alg:find-subspace,alg:partial-estimate}.
{\Cref{alg:mean_estimation} describes our main algorithm that computes a rough initial estimate, and then leverages a dimensionality reduction routine to iteratively improve the estimation error.} 
\Cref{alg:recursive} contains the main subroutine~\recursivefunction.
Aided by two subroutines,~\SearchSubSpace~and~\PartialEstimate~(cf. \Cref{alg:find-subspace,alg:partial-estimate}) whose functionalities will be explained later on,
~\recursivefunction~takes as input an estimate $\hat \mu \in \R^D$ and 
a batch of $\Theta \lp( N / \big( \log N \log D \big)\rp)$ independent samples from 
the data-generating model of \Cref{def:model}, 
and produces a more refined estimate $\hat \mu'$ satisfying the following 
guarantee:  if $\| \hat \mu - \mu \|_2$ is significantly larger 
than the error bound specified in \Cref{thm:main}, then the new estimate 
$\hat \mu'$
satisfies $ \| \hat \mu' - \mu \|_2 \leq \| \hat \mu - \mu \|_2 / 2$.
This is summarized in the following statement (see \Cref{lem:single_stage_analysis} for the formal statement). 
\footnote{As will be explained later on, the routine is designed to operate in any subspace $\cV$ provided in the input. For this reason, the formal statement in \Cref{lem:single_stage_analysis} uses a matrix $\RowP$ whose rows are the orthonormal vectors that span that subspace ($\RowP^\top \RowP$ is the orthogonal projection matrix). What we present here corresponds to the simplified case $\RowP=\vec I$ and $d=D$.\looseness=-1}

\begin{algorithm}[h]
\caption{Entangled Mean Estimation in High Dimension}
\label{alg:mean_estimation}
\begin{algorithmic}[1]

\Function{EntangledMeanEstimation}{$N$}
    \State \textbf{Input}: Total number of samples $N$ to use, and noise-to-signal ratio $\alpha \in (0,1)$.
    \State \textbf{Output}: $\mu' \in \R^D$.
    \State $m \gets \mm$, $r \gets \rr$ \Comment{max 
    depth of recursion and number of outer loop iterations}
    \State {Set $\delta$ to be some sufficiently large constant. \footnotemark}
    \State $\tau \gets N^{-\delta}/r,n \gets \tfrac{N}{2+ m ( 3r + 1)}$ \Comment{Probability of failure $\tau$ and sample budget $n$ per iteration.}
    \State $\hat{\mu} \gets \Call{TournamentImprove}{\vec 0,n,\tau}$.\label{line:initial_guess}  \Comment{Rough estimate $\hat{\mu}$ with $O(\sqrt{D})$ error (\Cref{lem:guess-center})}
    \State \textbf{If } $ \f(\alpha, n) \geq \sqrt{D} $ \textbf{ then return } $\hat \mu$ \label{line:just-return}
    \Comment{where $\f(\cdot)$ is defined in \Cref{eq:function_f}}

    \For{$i=1,\ldots,r$}\label{line:stages} \Comment{each iteration improves estimation error}
    \State $\hat \mu' \gets$  \Call{RecursiveEstimate}{$\vec I,n,\hat \mu,\tau$}, 
    \label{line:root} \Comment{Iterative improvement (cf. \Cref{alg:recursive})}
    \State{$\hat{\mu} \gets \hat \mu'$}
    \EndFor
    \State \textbf{return} $\hat{\mu}$ \label{line:return}

\EndFunction

\end{algorithmic}
\end{algorithm}
\footnotetext{More formally, with $n$ input samples, the $1$-d estimator from \cite{compton2024near} with probability $1 - n^{-\delta}$ achieves error $\f(\alpha, n)$ (cf. \Cref{eq:function_f}).
For the purpose of our algorithm, we require its success probability to be a sufficiently large polynomial in $n$. Hence, we set $\delta$ to be some sufficiently large constant.}

\vspace{-1em}
\begin{lemma}[Iterative Refinement (Informal; see \Cref{lem:single_stage_analysis})]\label{lem:single_stage_analysis_informal}
There exists an algorithm \textsc{RecursiveEstimate} that takes as input $\hat \mu \in \R^D$ satisfying $ \| \hat \mu - \mu \|_2 \lesssim \sqrt{D}$, uses a dataset of $n=\Theta \lp( N / \big( \log N \log D \big)\rp)$ independent samples from the model of \Cref{def:model}, and produces some $\hat \mu'$ such that the following holds with high probability (where $f(\alpha,n)$ is defined in \Cref{eq:simple-f-def}):
    \begin{align*}
        \left\|  \hat \mu' -  \mu \right\|_2 \leq 
        \frac{1}{2}\left\|  \hat \mu -  \mu \right\|_2  + \polylog(ND) \lp( \sqrt{\frac{ 
D}{\alpha N} }  + f(\alpha,n)\rp) \;.
    \end{align*}
\end{lemma}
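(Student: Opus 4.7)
The plan is to prove the lemma by describing and analyzing a recursive procedure \recursivefunction\ that operates on a working subspace (initially $\R^D$) and halves its dimension at each recursive call, until the dimension drops to $\polylog(D)$, at which point the one-dimensional estimator of \cite{compton2024near} is run along each remaining axis. The key step in one level of recursion is a rejection-sampling filter: given the warm start $\hat\mu$, accept each sample $x_i$ independently with probability $\exp(-\|x_i-\hat\mu\|_2^2/D)$. Since a Gaussian sample with $\vec\Sigma_i\preceq\vec I$ lies within $O(\sqrt{D\log N})$ of $\mu$ (hence of $\hat\mu$) with high probability, each ``good'' sample survives with probability $\Omega(1)$; meanwhile, any sample whose $\tr(\vec\Sigma_i)$ is much larger than $D$ is suppressed enough that, over the accepted set $\cS$, the averaged post-conditioning covariance $\Sigmavg := \tfrac{1}{|\cS|}\sum_{i\in\cS}\Cov(x_i\mid i\in\cS)$ satisfies $\tr(\Sigmavg)=O(D)$ with high probability.

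First I would do the bias/covariance bookkeeping for a single Gaussian $x\sim\cN(\mu,\vec\Sigma)$ conditioned on acceptance: a completing-the-square calculation shows the conditional distribution is again Gaussian with conditional mean equal to $\mu - (\vec\Sigma/D)(\vec I+\vec\Sigma/D)^{-1}(\mu-\hat\mu)$, so the per-sample bias is $\approx -(\vec\Sigma/D)(\mu-\hat\mu)$, while the conditional covariance is $\vec\Sigma(\vec I+\vec\Sigma/D)^{-1}$, of no larger trace than $\vec\Sigma$. Averaging over $\cS$, the bias of the sample mean along a direction $v$ is controlled by $v^\top \Sigmavg v/D$. Over the full space this quantity can be $\Theta(1)$ (giving no contraction), but because $\tr(\Sigmavg)=O(D)$ a Markov/averaging argument yields a subspace $\cV$ of dimension at least $D/2$ on which $v^\top\Sigmavg v=O(\polylog(D))$ for every unit $v\in\cV$.

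Next I would invoke \SearchSubSpace\ to compute (an approximation of) such a low-variance subspace $\cV$, and \PartialEstimate\ to produce the $\cV$-component of $\hat\mu'$ as the $\cV$-projection of the sample mean of $\cS$. The bias on $\cV$ is $O(\polylog(D)/D)\cdot\|\mu-\hat\mu\|_2$, which is much smaller than $\tfrac12\|\mu-\hat\mu\|_2$; the statistical error on $\cV$ is at most $\polylog\cdot\sqrt{\tr(\Sigmavg|_\cV)/(\alpha|\cS|)}=\polylog\cdot\sqrt{D/(\alpha n)}$, matching the target. For the orthogonal complement $\cV^\perp$ (dimension $\le D/2$) I would project the data onto $\cV^\perp$ and recurse, spending a fresh batch of $\Theta(n/\log D)$ samples, which is the source of the $\log D$ factor inside the sample budget in \Cref{alg:mean_estimation}. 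The recursion bottoms out at dimension $\polylog(D)$ with a coordinate-wise one-dimensional estimator, contributing the $f(\alpha,n)$ term. Combining the $\cV$- and $\cV^\perp$-components by orthogonal decomposition
\begin{equation*}
\|\hat\mu'-\mu\|_2^2 = \|\Proj_{\cV}(\hat\mu'-\mu)\|_2^2 + \|\Proj_{\cV^\perp}(\hat\mu'-\mu)\|_2^2
\end{equation*}
gives the claimed bound.

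The main obstacle will be two-fold. First, the bias contribution from the single level must be small enough that, summed geometrically over the $\log D$ levels of recursion, it still leaves overall contraction by a factor $\tfrac12$; this forces the variance threshold defining $\cV$ to be $\polylog(D)$ rather than $O(1)$, and in turn requires \SearchSubSpace\ to extract a subspace of dimension slightly above $D/2$ whose directions all have $v^\top\Sigmavg v$ bounded by this polylog threshold, which is exactly what the trace bound $\tr(\Sigmavg)=O(D)$ furnishes via Markov. Second, we must verify that each recursive call receives samples that still fit a model of the form of \Cref{def:model}, now for the projected covariances on $\cV^\perp$: this uses that the constraint $\vec\Sigma_i\preceq\vec I$ is preserved under projection, and is formalized by the multi-batch variant \Cref{def:model2}. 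Once these two points are in place, a union bound over the $O(\log D)$ recursive calls, together with the per-call high-probability guarantees of \SearchSubSpace\ and \PartialEstimate, delivers the stated error with the required probability.
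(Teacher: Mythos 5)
Your overall architecture matches the paper's: rejection sampling around the warm start, a trace-plus-Markov argument producing a $\polylog$-variance subspace of dimension at least $d/2$, the empirical mean of the accepted samples on that subspace, recursion on the orthogonal complement, and a coordinate-wise one-dimensional base case. The bias computation via the Gaussian product formula, the need for a uniform bound $v^\top \Sigmavg v \le \polylog$ on the low-variance subspace, and the observation that the per-level errors accumulate over the $\log_2 D$ levels (they add, rather than compound geometrically, since each level restarts from the same $\hat\mu$) so that each level must contract by a factor of order $1/\log D$ --- all of this is the paper's reasoning.

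The genuine gap is in the recursive step. The rejection filter accepts $x_i$ with probability $\exp(-\|x_i-\tilde\mu\|_2^2/d)$ where $d$ is the \emph{current} dimension, and its analysis requires the center to satisfy $\|\tilde\mu-\vec P\mu\|_2\lesssim\sqrt{d}$: a good sample in the $d$-dimensional subspace is typically at distance $O(\sqrt{d})$ from $\vec P\mu$, and only then is its acceptance probability $\Omega(1)$. The warm start guarantees this only at the top level $d=D$. After one halving, the projected warm start $\vec P_{\mathrm{high}}\hat\mu$ may still be at distance $\Theta(\sqrt{D})\gg\sqrt{d}$ from $\vec P_{\mathrm{high}}\vec P\mu$, in which case every good sample has $\|x_i-\tilde\mu\|_2^2=\Theta(D)$ and survives only with probability $e^{-\Theta(D/d)}$; the accepted set then fails to contain $\Omega(\alpha n)$ inliers, and both the subspace search and the mean estimate collapse. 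The paper resolves this by re-running the tournament of \Cref{lem:prune_informal} on a fresh projected batch at \emph{every} level of the recursion (\Cref{line:tournament} of \Cref{alg:recursive}), which re-centers to within $O(\sqrt{d}+\f(\alpha,n))$ of $\vec P\mu$; your plan omits this re-centering, and without it the argument fails from the second level onward. A smaller slip in the same vein: bounding a good sample's distance by $O(\sqrt{D\log N})$ only yields acceptance probability $N^{-O(1)}$, not $\Omega(1)$; one should instead apply Markov to $\E\|x_i-\tilde\mu\|_2^2\lesssim D$ to get distance $O(\sqrt{D})$ with probability $1/2$ and hence constant acceptance probability, which is what makes the $\Omega(\alpha n)$ lower bound on the accepted set go through.
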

Suppose that we have a rough estimate $\hat \mu$ satisfying 
$\| \hat \mu - \mu \|_2 \lesssim \sqrt{D}$.
It is easy to see that running~\recursivefunction~for at most $\Theta(\log(N))$ many iterations\footnote{For the purpose of this introductory section, we will pretend as if the algorithm can draw an independent dataset from \Cref{def:model} in each iteration. This type of access can be simulated by randomly splitting a large dataset into smaller ones and is presented formally in \Cref{sec:preliminaries}. \label{fnote:samples}} will bring this error down to the one presented in \Cref{thm:main}.
This iterative reduction is accomplished using the for loop in \Cref{alg:mean_estimation}.

We now summarize the idea behind \Cref{lem:single_stage_analysis_informal}. 
Let $\hat \mu$ be the rough estimate provided as input. 
The routine~\recursivefunction~is a recursive function that performs 
divide-and-concur on the dimension $D$.
In particular, in each step, it receives a batch of $n=\Theta \lp( N / (\log (N) \log (D))\rp)$ many samples, and uses the subroutine~\SearchSubSpace
to partition the current subspace further into a low-variance 
and a high-variance subspace\footnote{It will be explained later 
on in this introductory section in what sense we call the subspace ``low-variance''.}, each with dimension at most half 
of the original subspace.
Then the algorithm invokes~\PartialEstimate~to compute an estimate within the low-variance subspace that has improved error; while for the high-variance subspace it makes a recursive call.
The recursion continues until we reach a subspace of dimension $d \lesssim \polylog(ND)$.
In that case, we can simply use the $1$-d estimator from 
\cite{compton2024near} along each axis in that low-dimensional subspace; 
the corresponding error guarantee will be at most $\sqrt{d} = \polylog(ND)$ factor larger than that of the error $f(\alpha,N)$ of the $1$-d base estimator (\Cref{thm:one-dim}).

Suppose that the dimension of the current subspace is $d$, i.e., assume that the samples are now vectors in $\R^d$ {with mean $\mu \in \R^d$}.
The analysis consists of the following inductive claim: as long as the dimension is not too small, i.e., $d \gg \polylog(ND)$, 
a single step of~\recursivefunction~produces a subspace $\cV \subset \R^d$ of dimension $\dim(\mathcal V)\leq d/2$ (this is the low-variance subspace identified by~\SearchSubSpace) together with an estimate $\mul \in \cV$ such that with high probability
\begin{align}\label{eq:goal}
    \|\mul -  \vec \Pi_{\cV}\mu \|_2 \leq \|  \hat{\mu} -\mu \|_2/(2 \log_2 D) + \tilde O \lp( \sqrt{{ 
\dim(\mathcal V)}/{(\alpha N)} } \rp) \, ,
\end{align}
where $\vec \Pi_{\cV}$ is the orthonormal projector onto $\cV$.
If $D$ denotes the original dimension of the space for which \recursivefunction~is first called, after the end of all recursive calls, 
the algorithm will have partitioned the entire space $\R^D$ into at most 
$\log_2 D$ orthogonal low-variance subspaces and will have produced an 
estimate for each of these subspaces with error as in \Cref{eq:goal}. 
The final estimate $\hat \mu'$ returned is a combination of the estimate for each subspace as well as the base case estimator (which has error $\polylog(ND) f(\alpha,N)$).
By summing all the errors together, it can be seen that the error of $\hat \mu'$ is the one provided in \Cref{lem:single_stage_analysis_informal}.
The rest of this section provides a sketch of the correctness proof for \Cref{eq:goal}.\looseness=-1

\subsection{Warm-start Estimate via Tournament}
\recursivefunction~from {\Cref{lem:single_stage_analysis_informal}} requires some $\hat \mu$ such that $ \|\hat \mu - \mu \|_2 \lesssim \sqrt{D}$.
If the $1$-d estimator achieves accuracy 
$\eps = O(1)$, we can simply use it along each axis to obtain such a $\hat \mu$.
In the more challenging regime when $\eps = \omega(1)$, we will exploit the fact that a sample with covariance 
bounded by $\vec I$ is within distance of $\sqrt{D}$ 
from $\mu$ in expectation. 
While accurately identifying such a sample is hard, 
by taking roughly $1 / \alpha$ many samples $\mu_1,\ldots,\mu_{1/\alpha}$, one 
$\mu_i$ of them will satisfy $\|\mu_i - \mu\|_2 \lesssim \sqrt{D}$ with high probability.
We then design a \emph{tournament} procedure to choose a vector $\tilde \mu$ from $\mu_1, \ldots, \mu_{1/\alpha}$ which is almost as good as that $\mu_i$.\looseness=-1

\begin{lemma}[Tournament (Informal; see \Cref{lem:prune})]
\label{lem:prune_informal}
There exists an algorithm that takes as input 
a list $L = \{ \mu_1, \cdots, \mu_k \} \subset \R^D$, 
draws a dataset of size $n$ {from the model of \Cref{def:model2}},
and outputs an estimate $\mu_j \in L$ 
such that
$ \| \mu_{j} - \mu \|_2 \lesssim 
\min_{i \in [k]} \| \mu_i - \mu \|_2 + \f(\alpha, n)$ with high probability.
\end{lemma}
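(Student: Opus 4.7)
My plan is to implement the tournament through pairwise comparisons, where each comparison projects the $n$ samples onto a one-dimensional direction and invokes the 1-d estimator of \cite{compton2024near}. Concretely, for each pair of distinct candidates $\mu_i,\mu_j\in L$, I set $v_{ij} := (\mu_i - \mu_j)/\|\mu_i - \mu_j\|_2$ and apply the 1-d subset-of-signals estimator to the projected dataset $\{v_{ij}^\top x_t\}_{t\in[n]}$. Since $v_{ij}^\top \vec{\Sigma}_t v_{ij} \leq \|v_{ij}\|_2^2 = 1$ whenever $\vec{\Sigma}_t \preceq \vec{I}$, the projection preserves the subset-of-signals structure with the same rate $\alpha$, and the estimator returns some $\hat p_{ij}$ satisfying $|\hat p_{ij} - v_{ij}^\top \mu| \leq \f(\alpha,n)$ with high probability. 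Tuning the failure-probability exponent of the 1-d estimator appropriately and union-bounding over the $\binom{k}{2}$ pairs makes all projection estimates accurate simultaneously.

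\textbf{Output rule and analysis.} I declare that $\mu_j$ \emph{beats} $\mu_i$ whenever $|v_{ij}^\top \mu_j - \hat p_{ij}| < |v_{ij}^\top \mu_i - \hat p_{ij}|$, and for each $i$ define the loss radius $R_i := \max\{\|\mu_i - \mu_\ell\|_2 : \mu_\ell \text{ beats } \mu_i\}$ (with $R_i := 0$ if nobody beats $\mu_i$). The algorithm then outputs $\hat{\jmath} := \argmin_{i\in[k]} R_i$. Write $i^* := \argmin_i \|\mu_i - \mu\|_2$, $\Delta^* := \|\mu_{i^*} - \mu\|_2$, and $\eps := \f(\alpha,n)$. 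Suppose $\mu_j$ beats $\mu_{i^*}$ and choose the sign of $v := v_{i^* j}$ so that $v^\top \mu_{i^*} > v^\top \mu_j$. The beat condition places $\hat p_{i^* j}$ strictly on the $\mu_j$ side of the midpoint of $v^\top \mu_{i^*}$ and $v^\top \mu_j$; combining this with $|\hat p_{i^* j} - v^\top \mu| \leq \eps$ and $|v^\top(\mu - \mu_{i^*})| \leq \Delta^*$ gives $\|\mu_{i^*} - \mu_j\|_2 < 2\Delta^* + 2\eps$. Hence $R_{\hat\jmath} \leq R_{i^*} \leq 2\Delta^* + 2\eps$. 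A short case split on whether $\mu_{\hat\jmath}$ beats $\mu_{i^*}$ (apply the symmetric version of the same inequality) or the reverse (use $\|\mu_{\hat\jmath} - \mu_{i^*}\|_2 \leq R_{\hat\jmath}$), combined with the triangle inequality, gives $\|\mu_{\hat\jmath} - \mu\|_2 \leq 3\Delta^* + 2\eps$, which matches the claimed bound up to constants.

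\textbf{Main obstacle.} The principal care lies in coordinating the high-probability guarantees with the sample budget rather than in the comparison algebra above. Because the estimator of \cite{compton2024near} provides only inverse-polynomial failure probability, pushing the per-pair failure below $\tau/k^2$ forces the exponent $\delta$ to be chosen as a function of $k = |L|$ and $\tau$; this inflates $\f(\alpha, n)$ by at most polylogarithmic factors, which are absorbed into the final bound. A secondary but essential point is that the candidate list $L$ must be independent of the samples used for the projection estimates, so that each direction $v_{ij}$ is independent of the projected data being used to estimate $v_{ij}^\top \mu$; in the calling algorithm this is arranged by drawing a fresh batch of samples from the model of \Cref{def:model2} disjoint from whatever samples produced $L$ in the first place.
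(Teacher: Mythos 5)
Your proposal is correct and follows essentially the same route as the paper's proof of \Cref{lem:prune}: pairwise comparisons obtained by projecting the fresh batch onto each difference direction $v_{ij}$, running the 1-d estimator there, and exploiting that the projection both preserves the subset-of-signals structure and recovers $\|\mu_i-\mu_j\|_2$ exactly along $v_{ij}$. The only difference is cosmetic — you select the candidate minimizing a ``loss radius'' over strict wins, whereas the paper disqualifies candidates using a $2\f(\alpha,n)$ slack and outputs any survivor — and both rules yield the same $O(\min_i\|\mu_i-\mu\|_2 + \f(\alpha,n))$ guarantee.
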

The algorithm establishing the above lemma is inspired by  
ideas {developed in the context of} 
list-decodable mean estimation \cite{DiaKK20-multifilter}.
In particular, for every pair of candidate estimates $\mu_j, \mu_{\ell}$,
the algorithm compares $\mu_j - \mu$
and $\mu_{\ell} - \mu$ \emph{projected} along the direction of $\mu_j - \mu_{\ell}$ to decide wether or not to remove $j$ from the list. This kind of comparisons, essentially reduce the problem into  one-dimensional tasks, where we can again just use the $1$-d estimator to learn the projection of $\mu$ along the direction of $\mu_{\ell} - \mu_j$.
We remark that while the goal of a tournament procedure 
is to prune the candidate list down to a shorter list \emph{containing} 
a good estimate in the list-decodable setting, our tournament procedure 
needs to produce a \emph{single} estimate that has nearly optimal 
error. To achieve this, we require a novel analysis leveraging the 
structure of our setting.
The formal statement and proof can be found in \Cref{lem:prune}.

\subsection{Rejection Sampling}
\label{sec:rejection-sampling}
Recall that our procedure~\recursivefunction~relies critically on finding a low-variance subspace.
Unfortunately, a low-variance subspace 
may not exist in general---as even a single noisy sample 
can cause the variance to be arbitrarily large.
Our key idea is to use the warm-start estimate $\tilde \mu$ obtained from the tournament procedure 
as a rejection sampling center to filter out unusually noisy samples: 
If $x_i \sim \cN(\mu, \vec \Sigma_i)$ denote the original samples,
for each $i$  we sample a decision variable $b_i \sim \Bernoulli \left(\exp \left( - \|x_i - \meanini\|_2^2/d \right) \right)$ independently to decide whether or not to accept the sample.\looseness=-1

\paragraph{Distribution of Accepted Samples}
To analyze the effect of rejection sampling, denote by $\cA_i$ the distribution of the $i$-th sample conditioned on its acceptance, i.e., the distribution of $x_i \sim \normal(\mu, \vec \Sigma_i)$ conditioned on $b_i=1$.
A convenient fact (cf. \Cref{lem:product}) is that
$\cA_i$ is simply another Gaussian $\cN(\tilde \mu_i, \tilde {\vec \Sigma}_i)$, 
for some $\tilde \mu_i, \tilde {\vec \Sigma}_i$ defined as
functions of $\mu,\meanini,\vec \Sigma_i$ that are provided in \eqref{eq:gaussian_formula}. 
Denote by $\Sacc = \{i \in [n]: b_i=1 \}$ the set of all accepted indices (which is a random set). 
By independence between samples and \Cref{lem:product}, 
if we condition on a set $\Sacc=S$,
then the conditional joint distribution of $\{x_i\}_{i \in S}$
is equal to the product distribution of the Gaussians $\cN(\tilde \mu_i, \tilde {\vec \Sigma}_i)$ for $i \in S$ (with $\tilde \mu_i, \tilde {\vec \Sigma}_i$ defined in \eqref{eq:gaussian_formula}).
We can thus adopt the following more convenient view of the random process that generates the \emph{accepted} samples. 
\begin{definition}[Generation of accepted samples---alternative view]\label{def:data-gen-accepted}\phantom{x}
   \begin{enumerate}[leftmargin=*,itemsep=-2pt]
    \item $\Sacc$ is generated by independently including $i \in [n]$ 
    with probability $\E_{ x_i \sim \normal(\mu, \vec \Sigma_i)}[ e^{-  \| x_i - \tilde \mu\|_2^2 / d} ]$.
    \item For each $i \in \Sacc$,  $x_i$ is drawn independently from $\mathcal A_i := \cN(\tilde \mu_i, \tilde {\vec \Sigma}_i)$, where
    $\tilde \mu_i, \tilde {\vec \Sigma}_i$ are defined in \eqref{eq:gaussian_formula}.\looseness=-1
\end{enumerate} 
\end{definition}
\noindent Thus, if we condition on a particular $S_{\accept}$ 
of accepted indices, we could view the accepted vectors 
simply as independent samples from the distributions
$\{\mathcal A_i = \normal( \tilde \mu_i, \tilde {\vec \Sigma}_i )\}$. 
This property makes it manageable to analyze the statistical properties 
of subroutines that work on the accepted samples.

\paragraph{Rejection of Noisy Samples}
{Let $x_1,\ldots,x_k \in \R^d$ be the accepted samples (by renaming $S_{\accept}$ to $[k]$).}
On the one hand, by definition of our acceptance rule, we will rarely see samples that are at a distance more than $\sqrt{d}$ from the rejection center $\meanini$ {(where $d$ is the dimension of the subspace we are currently working in)}.
This ensures that the averaged covariance matrix of the accepted samples, 
$\frac{1}{k} \sum_{i=1}^k  \tilde {\vec \Sigma}_i$,
must have its trace appropriately bounded.
In particular, we show in \Cref{lem:expected-norm-tail} that with high probability over the randomness of the accepted indices $S_{\accept}$, the distributions $\mathcal A_i$ for $i \in S_{\accept}$ satisfy that
$
\E_{ z \sim \mathcal A_i }\lp[ \| z_i - \meanini \|_2^2 \rp] \lesssim d \log (nd).
$
Via a linear algebraic argument, 
this immediately implies that $\tr \lp( \frac{1}{k} \sum_{i=1}^k  \tilde {\vec \Sigma}_i \rp) \lesssim d \log (nd)$.
By a simple averaging argument, we therefore know that there must exist a subspace of dimension at least $d/2$ such that the eigenvalues of $\frac{1}{k} \sum_{i=1}^k  \tilde {\vec \Sigma}_i$ within the subspace is at most $ O \lp( \log(nd) \rp)$.
This therefore guarantees the existence of a low-variance subspace of the average population covariance matrix. Finally, note that $\tilde {\vec \Sigma}_i$ are unknown population covariances; thus, even though 
we showed that a low-variance subspace exists, 
we yet have to provide a way to compute 
such a subspace from samples. This will be done in \Cref{sec:search}.

\paragraph{Survival of Samples with Bounded Covariance}
On the other hand, by standard Gaussian norm concentration, a sample $x_i \in \R^d$ with covariance bounded by $\vec I$ is rarely at a distance more than $\sqrt{d}$ from the mean $\mu$.
Provided that $\| \meanini - \mu \| \lesssim \sqrt{d}$, we thus have that such samples will pass the rejection sampling procedure with at least constant probability. 
A slight issue is that as we go deeper into the recursion tree of the algorithm, each call of the recursive routine projects everything onto a subspace with half the dimension $d$ of the parent call. Thus, the requirement $\| \meanini - \mu \| \lesssim \sqrt{d}$, although true in the beginning when $d = D$, may no longer hold in later steps of the recursion.
Fortunately, at each call of the routine, we can take more fresh samples, project them onto the current subspace $\cV$, and invoke the tournament procedure from \Cref{lem:prune_informal} to produce some new estimate $\tilde \mu \in \R^d$ that is guaranteed to be within distance $\sqrt{d}$ from the projected mean (cf. \Cref{line:tournament} in \Cref{alg:recursive}).

\paragraph{Bias of Rejection Sampling}
{A more significant issue concerns the bias in the mean of the accepted samples. Let us examine the averaged mean over the accepted samples.}
A straightforward computation (cf. \Cref{lem:error_on_low_var}) shows that this expectation is:
\begin{align}
\label{eq:intro-bias-form}
\frac{1}{k}
\sum_{i=1}^k \tilde \mu_i
- \mu
=  \frac{2}{d} \Sigmavg \lp( 
\meanini - \mu \rp) \, ,    
\end{align}
where we define 
$ \Sigmavg := \frac{1}{k} \sum_{i=1}^k \vec {\tilde \Sigma}_i $ for convenience.
{The question then is what is the best upper bound that we can use for the operator norm of $\Sigmavg$.
A basic fact is that, conditioned on the accepted indices, each accepted sample $x_i$ follows the Gaussian $\cA_i := \cN(\tilde \mu_i, \tilde {\vec \Sigma}_i)$ with $\tilde \mu_i = \vec {\tilde \Sigma}_i \lp( 2\meanini / d + \vec \Sigma_i^{-1} \mu \rp) $
and $\vec {\tilde \Sigma}_i = \lp( \vec \Sigma_i^{-1} + (2/d)\vec I  \rp)^{-1}$ (cf. \Cref{eq:gaussian_formula}). This immediately implies that $\|\vec {\tilde \Sigma}_i\|_2 \leq d/2$.
However, if the operator norm of the averaged covariance $\Sigmavg$ 
could be as large as $d/2$, then the bias of the accepted samples would not be 
any better than the error of $\tilde \mu$ that we started with!}
Fortunately, as discussed earlier, the fact that $\Sigmavg$ has bounded trace implies that there must exist a subspace $\cV$ of dimension at least $d/2$ such that the eigenvalues of $\Sigmavg$ constrained to $\cV$ are at most $\log(nd)$.
Hence, the empirical mean over the accepted samples would constitute a good estimator within this low-variance subspace $\cV$---if we could successfully \emph{identify} it.
After that, we can recurse on the orthogonal complement of $\cV$, and that would complete the analysis of~\recursivefunction.
The remaining task is therefore to design a procedure that can identify this subspace.

\subsection{Searching for a Low-Variance Subspace}\label{sec:search}
We devise a procedure~\SearchSubSpace~(cf. \Cref{alg:find-subspace}) for identifying a low-variance subspace with respect to the accepted samples.
\begin{algorithm}    
\caption{Function to find a subspace in which the accepted samples have low variance}
\label{alg:find-subspace}
\begin{algorithmic}[1]
\Function{\SearchSubSpace}{$\tilde {\mu}, x_1, \cdots , x_k$}
\parState{\textbf{Input}: 
Rough mean estimate $\tilde \mu \in \R^d$, and 
samples $x_1, \cdots, x_k \in \R^d$. 
}
    \parState{\textbf{Output}: Matrices $\RowP_{\mathrm{low}},\RowP_{\mathrm{high}}  \in \R^{d/2 \times d}$ whose rows form an orthonormal basis of $\R^d$.\looseness=-1
    }
    \vspace{-5pt}
        \State Let $\Memp = \tfrac{1}{k} \sum_{i \in [k]} (x_i - \tilde \mu) (x_i - \tilde \mu)^\top$.
        \parState{
        Let $v_1, v_2, \cdots, v_d$ be the $d$ eigenvectors of $\Memp$ in descending order of their eigenvalues.}
        \parState{Let $\RowP_{\mathrm{high}} = [v_1,\ldots,v_{\floor{d/2}}]^\top$ and $\RowP_{\mathrm{low}} = [v_{ 
        \floor{d/2}+1},\ldots,v_d]^\top$.}
        \State \textbf{return} $\RowP_{\mathrm{low}},\RowP_{\mathrm{high}}, \mul$ 
\EndFunction
\end{algorithmic}
\end{algorithm}
\begin{lemma}[Low-Variance Subspace Identification (Informal; see \Cref{lem:low-var-identification})]
\label{lem:low-var-identification-informal}
Let $x_i \sim \normal( \tilde \mu_i, \tilde {\vec \Sigma}_i ) \in \R^d $ for $i \in [k]$ be the set of accepted samples.
Assume that $k \gg   d \polylog(nd)$.
The procedure~\SearchSubSpace~takes the samples as input, 
and produces a subspace $\cV$
such that the following holds with high probability:
$
v^\top \frac{1}{k} \sum_{i \in [k]} 
\tilde {\vec \Sigma}_i v
\lesssim  \log(nd)
$
for all unit vectors $v \in \cV$. 
\end{lemma}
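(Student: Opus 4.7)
The plan is to compare the empirical matrix $\Memp$ computed by the algorithm to its conditional expectation $\Mpop$, which admits the convenient PSD decomposition $\Mpop = \Sigmavg + \tfrac{1}{k}\sum_i (\tilde\mu_i-\tilde\mu)(\tilde\mu_i-\tilde\mu)^\top$, and then to transfer spectral information from $\Memp$ back to $\Sigmavg$ via Weyl's inequality. The low-variance subspace will simply be the span of the bottom $\lceil d/2\rceil$ eigenvectors of $\Memp$, as is already prescribed by the pseudocode.

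More concretely, conditionally on $S_{\accept}$, \Cref{def:data-gen-accepted} says that the $x_i$'s are independent Gaussians $\normal(\tilde\mu_i,\tilde{\vec\Sigma}_i)$, from which the above decomposition of $\Mpop = \E[\Memp \mid S_{\accept}]$ follows by a direct second-moment calculation. Combining $\E_{z\sim\mathcal A_i}\|z-\tilde\mu\|_2^2 = \tr(\tilde{\vec\Sigma}_i) + \|\tilde\mu_i-\tilde\mu\|_2^2$ with the expected-norm bound of \Cref{lem:expected-norm-tail} gives $\tr(\Mpop) \lesssim d\log(nd)$, so a standard averaging argument over the spectrum forces the $(\lfloor d/2\rfloor+1)$-th largest eigenvalue of $\Mpop$ to be $O(\log(nd))$.

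For the concentration step I would apply a matrix Bernstein inequality to the $k$ independent mean-zero summands $(x_i-\tilde\mu)(x_i-\tilde\mu)^\top - \E[(x_i-\tilde\mu)(x_i-\tilde\mu)^\top]$. Gaussian norm tails together with $\tr(\tilde{\vec\Sigma}_i) \lesssim d\log(nd)$ bound the operator norm of each summand by $O(d\log(nd))$ up to failure probability $1/\poly(n)$, while the matrix variance proxy is controlled using $\|\tilde{\vec\Sigma}_i\|_{\op} \leq d/2$, which is an immediate consequence of the explicit formula in~\eqref{eq:gaussian_formula}. Under the hypothesis $k \gg d\polylog(nd)$ this yields $\|\Memp - \Mpop\|_{\op} \lesssim \log(nd)$ with the desired failure probability.

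Combining the two ingredients via Weyl's inequality, the $(\lfloor d/2\rfloor+1)$-th largest eigenvalue of $\Memp$ is also $O(\log(nd))$; hence for every unit vector $v$ in the span $\cV$ of the rows of $\RowP_{\mathrm{low}}$,
\[
v^\top \Sigmavg v \;\leq\; v^\top \Mpop v \;\leq\; v^\top \Memp v + \|\Memp - \Mpop\|_{\op} \;\lesssim\; \log(nd),
\]
where the first inequality uses PSDness of the mean-shift contribution in $\Mpop$. The main obstacle I anticipate is the matrix Bernstein step, since each individual rank-one summand can have operator norm of order $d\log(nd)$, which is a factor of $d$ larger than the target bound; carefully bounding the variance proxy $\|\sum_i \E[Z_i^2]\|_{\op}$ through the pair $(\|\tilde{\vec\Sigma}_i\|_{\op},\tr(\tilde{\vec\Sigma}_i))$ is what allows the hypothesis $k \gtrsim d\polylog(nd)$ to actually drive the deviation below $\log(nd)$.
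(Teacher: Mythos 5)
Your overall architecture (decompose $\Mpop = \Sigmavg + \tfrac{1}{k}\sum_i(\tilde\mu_i-\tilde\mu)(\tilde\mu_i-\tilde\mu)^\top$, trace/averaging to find the low eigenspace, concentration to pass from population to empirical) matches the paper's, but the concentration step as you propose it does not go through. You need the \emph{additive} bound $\|\Memp-\Mpop\|_{\op}\lesssim\log(nd)$, and you correctly identify that the summands have operator norm up to $L = O(d\log(nd))$; the problem is that the variance proxy cannot be made small enough either. Since $\|\tilde{\vec\Sigma}_i\|_{\op}$ can genuinely be as large as $d/2$ and $\|\tilde\mu_i-\tilde\mu\|_2^2$ as large as $O(d\log(nd))$, the best available per-summand bound is $\|\E[Z_i^2]\|_{\op}\lesssim \E[\|x_i-\tilde\mu\|_2^2]\cdot\|\E[(x_i-\tilde\mu)(x_i-\tilde\mu)^\top]\|_{\op} = O(d\log(nd))\cdot O(d)$, so $\nu\lesssim k\,d^2\log(nd)$ and matrix Bernstein yields a deviation of order $\sqrt{\nu\log(d/\tau)}/k \approx d\sqrt{\polylog(nd)/k}$. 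With $k = d\,\polylog(nd)$ this is $\sqrt{d}\,\polylog(nd)$, a factor $\sqrt{d}$ too large; driving it below $\log(nd)$ would require $k\gtrsim d^2$, contradicting the hypothesis. This is not an artifact of loose bookkeeping: estimating a covariance with operator norm $\Theta(d)$ to additive accuracy $\polylog$ in operator norm inherently needs $\mathrm{poly}(d)\gg d$ samples.

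The paper sidesteps this by never proving additive operator-norm concentration. It establishes only a constant-factor \emph{multiplicative} domination, $v^\top\Mpop v\le 1.2\,v^\top\Memp v$ for all $v$ (\Cref{lem:cov-concentration}), by whitening: using the niceness condition $\E[(x_i-\tilde\mu)(x_i-\tilde\mu)^\top]\succeq\vec I/3$ one has $\Mpop\succeq\vec I/3$, and the whitened, truncated summands $\Mpop^{-1/2}(x_i-\tilde\mu)(x_i-\tilde\mu)^\top\Mpop^{-1/2}\cdot\1\{\|x_i-\tilde\mu\|_2^2\le R\}$ have operator norm $O(R)$ \emph{and} variance proxy $O(kR)$ (because $\frac1k\sum_i\E[\vec X_i]=\vec I$ after whitening), so matrix Bernstein with the constant target $t=1-1/1.1$ succeeds once $k\gg R\log(d/\tau)=\tilde O(d)$. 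The multiplicative bound is exactly a direction-dependent deviation proportional to the empirical variance in that direction, which is all that is needed on the bottom eigenspace of $\Memp$ (where the averaging argument, applied there directly to the empirical trace, gives $v^\top\Memp v\lesssim R/d$). If you replace your Weyl/additive step with this whitened, relative-error version, the rest of your argument is sound.
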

\noindent To search for the low-variance subspace, we take the following natural approach. 
Given accepted samples $x_1, \cdots, x_k \in \R^d$ that pass through the rejection sampling centered at $\meanini$, we compute the second moment matrix $ 
\Memp := \frac{1}{k} \sum_{i=1}^k \lp( x_i  - \meanini \rp) \lp( x_i  - \meanini \rp)^\top $, and take the subspace spanned by the bottom $d/2$ eigenvectors.
The challenging part is to show that the averaged covariance matrix $\Sigmavg$ will indeed have small eigenvalues within this subspace with high probability.
The proof strategy comprises of {showing that the following two claims hold with high probability}:
\begin{enumerate}[leftmargin=*]
    \item There is a subspace $\cV$ in which the empirical second moment $\Memp$ of accepted samples has small operator norm.\looseness=-1\label{it:low-var-exists}
    \item The averaged covariance matrix $\Sigmavg$ {can be bounded from above (in Lowner order) by the} sample second moment matrix $\Memp$ (up to a constant factor). \label{it:cov-conc}
\end{enumerate}
It is not hard to see that combining the two {statements above} 
yields that, with high probability, $\Sigmavg$ will have bounded eigenvalues within the subspace $\cV$.
The formal statement of \Cref{it:low-var-exists}
is given in \Cref{cor:subspace-existence}. Its proof follows mostly from the properties of rejection sampling and the details are deferred to \Cref{sec:search-subspace}.
The formal statement of \Cref{it:cov-conc} is given in \Cref{lem:cov-concentration}.
The proof idea is to use the ``truncation'' technique. In particular, define $\vec X_i = \lp( x_i - \meanini \rp) \lp( x_i - \meanini \rp)^{\top} $. We will decompose $\vec X_i$ into
\begin{align*}
    \vec Y_i = \mathbbm 1 \{ \|  x_i - \meanini \|_2 \leq \tilde \Theta( \sqrt{d} ) \} \vec X_i \quad \text{and} \quad \vec Z_i = \mathbbm 1 \{ \|  x_i - \meanini \|_2 > \tilde \Theta( \sqrt{d} ) \} \;.
\end{align*}
By the property of the rejection sampling procedure, 
the $\vec Z_i$ are almost always $\vec 0$. 
On the other hand, the $\vec Y_i$ now have bounded operator norm almost surely. We could then apply the Matrix Bernstein Inequality to argue about the spectral concentration of $\sum_{i \in [k]} \vec Y_i$.
The details are provided in \Cref{sec:search-subspace}.

\subsection{Improvement within the Low-Variance Subspace}
Building on top of~\SearchSubSpace, we give a procedure $\textsc{PartialEstimate}$
{(cf. \Cref{alg:recursive})}
that simultaneously identifies a low variance subspace $\cV \subset \R^d$ and an estimate $\mul$ such that they satisfy \Cref{eq:goal} with high probability.
See \Cref{lem:low_var_error_impr} for the formal statement.
Notably, the routine uses the same batch of accepted samples to search for the low-variance subspace, and to compute the empirical mean estimator $\mul$.
To avoid adaptive conditioning (i.e., conditioning 
on concentration of the sample mean within the subspace $\cV$ 
defined by the samples), we instead show that 
with high probability the deviation of the sample mean from the population mean along \emph{every} direction is bounded from above by some quantity proportional to the population variance along that direction.
{That is, conditioned on that $\{1,\ldots,k\}$ are the indices of the accepted samples, then with probability at least $1-\tau$, the following holds for all $v \in \R^d$
\begin{align}\label{eq:mean-conc_everywhere}
v^\top \lp( \frac{1}{k} \sum_{i = 1}^k x_i - \frac{1}{k} \sum_{i =1}^k \tilde{\mu}_i \rp)
\lesssim \sqrt{\frac{1 }{d}}  \sqrt{ \frac{1}{k} \sum_{i =1}^k  v^\top \vec {\tilde\Sigma}_i v }.
\end{align}
This is because, conditioned on the acceptance set being $\{1,\ldots,k\}$, 
the accepted samples are just independent Gaussians conditioned on a 
specific set $S$ of accepted indices; 
thus, standard Gaussian concentration can be applied.} 
The detailed proof is given %
in \Cref{lem:overall-mean-concentration}. 

\begin{algorithm}
\caption{Function to estimate the mean within the Low-Variance Subspace}    
\label{alg:partial-estimate}
\begin{algorithmic}[1]
\Function{PartialEstimate}{$\tilde \mu, x_1, \cdots, x_n$}
\parState{\textbf{Input}:
Rough mean estimate $\tilde \mu \in \R^d$, and samples $x_1, \cdots, x_n \in \R^d$.}
\parState{\textbf{Output}: Matrix $\RowP_{\mathrm{high}}  \in \R^{d/2 \times d}$ and 
    an estimate $\mul \in \R^{d}$.
}\vspace{8pt}
         
        \State For each $i \in [n]$ draw $b_i \sim \mathrm{Ber}(e^{- \|x_i-\tilde \mu\|_2^2/d})$. \label{line:rejection_sampling} \Comment{Rejection sampling}
     
        \State $\RowP_{\mathrm{low}},\RowP_{\mathrm{high}} \gets \Call{\SearchSubSpace}{\tilde \mu, \{ x_i : b_i = 1 \}}$.\label{line:search_subspace}
        \Comment{Split space (cf. \Cref{sec:search})}
        \State $\mul \gets \tfrac{1}{\sum_{i \in [n]} b_i} \sum_{i \in [n]} b_i \RowP_{\mathrm{low}}^\top \RowP_{\mathrm{low}} x_i$  \label{line:mu1} \Comment{Empirical mean of surviving samples after projection}
        \State \textbf{return} $\RowP_{\mathrm{high}}, \mul$ 
\EndFunction
\end{algorithmic}
\end{algorithm}
\paragraph{Putting everything together to show \eqref{eq:goal}}
We are now ready to put everything together to prove \eqref{eq:goal}.
Recall that the procedure~\recursivefunction~works in a $d$-dimensional subspace of $\R^D$ defined by some orthonormal matrix $\vec P \in \R^{d \times D}$ given as its input.
After projecting the sample points into the subspace with $\vec P$, it runs the tournament procedure from \Cref{lem:prune_informal} to obtain a rough estimate $\tilde \mu \in \R^d$ within the subspace. 
It then performs rejection sampling as outlined in \Cref{sec:rejection-sampling} centered at $\tilde \mu$.
Let $x_1, \cdots, x_k$ be the set of accepted samples.
By \Cref{eq:intro-bias-form}, the bias of the samples will be of the form
$ \lp( 2 / d \rp) \Sigmavg \lp( \meanini - \mean \rp) $.
With high probability, the routine~\SearchSubSpace~yields a subspace $\cV$ such that $v^T \Sigmavg v = O(1) $ (up to polylogarithmic factors) for all $v \in \cV$.
Via a linear algebraic argument (cf. \Cref{lem:mean-projection-bound}), one can show that 
$  \left\| \vec \Pi_{\cV} \left(\tfrac{1}{k} \sum_{i=1}^k \E[ x_i ] - \mean\right)\right \|_2  \ll \| \meanini - \mean \|_2$.
On the other hand, as shown in \Cref{eq:mean-conc_everywhere}, the deviation of the empirical mean 
$\frac{1}{k} \sum_{i=1}^k  x_i$ from its expected value will be at most $\sqrt{d / k }$ (up to polylogarithmic factors).
Combining the above two observations with the triangle inequality then concludes the proof of \Cref{eq:goal}.

The detailed pseudocode of~\recursivefunction~is given in \Cref{alg:recursive}.
For the formal statement and proof, we refer the reader to \Cref{lem:low_var_error_impr}.

\begin{algorithm}[H]
\caption{Recursive Function for Mean Estimation}
\label{alg:recursive}
\begin{algorithmic}[1]
\Function{RecursiveEstimate}{$\RowP, n, \hat{\mu},\tau $}
\parState{\textbf{Input}: 
    Row orthonormal matrix $\RowP \in \R^{d \times D}$,
    sample budget $n \in \N$,
    failure probability $\tau \in (0, 1)$,
    mean estimate $\hat{\mu} \in \R^d$ from last iteration,
    and noise-to-signal ratio $\alpha \in (0,1)$.
    }
    \State{\textbf{Output}: 
    Better mean estimate $\hat{\mu}' \in \R^d$.}
    \vspace{5pt}
    \State Let $C, \delta$ be sufficiently large absolute constants.
    \parState{Draw an independent 
    batch of $n$ samples 
    $y_1, \cdots, y_n \in \R^D$ from the model of \Cref{def:model} (see \Cref{fnote:samples}). \label{line:dataset}}
    \State Form a projected set of samples: $S \gets \{ x_i = \RowP y_i  \; \forall i \in [n] \} \subset \R^d$. 
    \label{line:proj}
    \State $\tilde \mu \gets \Call{TournamentImprove}{\hat \mu,n,\tau}$ of \Cref{lem:guess-center}. \label{line:tournament} \Comment{Ensures $\|\tilde{\mu} {-} \vec P \mu \|_2 {\lesssim }\sqrt{d} {+} \f(\alpha,n)$.}
    \If{$d \leq C \varb (\L)^2 $} \label{line:base}
        \State Let $\hat{\mu}' \in \R^d$ be the output of the estimator from \Cref{cor:naive_multivariate} computed on the dataset $S$.\label{line:naive_est}
        \State \textbf{return} $\hat{\mu}'$ \Comment{End the recursion}
    \ElsIf{ $\sqrt{d} \leq \f(\alpha, n)$ }\Comment{Recall that $\f(\cdot)$ is defined in \Cref{eq:function_f}}
    \State \textbf{return} $\hat{\mu}' = \tilde \mu$. \label{line:base2}
    \Comment{We already have 
    $\| \meanini - \vec P \mean \|_2 \lesssim \; \f(\alpha, n)$ in this case
    }
    \Else  \label{line:else} 
        \parState{$\RowP_{\mathrm{high}},\mul  \gets $ \Call{PartialEstimate}{$\tilde \mu, x_1, \cdots, x_n$}\label{line:partialestimate} \Comment{(cf. \Cref{alg:partial-estimate}) Returns an orthonormal matrix
        $\RowP_{\mathrm{high}} \in \R^{d/2 \times d}$
        corresponding to the high-variance subspace and some mean estimate $\mul \in \R^d$ in the low-variance subspace.}   }
        \parState{$\muh \gets $\Call{RecursiveEstimate}{$\RowP_{\mathrm{high}}  \RowP , n, 
        \RowP_{\mathrm{high}} \hat \mu, \tau$} \label{line:recursive-call}\Comment{Recurse on high-variance subspace} } \label{line:mu2}
        \State $\hat{\mu}' \gets  \mul +  \RowP_{\mathrm{high}}^{\top} \muh$ \label{line:combine} \Comment{Combination of the two estimates in $\R^d$}
        \State \textbf{return} $\hat{\mu}'$
    \EndIf
\EndFunction

\end{algorithmic}
\end{algorithm}

\paragraph{Organization}
{The structure of the body of this paper is as follows:
\Cref{sec:preliminaries} records our notation conventions, basic facts, and the result of \cite{compton2024near} that will be used in later sections.
\Cref{sec:tournament} is dedicated to the warm-start estimate that is based on the tournament procedure.
\Cref{sec:small_error_in_low_var}, which is the bulk of the technical work, gives
the analysis for the low-variance subspace that the algorithm identifies.
Finally, \Cref{sec:proof_of_thm} puts everything together to establish 
\Cref{thm:main}.}

\section{Preliminaries}\label{sec:preliminaries}

\subsection{Notation and Useful Facts}

\paragraph{Notation}
We use $\mathbb{Z}_+$ for the set of positive integers and $[n]$ to denote $\{1,\ldots,n\}$. For a vector $x$ we denote by $\|x\|_2$ its  Euclidean norm. Let $\bI_d$  denote the $d\times d$ identity matrix (omitting the subscript when it is clear from the context). 
We use $\cS^{d-1}$ to denote the set of points $v \in \R^d$ with $\|v\|_2=1$.
We use  $\top$ for the transpose of matrices and vectors.
For a subspace $\mathcal{V}$ of $\R^d$ of dimension $m$, we denote by $\vec \Pi_{\cV} \in \R^{d \times d}$ the orthogonal projection matrix of $\cV$. 
 That is, if the subspace $\cV$ is spanned by the rows of the matrix $\bA$, then $ \vec \Pi_{\cV}:=\bA^\top(\bA \bA^\top)^{-1} \bA$ (if the matrix $\bA$ is row orthonormal then we have the simpler expression $\vec \Pi_{\cV}:=\bA^\top \bA$). 
We say that a symmetric $d\times d$ matrix $\bA$ is PSD (positive semidefinite) and write $\bA \succeq 0$ if for all $x\in \mathbb{R}^d$ it holds $x^\top \bA x\ge 0$. We use $\|\bA\|_{2}$ for the operator (or spectral) norm of the matrix $\bA$. We use $\tr(\bA)$ to denote the trace of the matrix $\bA$.

We write $x\sim {\cD}$ for a random variable $x$ following the distribution $\cD$ and use $\E[x]$ for its expectation. We use $\cN(\mu,\vec \Sigma)$ to denote the Gaussian distribution with mean $\mu$ and covariance matrix $\vec \Sigma$. We write $\pr[\cE]$ for the probability of an event $\cE$. 
We use $\Bernoulli(p)$ to denote 
the Bernoulli distribution with mean $p$.
We use $\cU(S)$ to denote the uniform distribution over a set $S$.

We use $a\lesssim b$ to denote that there exists an absolute universal constant $C>0$ (independent of the variables or parameters on which $a$ and $b$ depend) such that $a\le Cb$.
Sometimes, we will also use the $O(\cdot),\tilde O(\cdot),
\Omega(\cdot),\tilde \Omega(\cdot),
\Theta(\cdot),\tilde \Theta(\cdot)$ notation with the standard meaning (in particular the $\tilde O,\tilde \Omega,\tilde \Theta$ notation hides polylogarithmic factors).
We use the notation $a \gg b$ to denote that $a > C b$ for a sufficiently large absolute constant $C$. We will use $\polylog(\cdot)$ to denote a quantity that is polylogarithmic in the arguments, with the degree being a sufficiently large absolute constant. {We use $\log_2$ for the logarithm with base $2$ and just $\log$ for the natural logarithm.}

We begin this section with an elementary analysis fact that will be useful in calculations of later sections.
\begin{fact}[See, e.g., Lemma A.3 in \cite{shalev2014understanding}]\label{fact:implicitinequality}
        Let $r \geq 1$ and $b>0$. Then $n \geq 4 r \log(2r) + 2 b $ implies that $ n \geq r \log n + b$.
    \end{fact}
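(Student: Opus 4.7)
The plan is to split the hypothesis $n \geq 4r\log(2r) + 2b$ into two halves: the $2b$ part handles the additive $b$ in the conclusion, and the $4r\log(2r)$ part handles the $r\log n$ term. Since $r \geq 1$ gives $4r\log(2r) > 0$, the hypothesis yields $n \geq 2b$, hence $n - b \geq n/2$. So it suffices to prove $n/2 \geq r \log n$, i.e., $n \geq 2r\log n$.

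For this reduced statement, I would introduce the auxiliary function $g(n) := n - 2r\log n$, whose derivative $1 - 2r/n$ is nonnegative on $[2r,\infty)$, so $g$ is nondecreasing there. The hypothesis gives $n \geq 4r\log(2r) \geq 4r\log 2 > 2r$, so we are in the monotone regime, and it is enough to verify $g(4r\log(2r)) \geq 0$. A direct calculation yields
\[
g\bigl(4r\log(2r)\bigr) \;=\; 4r\log(2r) - 2r\log\bigl(4r\log(2r)\bigr) \;=\; 2r\bigl[\log r - \log\log(2r)\bigr],
\]
which is nonnegative iff $r \geq \log(2r)$.

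The only obstacle is this last elementary inequality $r \geq \log(2r)$ for $r \geq 1$. I would verify it by checking the base case $r=1$ (where $1 \geq \log 2 \approx 0.693$) and noting that the derivative of $r - \log(2r)$ equals $1 - 1/r \geq 0$ on $[1,\infty)$, so the difference is nondecreasing and stays nonnegative. Assembling the pieces, $n - b \geq n/2 \geq r\log n$, hence $n \geq r\log n + b$, as required. No estimate in the argument loses more than a factor of $2$, which is why the constant $4$ (and not something smaller) appears in the hypothesis.
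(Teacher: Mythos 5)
Your proof is correct. The paper does not actually prove this fact — it is stated with a pointer to Lemma A.3 of the cited reference — and your argument (split the hypothesis as $n \geq 2b$ plus $n \geq 2r\log n$, then establish the latter by monotonicity of $n \mapsto n - 2r\log n$ on $[2r,\infty)$ and the elementary bound $r \geq \log(2r)$) is precisely the standard derivation given there; all the intermediate computations, including the identity $g(4r\log(2r)) = 2r[\log r - \log\log(2r)]$, check out.
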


We also state two probability facts that will be useful.
\begin{fact}[Matrix Bernstein Inequality, see, e.g., Theorem 5.4.1 in \cite{Vershynin18}]\label{fact:matrixBernstein}
Consider a finite sequence $\{ \vec X_i \}_{i=1}^k$  of independent, random,  $d \times d$ Hermitian matrices.
Assume that $ \E [ \vec X_i ] = \vec 0$
and $\| \vec X_i \|_2 \leq L$ for all $i \in [k]$.
Define $\nu = \lp \| \sum_{i=1}^k \E[  \vec X_i^2 ] \rp \|_{2}$.
Then it holds that
\begin{align*}
    \Pr \lp[ \lp \|  \frac{1}{k} \sum_{i=1}^k \vec X_i \rp \|_2 
    \geq t  \rp]
    \leq d \exp \lp( 
    \frac{ -k^2 t^2 / 2 }{ \nu + L kt / 3 }
    \rp).
\end{align*}
\end{fact}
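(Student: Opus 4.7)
The plan is to follow the standard Laplace transform argument for matrix concentration, which reduces the tail bound to controlling a matrix moment generating function. First I would set $\vec S = \sum_{i=1}^k \vec X_i$ and use the matrix Markov inequality: for any $s>0$,
\[
\Pr[\lambda_{\max}(\vec S) \geq kt] \;\leq\; e^{-skt}\, \E[\tr \exp(s\vec S)],
\]
with the corresponding bound on $\lambda_{\min}$ obtained by applying the same argument to $-\vec X_i$; the two together yield the bound on $\|\cdot\|_2$. The obstacle here is that the $\vec X_i$ do not commute, so $\exp(s\vec S)$ does not factor into $\prod_i \exp(s\vec X_i)$, and one cannot naively reduce to the scalar Chernoff calculation.

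To get around this I would invoke Lieb's concavity theorem (equivalently, Tropp's master tail bound), which gives
\[
\E[\tr \exp(s\vec S)] \;\leq\; \tr \exp\!\Bigl(\textstyle\sum_{i=1}^k \log \E[\exp(s\vec X_i)]\Bigr).
\]
This step is the crux of the whole argument and replaces an expectation of a random matrix exponential by a deterministic function of the individual matrix log-MGFs. Next, using $\E[\vec X_i]=\vec 0$, $\|\vec X_i\|_2 \leq L$, and a Taylor expansion together with the elementary inequality $|x|^q \leq L^{q-2}x^2$ for $q\ge 2$ and $|x|\le L$, I would establish the matrix Bernstein moment bound
\[
\log \E[\exp(s\vec X_i)] \;\preceq\; \frac{s^2/2}{1 - sL/3}\, \E[\vec X_i^2], \qquad 0 < s < 3/L,
\]
in the Loewner order. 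Summing, exponentiating, and using $\tr \exp(\vec A) \leq d \cdot \exp(\lambda_{\max}(\vec A))$ produces the prefactor $d$ and gives an upper bound depending only on $\nu = \|\sum_i \E[\vec X_i^2]\|_2$ and $L$.

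Finally, I would optimize over $s\in(0,3/L)$. Plugging the bound back into the Laplace inequality leaves an expression of the form $d\exp\!\bigl(-skt + k s^2 \nu / (2(1-sL/3))\bigr)$, and the standard Bernstein choice $s = kt/(\nu + Lkt/3)$ yields exactly the claimed tail $d\exp(-k^2 t^2/2 / (\nu + Lkt/3))$. I expect the Lieb/Tropp decoupling step to be the main conceptual obstacle; the remaining computations are essentially the scalar Bernstein argument carried out with operator-monotone bookkeeping.
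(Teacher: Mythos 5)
The paper does not prove this fact at all---it is quoted directly from Vershynin's book---and your proposal is exactly the standard Tropp/Lieb proof of that theorem (Laplace transform, Lieb's concavity to decouple the non-commuting exponentials, the Bernstein matrix-MGF bound, and the usual choice of $s$), so it is correct and takes the same route as the cited source. One small nit: applying the one-sided argument to $\pm\vec X_i$ to control the operator norm yields a prefactor of $2d$ (as in Vershynin's Theorem 5.4.1), so the paper's stated prefactor $d$ is off by a harmless factor of $2$; your derivation is the accurate one.
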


\begin{fact}[Gaussian Rejection Sampling, see, e.g., Section 8.1.8 in \cite{petersen2008matrix}]
\label{lem:product}
    Denote by $p_{\cN(\mu_1, \vec \Sigma_1)}(x)$ and $p_{\cN(\mu_2,\vec \Sigma_2)}(x)$ the pdf functions of the Gaussians $\cN(\mu_1, \vec \Sigma_1)$ and $\cN(\mu_2, \vec \Sigma_2)$ respectively.
    Then,
    \begin{align*}
        p_{\cN(\mu_1, \vec \Sigma_1)}(x) \cdot p_{\cN(\mu_2, \vec \Sigma_2)}(x) 
        \propto \cN(\tilde \mu,\vec {\tilde \Sigma}) \;,
    \end{align*}
    where $\tilde \mu$ and $\vec {\tilde \Sigma}$ are defined as
    \begin{align*}
    \vec {\tilde \Sigma} = (\vec \Sigma_1^{-1} + \vec \Sigma_2^{-1})^{-1} \quad \text{and} \quad
        \tilde \mu_i =   \vec {\tilde \Sigma}
         \left(   \vec \Sigma_1^{-1} \mu_1 + \vec \Sigma_2^{-1} \mu_2 \right) \;.
    \end{align*}
\end{fact}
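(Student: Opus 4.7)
The plan is to show this by a direct calculation: since both factors are Gaussian pdfs in $x$, their product is the exponential of a sum of quadratic forms in $x$, and any strictly positive function of the form $\exp(-\tfrac12 x^\top \bA x + b^\top x + c)$ with $\bA$ PSD is proportional to a Gaussian density. The task is therefore just to identify the new inverse covariance and mean by completing the square.

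Concretely, I would start by writing
\begin{align*}
p_{\cN(\mu_1,\vec\Sigma_1)}(x)\cdot p_{\cN(\mu_2,\vec\Sigma_2)}(x)
\;\propto\; \exp\!\left(-\tfrac12 Q(x)\right),
\end{align*}
where $Q(x)=(x-\mu_1)^\top \vec\Sigma_1^{-1}(x-\mu_1)+(x-\mu_2)^\top \vec\Sigma_2^{-1}(x-\mu_2)$. Expanding each quadratic and collecting powers of $x$ gives
\begin{align*}
Q(x) \;=\; x^\top(\vec\Sigma_1^{-1}+\vec\Sigma_2^{-1})x \;-\; 2x^\top(\vec\Sigma_1^{-1}\mu_1+\vec\Sigma_2^{-1}\mu_2) \;+\; C,
\end{align*}
where $C=\mu_1^\top\vec\Sigma_1^{-1}\mu_1+\mu_2^\top\vec\Sigma_2^{-1}\mu_2$ does not depend on $x$ and so can be absorbed into the proportionality constant.

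Next, I would set $\vec{\tilde\Sigma}:=(\vec\Sigma_1^{-1}+\vec\Sigma_2^{-1})^{-1}$ (which is well-defined and PSD since $\vec\Sigma_1^{-1},\vec\Sigma_2^{-1}\succeq 0$ and their sum is invertible whenever the two covariances are) and define $\tilde\mu:=\vec{\tilde\Sigma}(\vec\Sigma_1^{-1}\mu_1+\vec\Sigma_2^{-1}\mu_2)$. Completing the square then yields
\begin{align*}
Q(x) \;=\; (x-\tilde\mu)^\top \vec{\tilde\Sigma}^{-1}(x-\tilde\mu) \;+\; C',
\end{align*}
where $C'=C-\tilde\mu^\top\vec{\tilde\Sigma}^{-1}\tilde\mu$ is again independent of $x$. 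Substituting back shows that the product is proportional (with a constant depending only on $\mu_1,\mu_2,\vec\Sigma_1,\vec\Sigma_2$) to $\exp(-\tfrac12(x-\tilde\mu)^\top\vec{\tilde\Sigma}^{-1}(x-\tilde\mu))$, which is exactly (up to normalization) the density of $\cN(\tilde\mu,\vec{\tilde\Sigma})$, as claimed.

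This argument has essentially no obstacle: the only subtle point is ensuring that $\vec\Sigma_1^{-1}+\vec\Sigma_2^{-1}$ is invertible so that $\vec{\tilde\Sigma}$ is well-defined, which holds as soon as at least one of the covariances is strictly positive definite (and in the applications in this paper one factor has covariance $(d/2)\vec I$, so this is automatic). If one wished to handle singular $\vec\Sigma_i$ one could instead work with the precision matrices directly and interpret the identity in that generality, but this is not needed here.
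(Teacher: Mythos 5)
Your proof is correct. The paper does not prove this statement at all---it is stated as a known fact with a citation to the Matrix Cookbook---and your completing-the-square derivation is precisely the standard argument behind that reference: the precision matrices add, the linear terms identify $\tilde\mu$, and all $x$-independent terms are absorbed into the proportionality constant. Your remark on invertibility is also apt; in the paper's application one factor has covariance $(d/2)\vec I$, so $\vec\Sigma_1^{-1}+\vec\Sigma_2^{-1}$ is strictly positive definite and $\vec{\tilde\Sigma}$ is a genuine covariance matrix.
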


\noindent In particular, we will use that fact later on with $\mu_1 = \meanini, \vec \Sigma_1 = (d/2) \;\vec I, \mu_2 = \mu,\vec \Sigma_2 = \vec \Sigma_i$. We then have that $p_{\cN(\meanini, \vec I)}(x) \cdot p_{\cN(\mu,\vec \Sigma_i)}(x) \propto \cN(\tilde \mu_i,\vec {\tilde \Sigma}_i)$ with 
\begin{align}\label{eq:gaussian_formula}
    \vec {\tilde \Sigma}_i = (\vec \Sigma_i^{-1} + \tfrac{2}{d}\vec I )^{-1} \quad \text{and} \quad
        \tilde \mu_i =   \vec {\tilde \Sigma}_i
         \left(   \tfrac{2}{d} \meanini + \vec \Sigma_i^{-1} \mu \right) \;.
\end{align}

\paragraph{Data Generation Model}
{In the analysis of the algorithm, it will be convenient to assume that the covariances of the data have their eigenvalues bounded from below so that we do not need to deal with degenerated data instances.
We remark that we can assume the eigenvalue lower bound because otherwise we can enforce it by a simple transformation of the samples.}
\begin{remark}[Covariance Eigenvalue Lower Bound]
\label{remark:cov-lb}
The original data generation model does not enforce that the covariance matrices of the data have full rank. However, we can reduce to the case that $\vec \Sigma_i \succeq \frac{1}{2} \vec I$ 
by introducing a preprocessing step that transforms each data point $x$ via $x'= \lp(x + y \rp)/\sqrt{2}$, where $y \in \R^d$ is some standard Gaussian vector.
{After this transformation, each point has common mean $\mu' = \mu/\sqrt{2}$, and at least $\alpha N$ many of the points have  covariances $\vec \Sigma_i'$ satisfying $\tfrac{1}{2}\vec I \preceq  \vec \Sigma_i' \preceq \vec I$.}
\end{remark}

We conclude this section with a fact about the sampling process of the data generation model. 
For simplicity, we will state our algorithm as if it has access to the following data-generation model which is almost identical to the original.  The difference is that we now assume the algorithm can draw independent batches of samples, with all batches having the same signal-to-noise ratio and a common mean {as in the original model of \Cref{def:model}.}
\begin{restatable}[Data-generation model; independent batches]{definition}{MULTIBATCHMODEL} \label{def:model2}
    The algorithm can specify positive integers $t$ and $n_1,n_2,\ldots,n_t$ and receive datasets $S_1,S_2,\ldots,S_t$ such that the $S_i$'s are independent, each dataset $S_i$ has size at least $n_i$ and each $S_i$ follows the distribution defined in \Cref{def:model}.
\end{restatable}

In reality, the algorithm can only choose some integer $N$ and request a dataset of size $N$ satisfying the definition of \Cref{def:model}.
However, it is not hard to see that 
one can simulate access to, say $t$, independent batches of size (approximately) $n$ each, by requesting an original dataset of size $N = nt$, and then splitting it by assigning each sample to a random batch.
Conditioned on the partition, the samples within different batches remain independent from each other.
Moreover, 
if $n \gg \log(t/\tau)/\alpha$, then
with probability at least $1 - \tau$ it holds that each batch has size $\Theta(n)$ and the number of samples within each batch having their covariances bounded by $\vec I$ is at least $\alpha n$. We defer the proof of this fact to \Cref{sec:additional-prelims}.

\begin{restatable}{lemma}{SIMULATION}\label{lem:simulation}
    Let $t,n \in \Z_+$ and $\alpha,\tau \in (0,1)$. 
    Assume that $n \gg \log(t/\tau)/\alpha$.
    There is an algorithm which given sample access to the data generation model of \Cref{def:model} with common mean $\mu$ and signal-to-noise rate $\alpha$, takes  {$t,n$} as input, draws $N = tn$ samples, and outputs $t$ independent sets of samples such that with probability at least $1-\tau$ the output is distributed according to the model of \Cref{def:model2} with common mean $\mu$, signal-to-noise-{rate} at least $0.9\alpha$, and the size of each batch is at least $0.9n$.
\end{restatable}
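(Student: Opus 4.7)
The plan is to use the natural simulation: request $N=tn$ samples from the model of \Cref{def:model}, then independently assign each of these $N$ samples to one of the $t$ batches uniformly at random, and output the resulting partition $(S_1,\ldots,S_t)$. The crucial observation is that since the batch assignment is drawn independently of the sample values (and indeed independently of the adversary's choice of covariances), conditional on any realization of the partition, the samples inside each batch remain mutually independent, each distributed as $\cN(\mu,\vec\Sigma_i)$ for its original covariance. Thus each output batch automatically fits \Cref{def:model} with common mean $\mu$, and different batches are independent because they consist of disjoint subsets of the original independent samples. All that remains is to verify that each batch is large enough and contains enough low-variance samples.

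To control the batch sizes, observe that the size $B_j$ of batch $j$ is distributed as $\mathrm{Bin}(tn, 1/t)$ with expectation $n$. By a standard multiplicative Chernoff bound, $\Pr[B_j \notin [(1-\eps)n,(1+\eps)n]] \le 2\exp(-\Omega(\eps^2 n))$. To control the good samples in each batch, let $M$ denote the total number of samples $i\in[N]$ with $\vec\Sigma_i\preceq\vec I$, so $M\ge\alpha N = \alpha tn$ deterministically. Conditional on $M$ and on which indices are ``good'', the number of good samples in batch $j$, denoted $G_j$, is distributed as $\mathrm{Bin}(M,1/t)$ and thus stochastically dominates $\mathrm{Bin}(\alpha tn,1/t)$, which has mean at least $\alpha n$. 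A second Chernoff bound gives $\Pr[G_j < (1-\eps)\alpha n] \le \exp(-\Omega(\eps^2 \alpha n))$. Picking $\eps$ to be a small absolute constant (e.g.\ $\eps = 1/30$) and taking a union bound over the $2t$ events, the total failure probability is at most $4t\exp(-\Omega(\alpha n))$, which is bounded by $\tau$ whenever $\alpha n \gtrsim \log(t/\tau)$, matching the hypothesis of the lemma.

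On the high-probability event, every batch satisfies $B_j \ge (1-\eps)n \ge 0.9n$ and $G_j \ge (1-\eps)\alpha n$, so the signal-to-noise rate in batch $j$ is at least $G_j/B_j \ge (1-\eps)\alpha n/((1+\eps)n) \ge 0.9\alpha$ for the chosen $\eps$. Combined with the distributional observation from the first paragraph, this is precisely the guarantee demanded by \Cref{def:model2}. The proof is essentially routine: there is no major obstacle beyond combining Chernoff tails with the independence structure, and the main thing to be careful about is that the random partition must be independent of the sample values so that conditioning on the partition does not alter the within-batch joint distribution.
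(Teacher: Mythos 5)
Your proof is correct, but it takes a genuinely different route from the paper's. The paper first applies a uniformly random permutation to the $N=tn$ samples and then cuts the permuted sequence into $t$ consecutive blocks of exactly $n$ samples each; the good-sample indicators within a block are then \emph{not} independent (they follow a permutation/hypergeometric distribution), so the paper invokes negative association of permutation distributions \cite{joag1983negative} together with the Chernoff--Hoeffding bound for negatively associated variables \cite{dubhashi1996balls} to control the number of bounded-covariance samples per block. You instead assign each sample to a batch independently and uniformly at random, which makes both the batch size $B_j\sim\mathrm{Bin}(tn,1/t)$ and the conditional good count $G_j\sim\mathrm{Bin}(M,1/t)$ genuine binomials, so plain multiplicative Chernoff bounds suffice and no negative-association machinery is needed. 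The price you pay is that batch sizes fluctuate and must themselves be controlled (the paper gets exact size $n$ for free), but since the lemma only demands size at least $0.9n$ this costs nothing; your constants ($\eps=1/30$ giving $\tfrac{1-\eps}{1+\eps}\alpha\ge 0.9\alpha$) and the union bound over $2t$ events under the hypothesis $\alpha n\gtrsim\log(t/\tau)$ all check out. Your observation that the assignment is independent of the sample values, so that conditioning on the realized partition preserves the within-batch product structure and the cross-batch independence, is exactly the point that makes either construction legitimate, and amusingly your construction is the one the paper's main text informally describes before its appendix proves the permutation variant.
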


Note that having a $0.9\alpha$-fraction of good samples (instead of the original $\alpha n$) is not a significant disadvantage, as it only affects the constants in our bounds wherever it is applied.

\subsection{{Low-Dimensional Mean Estimation in the Subset-of-Signals Model}}

We start with the estimator from \cite{compton2024near}, which works for one-dimensional datasets.

\begin{theorem}[\cite{compton2024near}]\label{thm:one-dim}
{Let $\delta \in [0,\infty)$ be an absolute constant.}
There exists an algorithm which, having as input $n$ samples in $\R$ from the model of \Cref{def:model2} with common mean $\mu \in \R$ and signal-to-noise rate $\alpha \in (0,1)$, outputs an estimate $\hat \mu \in \R$ such that {with probability at least $1-1/n^{\delta}$}, it holds $|\hat \mu - \mu| \leq \f(\alpha,n)$, where $\f(\cdot)$ is defined as follows:
    \begin{align}\label{eq:function_f}
    \f(\alpha,n) = C_\delta \cdot (\log(n/\alpha))^{O(1)} \cdot \begin{cases}
        \displaystyle \frac{1}{\alpha^2 n^{3/2}},  &\displaystyle C_\delta\frac{\log n}{n}  \leq \alpha \leq \frac{1}{n^{3/4}} \\[11pt]
        \displaystyle \frac{ 1 }{\alpha^{2/3} n^{1/2}}, &\displaystyle \frac{1}{n^{3/4}} < \alpha < 1 \\[11pt]
        \infty , &\text{otherwise} \, ,
    \end{cases}
\end{align}
{where $C_{\delta}$ is some constant that depends only on $\delta$.}
\end{theorem}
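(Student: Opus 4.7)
This theorem restates the main result of \cite{compton2024near}; an independent approach I would take is based on multi-scale confidence intervals, exploiting that all samples are Gaussians centered at the common mean $\mu$, so the empirical density peaks near $\mu$ regardless of how the adversary sets the variances.

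For each dyadic scale $r$ ranging over $\{2^{-j}\}$ in a suitable range (roughly from $1/(\alpha n)$ up to $\sqrt{\log n}$), compute the count $N_c(r) = |\{i : x_i \in [c-r, c+r]\}|$ as a function of the candidate center $c$, together with a threshold $\tau(r)$ designed to lower-bound the expected number of good samples in $[\mu - r, \mu + r]$. Since each good sample with $\sigma_i^2 \leq 1$ lies in this window with probability at least $\Omega(\min(1,r))$, a Bernstein-type bound gives $N_\mu(r) \gtrsim \alpha n \min(1,r)$ with probability $1 - n^{-\delta}$, simultaneously over all $O(\log n)$ scales by a union bound. Defining the scale-$r$ confidence set $C_r = \{c : N_c(r) \geq \tau(r)\}$, the estimator returns any $\hat \mu \in \bigcap_r C_r$.

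Correctness splits into two parts. Soundness ($\mu \in C_r$ for every $r$ simultaneously) follows from the Bernstein union bound above. For tightness, I would show that any $c$ with $|c - \mu| > \Delta$ is excluded at some scale: good samples contribute on average only $\alpha n r \,\phi(\Delta)$ to $N_c(r)$, where $\phi$ is the standard Gaussian density, and bad samples---being themselves Gaussians centered at $\mu$---cannot systematically inflate the count at $c$ more than at $\mu$, so they do not create a false cluster far from $\mu$. Requiring $N_c(r) < \tau(r)$ then forces a relation between $\Delta$, $r$ and $\alpha n$, and optimizing over the scale $r$ yields the two regimes of $\f(\alpha, n)$: moderate scales give the $\alpha^{-2/3} n^{-1/2}$ rate when $\alpha \in (n^{-3/4}, 1)$, while the smallest admissible scales, where only a handful of very tightly-concentrated good samples carry the signal, drive the $\alpha^{-2} n^{-3/2}$ rate for $\alpha \in (\log n / n, n^{-3/4}]$.

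The main obstacle I anticipate is uniformly controlling the empirical counts $N_c(r)$ over all centers $c$ and all scales $r$---this likely requires a VC-type or bracketing argument for the class of indicators of intervals---while keeping the failure probability below $n^{-\delta}$ for arbitrary constant $\delta$ without losing extra polynomial factors in the error. Calibrating $\tau(r)$ precisely enough to capture the sharp $\alpha^{-2}$ scaling in the small-$\alpha$ regime, which ultimately comes from leveraging the very tightest good samples, is where the bulk of the technical work would lie; a median-of-means amplification over independent batches may be needed to boost the success probability and must be propagated through the scale intersection carefully.
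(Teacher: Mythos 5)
First, note that the paper offers no proof of this statement: it is Theorem~1 of \cite{compton2024near}, imported as a black box, so your proposal is being measured against the original paper's argument rather than anything in this one. Your general strategy---multi-scale counts and intersected confidence sets---is in the right lineage (the related-work discussion here explicitly says \cite{compton2024near} was inspired by the intersecting-confidence-interval technique of \cite{DevLLZ23}), but as written the tightness half of your argument fails. You exclude a far-away center $c$ by requiring $N_c(r) < \tau(r)$ at some scale, where $\tau(r)$ is calibrated to the good-sample signal $\approx \alpha n \min(1,r)$. The adversary can defeat an absolute threshold of this kind: put all $(1-\alpha)n$ bad samples at a moderate variance, say $\sigma_i = 10$. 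These are still Gaussians centered at $\mu$, so they do not create a ``false cluster,'' but they do create a broad plateau contributing $\Theta(nr)$ points to \emph{every} window of radius $r$ whose center lies within distance $O(1)$ of $\mu$. When $\alpha$ is small this swamps $\tau(r) \approx \alpha n r$ at every scale, so every $c$ with $|c-\mu| = O(1)$ survives the intersection and your estimator only guarantees $O(1)$ error---far from $f(\alpha,n)$ in the regime $\alpha > n^{-3/4}$. The missing idea is that the test must be \emph{comparative} (measuring the excess of the count at $c$ over the local background, or comparing counts at competing centers at matched scales), which is where the actual analysis does its work; the signal-versus-fluctuation tradeoff $\alpha n r \gtrsim \sqrt{m_c(r)}$ against the adversarially chosen background density is what produces the two regimes of $f$.

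A second, smaller error: you attribute the $\alpha^{-2} n^{-3/2}$ rate for $\alpha \le n^{-3/4}$ to ``leveraging the very tightest good samples.'' In the subset-of-signals model the good samples are only guaranteed to satisfy $\sigma_i \le 1$, and the worst case has all of them at $\sigma_i = 1$ exactly; there are no tightly concentrated inliers to exploit, and indeed in this regime $f(\alpha,n) \gg 1$, i.e., the achievable error is \emph{coarser} than the inlier scale, not finer. That rate instead comes from the fluctuation analysis above when the number of inliers $\alpha n$ is small. Finally, the uniformity-over-$c$ issue you flag is real but is the routine part (intervals have VC dimension $2$); the calibration of the comparative test is the hard part, and your sketch defers exactly that. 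So the proposal identifies the right family of techniques but does not constitute a proof.
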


By running the above estimator along every axis in $\R^d$ (where each run shares the same dataset), this gives the following naive extension of the estimator to the space $\R^d$.
\begin{corollary} \label{cor:naive_multivariate}
Let $\delta \in [0,\infty)$ be an absolute constant.
There exists an algorithm which, having as input $n$ samples from the model of \Cref{def:model2} in $\R^d$ with common mean $\mu \in \R^d$ and signal-to-noise ratio $\alpha \in (0, 1)$, outputs an estimate $\hat{\mu} \in \R^d$ such that with probability at least $1-d/n^\delta$ it holds $\|\hat{\mu} - \mu\|_2 \leq \f(\alpha,n) \sqrt{d}$, where $\f(\cdot)$ is the function defined in \Cref{eq:function_f}.    
\end{corollary}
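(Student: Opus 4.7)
The plan is to obtain the $d$-dimensional estimator by running the one-dimensional estimator of \Cref{thm:one-dim} independently along each of the $d$ coordinate axes on the same batch of $n$ samples, and then combining the resulting coordinate-wise estimates into a single vector $\hat\mu \in \R^d$. The error guarantee will follow from a union bound over the $d$ axes combined with the identity $\|\hat\mu - \mu\|_2^2 = \sum_{j=1}^d (\hat\mu_j - \mu_j)^2$.

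The key observation I would verify first is that projecting onto an arbitrary coordinate axis $e_j$ preserves the subset-of-signals structure in dimension one. Concretely, if $x_i \sim \cN(\mu, \vec\Sigma_i)$ in $\R^d$, then $e_j^\top x_i \sim \cN(\mu_j, e_j^\top \vec\Sigma_i e_j)$, and for each index $i$ with $\vec\Sigma_i \preceq \vec I$ we have $e_j^\top \vec\Sigma_i e_j \leq 1$. Since by assumption at least $\alpha n$ indices $i$ satisfy $\vec\Sigma_i \preceq \vec I$, the same $\alpha n$ indices yield 1-d Gaussian samples with variance bounded by $1$. Hence along each axis $j \in [d]$ the projected dataset is a valid instance of the one-dimensional subset-of-signals model (with signal-to-noise ratio $\alpha$ and common mean $\mu_j$), and \Cref{thm:one-dim} applies to give an estimate $\hat\mu_j$ with $|\hat\mu_j - \mu_j| \leq \f(\alpha,n)$ with probability at least $1 - 1/n^{\delta}$.

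I would then apply the union bound across the $d$ coordinate applications. Note that although the $d$ runs use the same dataset and are therefore not independent, this poses no difficulty: the union bound does not require independence, and the failure event for each axis is a single event of probability at most $n^{-\delta}$, so the probability that any axis fails is at most $d/n^{\delta}$. Conditioned on the complementary event, assembling $\hat\mu = (\hat\mu_1,\ldots,\hat\mu_d)$ gives
\begin{equation*}
\|\hat\mu - \mu\|_2^2 \;=\; \sum_{j=1}^d (\hat\mu_j - \mu_j)^2 \;\leq\; d \cdot \f(\alpha,n)^2,
\end{equation*}
which yields the desired $\|\hat\mu - \mu\|_2 \leq \f(\alpha,n)\sqrt{d}$.

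There is essentially no real obstacle in this argument; the proof is only one routine marginalization step followed by a union bound. The only mild subtlety worth recording is the observation that the per-axis runs need not be independent in order for the union bound to go through, so it is safe to reuse the same $n$-sample batch across all $d$ axes, which is important to keep the overall sample complexity $n$ rather than $nd$.
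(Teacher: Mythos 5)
Your proposal is correct and is exactly the argument the paper intends: the corollary is stated immediately after the remark ``by running the above estimator along every axis in $\R^d$ (where each run shares the same dataset),'' and your marginalization step, union bound over the $d$ axes, and the $\sqrt{d}$ loss from combining coordinate-wise errors in $\ell_2$ are precisely that argument spelled out. No gaps.
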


\section{Warm Start: Rough Estimate via Tournament}\label{sec:tournament}
{In this section, we first present the tournament procedure that selects (approximately) the best mean estimate among a given candidate list.
Building on top of it, we construct the warm-start routine 
\textsc{TournamentImprove} that can produce an initial rough estimate $\mu$ with error $O ( \sqrt{d} + f(\alpha, N) )$.
}

\begin{lemma}[Tournament]
\label{lem:prune}
Let $\delta \in [0,\infty)$ be an absolute constant.
Let $n \in \N$,  $L = \{ \mu_1, \cdots, \mu_k \} \subset \R^d$ be a set of candidate estimates of $\mu \in \R^d$. 
There exists an algorithm \textsc{TournamentImprove} that takes
the list $L$ as input, 
draws a dataset of size $n$
according to the data generation model of \Cref{def:model2}
with common mean $\mu \in \R^d$ and signal-to-noise ratio $\alpha$, 
runs in time $\poly(n,k, d)$, and outputs some estimate $\mu_{j} \in L$ such that
$ \| \mu_{j} - \mu \|_2 \leq 
2 \min_{i \in [k]} \| \mu_i - \mu \|_2 + 4 \f(\alpha, n)$ with probability at least $1- k^2/n^\delta$, where $\f(\alpha, n)$ is defined as in \Cref{thm:one-dim}.
\end{lemma}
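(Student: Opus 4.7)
My plan is to reduce the $k$-way tournament to $O(k^2)$ one-dimensional estimation tasks. For every pair $i, j \in [k]$ with $\mu_i \neq \mu_j$, define the unit vector $v_{ij} := (\mu_i - \mu_j)/\|\mu_i - \mu_j\|_2 \in \R^d$. Draw $n$ samples $y_1, \dots, y_n \in \R^d$ according to \Cref{def:model2}; since projecting $\cN(\mu, \vec \Sigma)$ onto a unit direction $v$ yields $\cN(\langle v, \mu\rangle, v^\top \vec \Sigma v)$, and $\vec \Sigma \preceq \vec I$ forces $v^\top \vec \Sigma v \leq 1$, the scalars $\{\langle v_{ij}, y_\ell\rangle\}_{\ell=1}^n$ form a valid one-dimensional instance of the subset-of-signals model with common mean $\langle v_{ij}, \mu\rangle$ and the same signal-to-noise ratio $\alpha$. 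Applying \Cref{thm:one-dim} to each projected dataset yields $\hat m_{ij} \in \R$ with $|\hat m_{ij} - \langle v_{ij},\mu\rangle| \leq \f(\alpha,n)$ except with probability $1/n^{\delta}$. A union bound over the at most $k^2$ pairs gives joint accuracy except with probability $k^2/n^{\delta}$; I condition on this event throughout.

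The selection rule is $j^\star \in \argmin_{j \in [k]} T(j)$, where $T(j) := \max_{i \colon \mu_i \neq \mu_j} |\langle v_{ij}, \mu_j\rangle - \hat m_{ij}|$ quantifies how poorly candidate $\mu_j$ explains its pairwise comparisons. Let $i^\star$ minimize $r_i := \|\mu_i - \mu\|_2$ and abbreviate $r := r_{i^\star}$, $f := \f(\alpha, n)$. Cauchy--Schwarz gives $|\langle v_{i^\star j}, \mu_{i^\star} - \mu\rangle| \leq r$ for every $j$, which, combined with $|\hat m_{i^\star j} - \langle v_{i^\star j},\mu\rangle| \leq f$, shows $T(i^\star) \leq r + f$; hence $T(j^\star) \leq r + f$ as well. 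Evaluating the definition of $T(j^\star)$ at $i = i^\star$ and using the accuracy of $\hat m_{i^\star j^\star}$, I obtain $|\langle v_{i^\star j^\star}, \mu_{j^\star} - \mu\rangle| \leq r + 2f$.

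The key geometric step is to decompose $\mu_{j^\star} - \mu$ relative to $v := v_{i^\star j^\star}$. Because $\mu_{j^\star} - \mu_{i^\star}$ is itself a scalar multiple of $v$, the component of $\mu_{j^\star} - \mu$ perpendicular to $v$ equals the perpendicular component of $\mu_{i^\star} - \mu$, whose norm is at most $r$. The parallel component has magnitude at most $r + 2f$ by the previous paragraph. Therefore
\begin{equation*}
\|\mu_{j^\star} - \mu\|_2 \leq \sqrt{r^2 + (r+2f)^2} \leq 2r + 2f \leq 2 \min_{i \in [k]} \|\mu_i - \mu\|_2 + 4\, \f(\alpha,n),
\end{equation*}
as claimed. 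The running time is $\poly(n, k, d)$: projecting $n$ samples onto each of $O(k^2)$ directions costs $O(k^2 n d)$, running the $1$-d estimator on each projected dataset costs $k^2 \cdot \poly(n)$, and the final scan of the $T(j)$ values is $O(k^2)$.

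The main obstacle I anticipate is avoiding the naive triangle-inequality bound $\|\mu_{j^\star} - \mu\|_2 \leq \|\mu_{j^\star} - \mu_{i^\star}\|_2 + r$, which would cost a factor of $3$ on $r$ rather than $2$. The factor-$2$ guarantee crucially exploits the alignment of $v_{i^\star j^\star}$ with $\mu_{i^\star} - \mu_{j^\star}$, so that the transverse error of $\mu_{j^\star}$ coincides exactly with that of $\mu_{i^\star}$. The remaining pieces---the union bound over pairs, handling degenerate pairs with $\mu_i = \mu_j$ (for which both candidates have identical error and can be treated together), and checking that projection preserves the $1$-d subset-of-signals structure---are routine.
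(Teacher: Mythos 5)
Your proposal is correct. It shares the paper's core reduction---project onto the $O(k^2)$ pairwise difference directions $v_{ij}$, run the $1$-d estimator of \Cref{thm:one-dim} on each projected dataset, and union bound over pairs---but the selection rule and the concluding argument differ. The paper runs an elimination tournament: $\mu_j$ is disqualified iff some $\ell$ has $|v_{j,\ell}^\top\mu_j-\tilde\mu_{j,\ell}|>|v_{j,\ell}^\top\mu_\ell-\tilde\mu_{j,\ell}|+2\f(\alpha,n)$, and an arbitrary survivor is returned; the analysis shows the best candidate survives (via the same Pythagorean observation you use) and that every survivor $\mu_\ell$ has $\|\mu_\ell-\mu_{i^\star}\|_2\leq 2r+4\f(\alpha,n)$. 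You instead return the minimizer of the min--max score $T(j)$ and then bound $\|\mu_{j^\star}-\mu\|_2$ directly by decomposing it parallel and perpendicular to $v_{i^\star j^\star}$, getting $\sqrt{r^2+(r+2f)^2}\leq 2r+2f$. Your route is arguably tighter at the last step: the paper's written argument controls the survivor's distance to $\mu_{i^\star}$ rather than to $\mu$, and converting that to a distance to $\mu$ via the triangle inequality would give $3r+4f$ rather than the stated $2r+4f$; your orthogonal decomposition delivers the factor $2$ cleanly (and in fact $2r+2f$). Both selection rules and both analyses are fine for the purposes of the paper, since only the constant is affected. Your handling of the degenerate pairs $\mu_i=\mu_j$ and of the reuse of one dataset across all directions (the union bound needs no independence) is also sound.
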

\begin{proof}
We will run a tournament among the candidate estimates.
To compare two candidate means $\mean_j, \mean_{\ell}$, we will project the samples along the direction of $v_{j, \ell}:= \lp( \mean_{\ell} - \mean_j \rp) / \| \mean_{\ell} - \mean_j \|_2 $, and then run the $1$-d mean estimation algorithm.
By \Cref{thm:one-dim}, the algorithm produces some number $\tilde \mean_{j, \ell} \in \R$
such that, 
\begin{align*}
|\tilde \mean_{j, \ell} - v_{j, \ell}^\top \mean | \leq \f(\alpha, n)
\end{align*}
with probability at least $1-1/n^\delta$.
By the union bound, the above holds simultaneously for all $j, \ell \in [k]$ with probability at least $ 1 - k^2/n^{\delta}$.
We will condition on the above error bound in each direction
$v_{j, \ell}$ for $j \neq \ell \in [k]$.
This together with the triangle inequality immediately implies that
\begin{align}
\label{eq:1d-error-bound}  
| v_{j, \ell}^\top \mu_j  - v_{j,\ell}^\top \mu | - \f(\alpha, n) \leq
    | v_{j, \ell}^\top \mu_j - \tilde \mu_{j, \ell} |
    \leq | v_{j, \ell}^\top \mu_j  - v_{j,\ell}^\top \mu | + \f(\alpha, n).
\end{align}

In the tournament, we will disqualify $\mean_j$ if and only if
there exists some $\ell \in [k]$
such that $| v_{j, \ell}^\top \mean_j - \tilde \mean_{j, \ell}|
> | v_{j, \ell}^\top \mean_{\ell} - \tilde \mean_{j, \ell}| + 2 \f(\alpha, n)$.
Among the survival candidates, we pick one arbitrarily as the final output.

Suppose $\| \mean_j - \mean \|_2 = \min_{i \in [k]}  \| \mean_i - \mean \|_2$.
We will show that $\mean_j$ will not be disqualified.
Consider a $\mean_{\ell}$ for $\ell \neq j$.
For convenience, 
we denote by $v$  the unit vector pointing from $\mean_j$ to $\mean_{\ell}$, {and $\mean_{v^\perp} \in \R^d$ be the vector that: (i) lies on the plane defined by the three points $\mu_j,\mu_\ell,\mu$ and (ii) is perpendicular to the line connecting $\mu_j$ and $\mu_\ell$.
By the Pythagorean theorem we then have that}
$$
\lp(   v^\top \mean_j - v^\top \mean  \rp)^2
+ \| \mean - \mean_{v^\perp}  \|_2^2
= 
\| \mean_j - \mean \|_2^2
\leq 
\| \mean_{\ell} - \mean \|_2^2
=
\lp(   v^\top \mean_{\ell} - v^\top \mean  \rp) ^2
+ \| \mean - \mean_{v^\perp}  \|_2^2 \, ,
$$
which further implies that
\begin{align}
\label{eq:projection-optimality}
| v^\top \mean_j - v^\top \mean| 
\leq |  v^\top \mean_{\ell} - v^\top \mean |.    
\end{align}
Combining \Cref{eq:1d-error-bound,eq:projection-optimality} then gives that
\begin{align}
  |v^\top \mean_j - \tilde \mean_{j, \ell}| 
&\leq |v^\top \mu_j - v^\top\mu| + \f(\alpha,n) \tag{by \Cref{eq:1d-error-bound}}\\
&\leq |v^\top \mu_\ell - v^\top\mu| + \f(\alpha,n) \tag{by \Cref{eq:projection-optimality}}\\
&\leq  |v^\top \mean_{\ell} - \tilde \mean_{j, \ell}| + 2\f(\alpha, n) , \tag{by   \Cref{eq:1d-error-bound}}
\end{align}
implying that  $\mean_{j}$ will \emph{not} be disqualified.
This ensures that the survival candidate set will be non-empty.

Next, we show that any $\mean_{\ell}$ with $ \| \mean_{\ell} - \mean_j\|_2 > 2 \| \mean_j - \mean\|_2 + 4  \f(\alpha, n)$ will be disqualified. To show this, let $\mu_\ell$ be such a point.
First, since projection only shrinks distance between points, we have that
\begin{align}\label{eq:temp_ineq}
    | v^\top \mu - v^\top \mean_j | \leq \|  \mu - \mean_j \|_2.
\end{align}
Second, we have the following inequalities:
\begin{align*}
    | v^\top \mu_\ell - \tilde \mu_{j, \ell} | 
    &\geq | v^\top \mu_j - v^\top \mu_\ell | - | v^\top \mu_j - \tilde \mu_{j, \ell} | \tag{by the triangle inequality}\\
    &= \| \mu_j - \mu_\ell\|_2 - | v^\top \mu_j - \tilde \mu_{j, \ell} | \tag{since $v$ is a unit vector parallel to $\mu_j-\mu_\ell$}\\
    &> 2 \| \mu_j - \mu\|_2 + 4 \f(\alpha, n) - | v^\top \mu_j - \tilde \mu_{j, \ell} | \tag{by assumption}\\
    &\geq  2 \| \mu_j - \mu\|_2 + 3 \f(\alpha, n) - | v^\top \mu_j - v^\top \mu |  \tag{by \Cref{eq:1d-error-bound}}\\
    &\geq  \| \mu_j - \mu\|_2 + 3 \f(\alpha, n)  \tag{by \Cref{eq:temp_ineq}} \\
    &\geq |  v^\top \mu_j - v^\top\mu |+ 3 \f(\alpha, n) \tag{by \Cref{eq:temp_ineq}} \\
    &\geq | v^\top \mu_j - \tilde \mu_{j,\ell} |+ 2 \f(\alpha, n)   \;.  \tag{by \Cref{eq:1d-error-bound}}
\end{align*}
This ensures that $\mu_{\ell}$ will be disqualified.
Hence, the final surviving set will be non-empty and contains only $\mu_{\ell}$ such that 
$\| \mu_{\ell} - \mu_{h} \|_2 \leq 2 \min_{i \in [k]} \|  \mu_i - \mu \|_2 + 4 \f(\alpha, n)$.
This concludes the proof of \Cref{lem:prune}.
\end{proof}

Using the above tournament procedure, we can design a routine that takes an input vector $\hat \mu$, and 
with high probability produces some $\meanini$ such that
the distance from $\meanini$ to $\mu$ is (roughly) bounded from above by the minimum of $\|\hat \mu - \mu\|_2$ and $\sqrt{d}$.

\begin{lemma}[Tournament Improvement]
\label{lem:guess-center}
Let $\delta \in (2,\infty)$ be an absolute constant.
Let $n \in \N$, $\alpha \in (0, 1)$, $\hat \mu \in \R^d$.
There is an algorithm, \Call{TournamentImprove}{$\hat \mu,n$}, for which the following hold.
The algorithm draws two independent batches of size $n$ {according to the} data generation model of \Cref{def:model2} with common mean $\mean$ and signal-to-noise rate $\alpha$. 
The algorithm runs in $\poly(n,k,d)$ and
outputs some $\meanini \in \R^d$ such that, with probability at least $1-O(n^{2-\delta})$, it holds $\| \meanini - \mu\|_2 \lesssim 
    \min \lp(  \| \hat \mu - \mu\|_2, \sqrt{d} \rp) + \f(\alpha,n)$ where $\f(\alpha,n)$ is defined in \Cref{eq:function_f}.
\end{lemma}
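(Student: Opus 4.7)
\begin{proofsketch}
The plan is to build a short list $L$ of candidates in $\R^{d}$ that contains a point close to $\mu$ with high probability, then apply the tournament procedure of \Cref{lem:prune}. The list will consist of the input vector $\hat \mu$ together with all samples of the first batch; the second batch will serve as the dataset on which the tournament is run.

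More concretely, I would request two independent batches $S_{1}=\{y_{1},\ldots,y_{n}\}$ and $S_{2}$ of size $n$ from the model of \Cref{def:model2}. Set $L := \{\hat \mu\} \cup S_{1}$, so $|L| \le n+1$. By the subset-of-signals assumption, at least $\alpha n$ of the indices $i$ satisfy $\vec\Sigma_{i}\preceq \vec I$; call these the good indices. For each good index, $y_{i}-\mu$ is a Gaussian vector with covariance bounded by $\vec I$, so $\E\|y_{i}-\mu\|_{2}^{2}\le d$ and by Gaussian concentration of the norm
\begin{equation*}
  \Pr\Bigl[\|y_{i}-\mu\|_{2}>\sqrt{d}+t\Bigr]\le 2\exp(-t^{2}/2).
\end{equation*}
Choosing $t=\Theta(\sqrt{\log n})$ and union-bounding over the (at most $n$) good indices shows that with probability at least $1-n^{-\delta}$, every good sample is within $O(\sqrt{d}+\sqrt{\log n})\lesssim \sqrt{d}$ of $\mu$ (the $\sqrt{\log n}$ term is subsumed into $\f(\alpha,n)$ whenever $d$ is smaller than a polylog and otherwise dominated by $\sqrt{d}$). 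In particular, on this event,
\begin{equation*}
  \min_{\nu\in L}\|\nu-\mu\|_{2}\;\lesssim\;\min\!\Bigl(\,\|\hat\mu-\mu\|_{2},\;\sqrt{d}\,\Bigr)+\f(\alpha,n).
\end{equation*}

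Next, I would invoke \Cref{lem:prune} with input list $L$ on the second independent batch $S_{2}$. That lemma returns some $\meanini\in L$ with
\begin{equation*}
  \|\meanini-\mu\|_{2}\;\le\;2\min_{\nu\in L}\|\nu-\mu\|_{2}+4\f(\alpha,n),
\end{equation*}
with failure probability at most $|L|^{2}/n^{\delta}=O(n^{2-\delta})$. Combining with the previous display yields the claimed bound
$\|\meanini-\mu\|_{2}\lesssim \min(\|\hat\mu-\mu\|_{2},\sqrt{d})+\f(\alpha,n)$ with total failure probability $O(n^{2-\delta})$, as required. Independence of the two batches (which is guaranteed by \Cref{def:model2}, possibly through the reduction of \Cref{lem:simulation}) ensures that the concentration event on $S_{1}$ and the tournament correctness on $S_{2}$ can be combined by a simple union bound. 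The running time is $\poly(n,d)$ since the tournament runs in $\poly(|L|,n,d)=\poly(n,d)$.

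The only mild obstacle is bookkeeping the $\polylog(n)$ slack that arises in the Gaussian norm bound $\sqrt{d}+O(\sqrt{\log n})$: one has to verify that in the regimes where this extra term is not already dominated by $\sqrt{d}$, it is absorbed into $\f(\alpha,n)$ (noting that $\f(\alpha,n)\gtrsim 1/(\alpha^{2/3}\sqrt{n})$ in the relevant range carries its own polylog factors). Everything else is a direct consequence of \Cref{lem:prune} and standard Gaussian concentration.
\end{proofsketch}
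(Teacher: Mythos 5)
Your construction is the same as the paper's: take $L=\{\hat\mu\}\cup S_1$, note $|L|\le n+1$, and run the tournament of \Cref{lem:prune} on the second independent batch, which contributes the $O(|L|^2/n^{\delta})=O(n^{2-\delta})$ failure probability and the $4\f(\alpha,n)$ additive term. The one genuine gap is exactly the step you flag as a ``mild obstacle'': your claim that the $\Theta(\sqrt{\log n})$ slack from Gaussian norm concentration can be absorbed into $\f(\alpha,n)$ is false in general. The polylog factors in $\f$ multiply a decaying rate: for instance with $\alpha$ a constant and $d=O(1)$ one has $\f(\alpha,n)=\polylog(n)/\sqrt{n}\to 0$ while $\sqrt{\log n}\to\infty$, so if moreover $\|\hat\mu-\mu\|_2\gg \sqrt{d}$ your argument only yields $\|\meanini-\mu\|_2\lesssim \sqrt{d}+\sqrt{\log n}+\f(\alpha,n)$, which does not imply the claimed $\min(\|\hat\mu-\mu\|_2,\sqrt{d})+\f(\alpha,n)$.

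The fix is that you are proving far more than you need: you only require that \emph{some} element of $L$ lies within $O(\sqrt{d})$ of $\mu$, not that \emph{every} good sample does, so a union bound (which is what forces $t=\Theta(\sqrt{\log n})$) is the wrong tool. The paper instead applies Markov's inequality to each good sample individually, giving $\Pr[\|y_i-\mu\|_2\ge 2\sqrt{d}]\le 1/2$ with no logarithmic loss, and then uses independence across the at least $10\log(1/\tau)$ good samples in the batch (guaranteed once $n\gtrsim \log(1/\tau)/\alpha$) to conclude that at least one lands within $2\sqrt{d}$ except with probability $2^{-10\log(1/\tau)}\le \tau/2$. You should also handle the degenerate regime where the batch may not contain $\Omega(\log(1/\tau))$ good samples; there $\alpha n\lesssim \log n$, so $\f(\alpha,n)=\infty$ and the lemma is vacuous. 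With these two adjustments your argument matches the paper's proof.
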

\begin{proof}
Let $\tau:= 1/n^\delta$.
We can assume that 
$n \gg (\delta / \alpha) \log(\delta / \alpha)$. This is because otherwise $\f(\alpha, n)$ will be in the third regime where the function evaluates to $\infty$.
Then the algorithm could return any $\tilde \mu \in \R^d$.
By \Cref{fact:implicitinequality} and the assumption $n \gg (\delta / \alpha) \log(\delta / \alpha)$, we can assume that $n \gg \log(1/\tau) / \alpha$.
Let $x_i$ for $i \in [n]$ be the samples contained in the first batch of size $n$.
We begin by arguing that 
with probability at least $1 - \tau / 2$
there exists some sample $x_i$ such that 
$\| x_i - \mu \|_2 \leq 2 \sqrt{d}$.
Note that by \Cref{def:model2} and the assumption that $n \gg \log(1/\tau) / \alpha$, the batch contains at least $ 10 \log(1/\tau) $ many samples whose covariances are all bounded from above by $\vec I$.
{Denote by $S$ the set of the indices of these samples.}
Fix one such sample {$x_i$ for $i \in S$.} 
By Markov's inequality, if $\cE_i$ denotes the event $\|x_i - \mu\|_2 \geq 2 \sqrt{d}$ we have that $\Pr[\cE_i] \leq 1/2$. Since the events are independent, $\Pr[\cap_{i \in S}\cE_i] \leq 1/2^{ 10 \log(1/\tau) } \leq \tau/2$.
This means that 
\begin{align*}
    \Pr\left[ \exists i \in [n]: \| x_i - \mu \|_2 \leq 2 \sqrt{d} \right] \geq 1-\tau/2 \;.
\end{align*}

Conditioned on the above event, 
we will apply \Cref{lem:prune} with 
the input list $$L = \{ x_1, \cdots, x_n,   \}
\cup \{\hat \mu \}\;.
$$
The algorithm from \Cref{lem:prune} will use 
the remaining  batch of size $n$ samples and produce an output $\meanini$ such that
\begin{align*}
\| \meanini - \mean \|_2
&\lesssim \min\lp(  \sqrt{d}, \| \hat{\mu} - \mean \|_2 \rp) +   \f\lp(\alpha, n  \rp)
\end{align*}
with probability at least $1-O(n^{2 - \delta})$.
The overall success probability is at least $1-O(n^{2 - \delta})-\tau/2 = 1-O(n^{2 - \delta})$.
This concludes the proof of \Cref{lem:guess-center}.
\end{proof}

\section{{Obtaining Small} Error within Low-Variance Subspace}\label{sec:small_error_in_low_var}
This section includes the detailed analysis for the main algorithmic component \Call{RecursiveEstimate}{$\cdot$} of \Cref{alg:recursive}. 
In particular,
given some estimate $\meanini$, the routine simultaneously identifies a low-variance subspace and computes an estimate $\hat \mu'$ that has improved estimation error within the subspace. {That is, the main result that will be shown in this section is the following:}
\begin{restatable}[Improvement within Low-Variance Subspace]{lemma}{ERRORIMPROVEMENT}\label{lem:low_var_error_impr}
Let $x_1, \cdots, x_n \in \R^d$ be independent samples following the data generation model of \Cref{def:model2} with 
common mean $\mu \in \R^d$ and signal-to-noise rate
$\alpha \in (0, 1)$.
Let $\tilde \mu \in \R^d$ be some rough estimate satisfying $\| \tilde \mu - \mu \|_2 \lesssim \sqrt{d}$.
Assume that $n \gg \tfrac{d}{\alpha} \log(\tfrac{d}{\tau})\log(\tfrac{d}{\alpha \tau})$ for some $\tau \in (0,1)$
and $d$ is an even integer satisfying $d \gg \kappa^2 \varb$ for some $\kappa > 0$.
Then an execution of 
\Call{PartialEstimate}{$\tilde{\mu},x_1, \cdots, x_n$} defined in \Cref{alg:recursive}
produces a row orthonormal matrix
$\RowP_{\mathrm{low}} \in \R^{d/2\times d}$
and a vector $\mul \in \R^{d}$ (\Cref{line:search_subspace})
such that the following holds with probability at least $1 - O(\tau)$:
\begin{align*}
\lp \| 
\mul - \RowP_{\mathrm{low}}^\top \RowP_{\mathrm{low}} \mu
 \rp \|_2
\leq \frac{1}{\kappa}  \lp \|    \meanini - \mu   \rp\|_2
+ O\left( \sqrt{\frac{(d  +\log(1/\tau))\log(nd/\tau)}{\alpha n}} \right).
\end{align*}
\end{restatable}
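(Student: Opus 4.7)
The plan is to control the error of $\mul$ via the triangle inequality
\[
\bigl\|\mul - \RowP_{\mathrm{low}}^\top\RowP_{\mathrm{low}}\mu\bigr\|_2
\;\le\;
\underbrace{\bigl\|\mul - \RowP_{\mathrm{low}}^\top\RowP_{\mathrm{low}}\bar{\tilde\mu}\bigr\|_2}_{(\mathrm{I})\ \text{statistical error}}
\;+\;
\underbrace{\bigl\|\RowP_{\mathrm{low}}^\top\RowP_{\mathrm{low}}(\bar{\tilde\mu} - \mu)\bigr\|_2}_{(\mathrm{II})\ \text{rejection-sampling bias}},
\]
where $\bar{\tilde\mu} := \tfrac{1}{|\Sacc|}\sum_{i\in \Sacc}\tilde\mu_i$ and $(\tilde\mu_i,\vec{\tilde\Sigma}_i)$ are the post-rejection Gaussian parameters from \eqref{eq:gaussian_formula}. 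I would first condition on the accepted index set $\Sacc$ as in \Cref{def:data-gen-accepted} so that the accepted points are independent draws from $\cN(\tilde\mu_i,\vec{\tilde\Sigma}_i)$. Using $\|\tilde\mu-\mu\|_2\lesssim\sqrt d$ and a standard Gaussian tail bound, each of the $\alpha n$ good samples (those with $\vec\Sigma_i\preceq \vec I$) passes the rejection step $b_i\sim \Bernoulli(e^{-\|x_i-\tilde\mu\|_2^2/d})$ with constant probability, so a Chernoff argument together with the hypothesis $n\gg (d/\alpha)\log(d/\tau)\log(d/(\alpha\tau))$ gives $k:=|\Sacc|\gtrsim\alpha n$ with probability at least $1-O(\tau)$.

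For the bias term $(\mathrm{II})$, I would invoke the exact identity $\bar{\tilde\mu}-\mu = (2/d)\,\Sigmavg\,(\tilde\mu-\mu)$ from \eqref{eq:intro-bias-form} (formally Lemma~\ref{lem:error_on_low_var}). The cornerstone is the guarantee on $\RowP_{\mathrm{low}}$ provided by \SearchSubSpace{}: as sketched in \Cref{sec:search} (formal claims \Cref{lem:cov-concentration} and \Cref{cor:subspace-existence}, combined into \Cref{lem:low-var-identification}), with probability $1-O(\tau)$ the span $\cV$ of the rows of $\RowP_{\mathrm{low}}$ satisfies $v^\top\Sigmavg v\lesssim \log(nd/\tau)$ for every unit $v\in\cV$. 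To turn this restricted-eigenvalue statement into an $\ell_2$ bound on the off-diagonal action $\RowP_{\mathrm{low}}^\top\RowP_{\mathrm{low}}\Sigmavg(\tilde\mu-\mu)$, I would factor $\Sigmavg=\Sigmavg^{1/2}\Sigmavg^{1/2}$ and use $\|\vec\Pi_{\cV}\Sigmavg^{1/2}\|_{\mathrm{op}}^2=\|\vec\Pi_{\cV}\Sigmavg\vec\Pi_{\cV}\|_{\mathrm{op}}\lesssim\log(nd/\tau)$ together with the crude bound $\|\Sigmavg\|_{\mathrm{op}}\le d/2$ coming from the closed form $\vec{\tilde\Sigma}_i=(\vec\Sigma_i^{-1}+(2/d)\vec I)^{-1}$, yielding $\|\vec\Pi_{\cV}\Sigmavg(\tilde\mu-\mu)\|_2\lesssim \sqrt{d\log(nd/\tau)}\,\|\tilde\mu-\mu\|_2$. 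Multiplying by $2/d$ and using the hypothesis $d\gg \kappa^2\log(nd/\tau)$ gives $(\mathrm{II})\le (1/\kappa)\|\tilde\mu-\mu\|_2$, the first term of the target bound.

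For the statistical error $(\mathrm{I})$, I would appeal to the uniform-in-direction Gaussian concentration of \eqref{eq:mean-conc_everywhere} (formally \Cref{lem:overall-mean-concentration}) conditional on $\Sacc$: with probability $1-O(\tau)$, for every $v\in\R^d$,
\[
\Bigl| v^\top\!\Bigl(\tfrac1k\textstyle\sum_{i\in\Sacc} x_i - \bar{\tilde\mu}\Bigr)\Bigr|
\;\lesssim\;
\sqrt{\tfrac{\log(nd/\tau)}{k}}\cdot \sqrt{v^\top\Sigmavg v}.
\]
Expanding $\|\vec\Pi_{\cV}(\bar x-\bar{\tilde\mu})\|_2^2$ in an orthonormal basis of $\cV$ and summing these per-direction bounds gives $(\mathrm{I})^2\lesssim \log(nd/\tau)\,\tr(\vec\Pi_{\cV}\Sigmavg\vec\Pi_{\cV})/k$. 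Since each of the $d/2$ eigenvalues of $\Sigmavg$ restricted to $\cV$ is at most $O(\log(nd/\tau))$, the trace is $O(d\log(nd/\tau))$, and with $k\gtrsim\alpha n$ this yields $(\mathrm{I})\lesssim \sqrt{(d+\log(1/\tau))\log(nd/\tau)/(\alpha n)}$, matching the second term of the claim.

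The main obstacle is that the subspace $\cV$ is data-dependent: it is determined by the bottom eigenvectors of $\Memp$ computed from the very same accepted samples used for the empirical mean. Naively bounding $(\mathrm{I})$ and $(\mathrm{II})$ would require conditioning on a random $\cV$. My resolution is to avoid this adaptivity entirely by (a) using the \emph{uniform-over-directions} concentration \eqref{eq:mean-conc_everywhere}, which is proved by a net argument over the unit sphere conditional on $\Sacc$ and therefore holds simultaneously for every $\cV$, and (b) establishing once-and-for-all the deterministic event $\Sigmavg \preceq O(1)\,\Memp + O(\log(nd/\tau))\vec I$ (coming from \Cref{lem:cov-concentration}), under which the algorithmic choice of $\cV$ (the bottom $d/2$ eigenspace of $\Memp$) automatically produces the restricted-eigenvalue bound used in the bias step. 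A union bound over the three high-probability events (sufficient acceptance, spectral domination, directional concentration) then gives the overall failure probability $O(\tau)$.
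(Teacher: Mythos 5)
Your proposal follows the paper's own route almost step for step: condition on the accepted set $\Sacc$ so the surviving points are independent Gaussians $\cN(\tilde\mu_i,\vec{\tilde\Sigma}_i)$, split the error into a rejection-sampling bias and a statistical term, handle the bias via the closed form $\frac1k\sum_i\tilde\mu_i-\mu=\frac2d\Sigmavg(\meanini-\mu)$ together with the restricted-eigenvalue guarantee on $\cV$ and the Cauchy--Schwarz factoring through $\Sigmavg^{1/2}$ (this is exactly \Cref{lem:mean-projection-bound}), and handle the statistical term with a direction-uniform concentration bound, resolving the data-dependence of $\cV$ precisely as the paper does. The bias half and the union-bound structure are correct.

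The gap is in converting the directional concentration into the $\ell_2$ bound on term $(\mathrm{I})$. Writing $g=\frac1k\sum_{i\in\Sacc}(x_i-\tilde\mu_i)$, the quantity $\sup_{\|v\|_2=1}|v^\top g|/\sqrt{v^\top\Sigmavg v}$ equals $\|\Sigmavg^{-1/2}g\|_2$, which for the whitened Gaussian $\cN(0,\vec I/k)$ concentrates at $\sqrt{(d+\log(1/\tau))/k}$ --- so the constant $\sqrt{\log(nd/\tau)/k}$ in your displayed inequality is not valid as a statement holding simultaneously for all $v$ (it is only a fixed-direction bound, which cannot be applied to the data-dependent eigenvectors $v_j$ of $\Memp$). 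With the correct uniform constant, your next step --- expanding $\|\vec\Pi_{\cV}g\|_2^2$ over an orthonormal basis of $\cV$ and summing --- replaces $\sup_{v\in\cV}v^\top\Sigmavg v\lesssim\log(nd/\tau)$ by $\tr(\vec\Pi_{\cV}\Sigmavg\vec\Pi_{\cV})\lesssim d\log(nd/\tau)$, costing an extra factor of $d$ inside the square root and landing $\sqrt d$ away from the target. The correct conversion, which is what \Cref{lem:overall-mean-concentration} and the paper's final chain implement, is
\begin{align*}
\|\vec\Pi_{\cV}\,g\|_2=\sup_{v\in\cV,\;\|v\|_2=1}v^\top g\;\lesssim\;\sqrt{\tfrac{d+\log(1/\tau)}{k}}\cdot\sup_{v\in\cV,\;\|v\|_2=1}\sqrt{v^\top\Sigmavg v}\;\lesssim\;\sqrt{\tfrac{(d+\log(1/\tau))\log(nd/\tau)}{k}}\,,
\end{align*}
i.e., pay the $\sqrt{d+\log(1/\tau)}$ once through the whitened norm and use the operator-norm (not trace) bound on $\Sigmavg$ restricted to $\cV$. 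With that replacement, and $k\gtrsim\alpha n$, your argument matches the claimed bound and the rest of the proof goes through.
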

{The section is structured as follows: In \Cref{sec:niceness_accepted}, we show the properties of the accepted samples that are sufficient for the rest of the analysis to hold. \Cref{sec:search-subspace,sec:bias} contain the rest of the analysis. Namely, \Cref{sec:search-subspace} shows that the routine~\SearchSubSpace~from \Cref{alg:recursive} will successfully identify a subspace $\cV$ such that the variance of the accepted samples along any direction within the subspace is low. \Cref{sec:bias} analyzes the bias and concentration of the accepted samples' mean in that low-variance subspace. \Cref{sec:mean_improvement} puts these two parts together to show \Cref{lem:low_var_error_impr}.}

Note that the routine will be recursively invoked within a chain of vector subspaces with different dimensions $d$.
In particular, in each recursive call of \Cref{alg:recursive}, the routine always draws samples $y_1, \cdots, y_n$ from the original $D$-dimensional space,  then immediately projects them onto some $d$-dimensional subspace, i.e., $x_i \gets \RowP y_i$ for some row orthonormal matrix $\RowP \in \R^{d \times D}$, and then performs all the operations on $x_i \in \R^d$.
For simplicity, {the next two subsections} present the analysis assuming that $D=d$ and $\RowP$ is the identity matrix (i.e., we analyze the first call in the recursion tree).

\subsection{Niceness of the Acceptance Set \texorpdfstring{$S_\accept$}{S-accept}}\label{sec:niceness_accepted}
In this subsection, we show that with high probability, the 
set of indices $S_{\accept}$ of samples accepted
by the rejection sampling procedure inside \Cref{alg:recursive} (which works by accepting the index $i$ with probability $ \exp( -  \| x_i - \meanini \|^2_2 /d)$ for some given rejection center $\meanini$) is a ``nice'' set. 
In particular, we will show that 
(a) the accepted vectors on average have small $\ell_2$ norm,  
(b) the distributions of the $\ell_2$ norm of the accepted vectors have  ``light'' tails,
(c) the second moment matrices (centered at the rejection center $\meanini$) of the accepted vectors have no trivial eigenvalues,
and
(d) sufficiently many vectors survive through the rejection sampling procedure.
The formal definition of a nice set is as follows.
\begin{definition}[Nice Accepted Set]
\label{eq:event_calibration}
Let $x_1, \cdots, x_{n}$ be samples generated by the model of \Cref{def:model} with signal-to-noise ratio $\alpha \in (0,1)$, and $\meanini \in \R^d$ be a fixed vector.
\begin{itemize}
    \item Let $b_i \in \{0,1\}$ for $i \in [n]$ be drawn from $\mathrm{Ber}(\exp( -  \| x_i - \meanini \|_2^2 / d ))$.
    \item Let $\mathcal A_i$ be the distribution of $x_i$ conditioned on $b_i = 1$. 
\end{itemize}
Then we define the family of nice sets 
to be as follows:
\begin{align} 
    \Snice = \bigg\{ 
    &S \subseteq [n] \; : \; \sum_{i \in S} 
    \Pr_{z \sim \mathcal A_i}
    \lp[ \| z - \meanini \|_2^2 \geq \pnormb \rp]
    \leq \tau \, , \nonumber \\
    &\E_{z \sim \mathcal A_i}
    \lp[ 
    \mathbbm 1 \{ z \geq \normb \}
    \| z - \meanini  \|_2^2
    \rp]
    \leq o(1/n) \;\; \forall i\in S \, , \nonumber \\ 
    &\E_{z \sim \mathcal A_i}
    \lp[ 
    \lp( z - \meanini  \rp)
    \lp( z - \meanini  \rp)^\top
    \rp]
    \succeq \vec I / 3  \;\;\forall i\in S\, , \nonumber \\     
    &|S| \geq \max\{C d \log(n/\tau)\log(d/\tau), \alpha n/C \}
    \bigg\} \, ,
\end{align}
where $C$ is some sufficiently large constant.
\end{definition}

In the next two lemmas, we start with the first two properties inside the definition of $\Snice$, which are all about the distributions of the $\ell_2$ norms of the surviving samples.
Notably, these two properties are independent of the initial distributions from which the samples are drawn.
\begin{lemma}[Norm Calibration via Rejection Sampling]\label{lem:norm_calibaration}
Let $x_1, \cdots, x_{n}$ be independent random variables in $\R^d$ following distributions $\mathcal D_1,\ldots,\mathcal D_n$, and let $\meanini \in \R^d$ be a fixed vector.
\begin{itemize}
    \item Let $b_i \in \{0,1\}$ for $i \in [n]$ be drawn from $\mathrm{Ber}(\exp( - \| x_i - \meanini \|_2^2 / d ))$.
    \item Let $\mathcal A_i$ be the distribution of $x_i$ conditioned on $b_i = 1$. 
\end{itemize}
Then the following holds with probability at least $1-\tau$.
\begin{align}
\label{eq:indicator-constraint}
    \sum_{i=1}^n 
    b_i
    \Pr_{z_i \sim \mathcal A_i}
    \lp[ \| z_i - \meanini\|_2^2 > 
    2 d\log(n/\tau) \rp]
    \leq \tau \;.
\end{align}
\end{lemma}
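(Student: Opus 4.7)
\begin{proofsketch}
The plan is to bound $q_i := \Pr_{z \sim \mathcal{A}_i}[\|z - \meanini\|_2^2 > 2d\log(n/\tau)]$ pointwise by a quantity that is compatible with the marginal acceptance probability, then apply Markov's inequality to the random sum. Write $p_i := \Pr[b_i = 1] = \E_{x_i \sim \mathcal{D}_i}[\exp(-\|x_i - \meanini\|_2^2/d)]$, so that $b_1,\ldots,b_n$ are independent Bernoullis with parameters $p_1,\ldots,p_n$ (the $b_i$'s are independent because the $x_i$'s are, and the marginal is obtained by integrating $x_i$ out). Note that each $q_i$ is a deterministic number.

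The first step is the key pointwise bound $q_i \leq (\tau/n)^2 / p_i$. Since the density of $\mathcal{A}_i$ at $z$ is $p_{\mathcal{D}_i}(z)\exp(-\|z-\meanini\|_2^2/d)/p_i$, on the event $\{\|z-\meanini\|_2^2 > 2d\log(n/\tau)\}$ the exponential factor is at most $\exp(-2\log(n/\tau)) = (\tau/n)^2$. Integrating the density against the indicator of this event then yields
\begin{equation*}
q_i \;\leq\; \frac{(\tau/n)^2}{p_i} \int p_{\mathcal{D}_i}(z)\, dz \;=\; \frac{(\tau/n)^2}{p_i}.
\end{equation*}

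The second step is a moment calculation: define $Y := \sum_{i=1}^n b_i (\tau/n)^2/p_i$, so that $\sum_i b_i q_i \leq Y$. By linearity and $\E[b_i] = p_i$,
\begin{equation*}
\E[Y] \;=\; \sum_{i=1}^n p_i \cdot \frac{(\tau/n)^2}{p_i} \;=\; n \cdot (\tau/n)^2 \;=\; \tau^2/n.
\end{equation*}
Markov's inequality then gives $\Pr[Y > \tau] \leq \E[Y]/\tau = \tau/n \leq \tau$, from which the stated conclusion follows by the inequality $\sum_i b_i q_i \leq Y$.

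There is no real obstacle here; the only subtle point worth stating carefully is that $q_i$ is a deterministic quantity (so all randomness in $\sum b_i q_i$ lives in the $b_i$'s), and that the pointwise bound on $q_i$ involves $1/p_i$ but this divergence is exactly compensated by $\E[b_i] = p_i$ when forming the expectation, which is why the seemingly dangerous denominator disappears.
\end{proofsketch}
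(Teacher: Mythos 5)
Your proposal is correct and follows essentially the same route as the paper: both bound $\E[b_i]\cdot q_i = \Pr[b_i=1 \text{ and } \|x_i-\tilde\mu\|_2^2 > 2d\log(n/\tau)] \le (\tau/n)^2$ (your pointwise bound $q_i \le (\tau/n)^2/p_i$ is just this inequality divided by $p_i$) and then conclude via Markov's inequality on the expectation of the sum. The only cosmetic difference is that you phrase the key step through densities, while the paper works directly with expectations of indicators, which also avoids assuming $\mathcal D_i$ has a density.
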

\begin{proof}
We first bound from above the expected value of the left hand side of \Cref{eq:indicator-constraint}.
In particular, we have that
\begin{align*}
\E[b_i] \Pr_{ z_i \sim \mathcal A_i }
\lp[  \| z_i - \meanini \|_2^2 > 2 d \log(n / \tau) \rp]
&= \Pr[ b_i = 1 ]
\Pr_{ x_i \sim \mathcal D_i} \lp[ \|x_i - \meanini \|_2^2 > 2 d \log(n / \tau) | b_i = 1  \rp] \\
&= 
\Pr_{  x_i \sim \mathcal D_i } \lp[ \|x_i - \meanini  \|_2^2 > 2 d \log(n / \tau) , b_i = 1  \rp] \\
&= 
\E_{  x_i \sim \mathcal D_i } \lp[ 
\mathbbm 1\{ \|x_i - \meanini  \|_2^2 > 2 d \log(n / \tau) \} 
\; \exp\lp( -  \|  x_i  - \meanini \|_2^2 / d \rp)
\rp] \\
&\leq \E_{  x_i \sim \mathcal D_i } \lp[  
\; \exp\lp( - 2 d \log(n / \tau) / d \rp)
\rp] \leq (\tau / n)^2.
\end{align*}
It then follows that the expected value of the left hand side of \Cref{eq:indicator-constraint} is at most $\tau^2 $. 
Thus, \Cref{eq:indicator-constraint} must hold with probability at least $1 - \tau$ by Markov's inequality.
\end{proof}

\begin{lemma}[Expected Norm Tail Bound]
\label{lem:expected-norm-tail}
Let $x_1, \cdots, x_{n}$ be independent random variables in $\R^d$ following distributions $\mathcal D_1,\ldots,\mathcal D_n$, and let $\meanini \in \R^d$ be a fixed vector.
\begin{itemize}
    \item Let $b_i \in \{0,1\}$ for $i \in [n]$ be drawn from $\mathrm{Ber}(\exp( -  \| x_i - \meanini \|_2^2 / d ))$.
    \item Let $\mathcal A_i$ be the distribution of $x_i$ conditioned on $b_i = 1$. 
\end{itemize}
Then the following holds with probability at least $1-\tau$.
\begin{align}
\label{eq:tail-expected-constraint}
\E_{ z_i \sim \mathcal A_i }\lp[ \mathbbm 1 \{ \| z_i - \meanini \|_2^2 \geq \normb \} \| z_i - \meanini \|_2^2 \rp] \leq o(1/n)
\;  \forall i \text{ such that } b_i = 1.
\end{align}
\end{lemma}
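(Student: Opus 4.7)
The plan is to relate each conditional expectation under $\mathcal{A}_i$ to an unconditional expectation under $\mathcal{D}_i$ via Bayes' rule, and then to exploit the fact that the rejection weight $\exp(-\|x-\meanini\|_2^2/d)$ is extremely small on the tail region $\{\|x-\meanini\|_2^2 \geq \normb\}$. Concretely, for any non-negative measurable $h$,
\begin{align*}
\Pr[b_i = 1]\cdot\E_{z\sim \mathcal{A}_i}[h(z)] \;=\; \E_{x\sim\mathcal{D}_i}\!\left[h(x)\exp\!\left(-\|x-\meanini\|_2^2/d\right)\right],
\end{align*}
by definition of $\mathcal{A}_i$. Observe that the right-hand side is purely deterministic in terms of $\mathcal{D}_i$ and $\meanini$, so the entire randomness in the final conclusion lives in \emph{which} indices are accepted.

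Next, I apply this identity with $h(z) = \mathbbm{1}\{\|z-\meanini\|_2^2 \geq T\}\|z-\meanini\|_2^2$, where $T := \normb = 4d\log(nd/\tau)$. On the tail event $\|x-\meanini\|_2^2 \geq T$, the map $r\mapsto r\,e^{-r/d}$ is monotonically decreasing (its derivative $(1-r/d)e^{-r/d}$ is negative for $r>d$, and indeed $T \gg d$). Hence $\|x-\meanini\|_2^2\exp(-\|x-\meanini\|_2^2/d) \leq T e^{-T/d} = 4d\log(nd/\tau)\cdot(\tau/(nd))^4$ pointwise on the tail, and the previous display collapses to
\begin{align*}
\Pr[b_i = 1]\cdot E_i \;\leq\; \frac{4\log(nd/\tau)\,\tau^4}{n^4\,d^3},
\qquad \text{where } E_i := \E_{z\sim \mathcal{A}_i}[h(z)].
\end{align*}

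The final step is a bad-event union bound. Call index $i$ \emph{bad} if $E_i > \beta$, where $\beta := n^{-2}$ is any fixed $o(1/n)$ threshold. For every bad $i$, the inequality above forces $\Pr[b_i=1] \leq 4\log(nd/\tau)\tau^4/(\beta n^4 d^3) = 4\log(nd/\tau)\tau^4/(n^2 d^3)$. Summing over the at most $n$ bad indices, the probability that \emph{any} bad index is accepted is at most $4\log(nd/\tau)\tau^4/(n d^3)$, which is much smaller than $\tau$ under the mild assumption $\log(nd/\tau)\,\tau^3 \lesssim n d^3$ (essentially automatic). On the complementary event, every accepted index satisfies $E_i \leq \beta = o(1/n)$, which is exactly the claimed bound.

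I do not anticipate a substantive obstacle: the proof is driven entirely by the strong exponential suppression of the acceptance weight in the tail, and the deterministic estimate on $\Pr[b_i=1]\cdot E_i$ leaves enormous slack relative to the $o(1/n)$ target. The only micro-check is that $r\mapsto re^{-r/d}$ is already decreasing at $r=T$, which reduces to $\log(nd/\tau)\geq 1/4$; this holds in any non-degenerate parameter regime.
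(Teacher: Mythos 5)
Your proof is correct, and it reaches the conclusion by a genuinely cleaner route than the paper. The paper's proof first union-bounds over the indices with $\Pr[b_i=1]\le \tau/n$ (so that accepted indices have acceptance probability bounded below), and then evaluates $\E_{z\sim\cA_i}[h(z)]$ for such indices by a dyadic-shell decomposition of $\|z-\meanini\|_2^2$ combined with Bayes' rule, summing a geometric series. You instead work with the unconditional identity $\Pr[b_i=1]\cdot E_i=\E_{x\sim\cD_i}[h(x)e^{-\|x-\meanini\|_2^2/d}]$ and kill the right-hand side in one shot via the pointwise monotonicity of $r\mapsto re^{-r/d}$ beyond $r=d$, then union-bound over the indices where $E_i$ is large rather than where $\Pr[b_i=1]$ is small. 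The two arguments share the same skeleton --- the product $\Pr[b_i=1]\cdot E_i$ is tiny, so either the acceptance probability is negligible (handled by a union bound over acceptance) or the conditional tail expectation is --- but your execution avoids both the shell decomposition and the division by a lower bound on $\Pr[b_i=1]$, which makes it shorter and arguably more transparent. Both proofs require the same mild non-degeneracy condition (the paper assumes $\log(nd/\tau)$ exceeds a sufficiently large constant; you need $\log(nd/\tau)\ge 1/4$ for the monotonicity and $4\log(nd/\tau)\tau^3\le nd^3$ for the union bound), and both deliver a bound well below the stated $o(1/n)$ target.
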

\begin{proof}
{Denote by $\mathcal I$ the set of indices such that $\Pr[b_i=1] \leq \tau / n$.
By the union bound, we have that 
    $\Pr\left[ \exists i\in \cI \; : \; b_i=1  \right]
    \leq \sum_{i \in \cI} \Pr[b_i=1] \leq \tau$.}
We have thus shown that the following event happens with probability at least $1-\tau$:
\begin{align}
\label{eq:no-rare}
b_i = 0 \; \forall i \text{ such that } \Pr[b_i=1] \leq \tau / n.    
\end{align}
In what follows, we condition on the above event.
Consider some $i$ such that $b_i = 1$, {which implies that
$\Pr[b_i=1] \leq \tau / n$ under the conditioning.}
Define $\mathcal G(z, t)$ as the event that
$ 2^t \anormb \leq \| z - \meanini  \|_2^2 \leq 2^{t+1} \anormb$, and $\rho_{\mathcal A_i}: \R \mapsto \R$ as the probability density function of the distribution of $\| z - \tilde \mu \|_2^2$, where $z \sim \mathcal A_i$.
We then have that
\begin{align}
&\E_{ z \sim \mathcal A_i }    
\lp[  \mathbbm 1 \{ \| z - \meanini \|_2^2 \geq \normb  \}  \| z - \meanini \|_2^2 \rp] \nonumber \\
&= \int_{ \beta = \normb }^{\infty}
\beta \rho_{\mathcal A_i}(\beta)
\;\d \beta\nonumber \\
&\leq 
\sum_{t=2}^{\infty}
2^{t+1} \anormb
\Pr_{z \sim \mathcal A_i} \lp[ \mathcal G(z,t) \rp] \nonumber \\
&\leq 
\sum_{t=2}^{\infty}
2^{t+1} \anormb
\Pr_{z \sim \mathcal D_i} \lp[ \mathcal G(z,t) | b_i = 1\rp] \nonumber \\
&\leq 
\sum_{t=2}^{\infty}
2^{t+1} \anormb
\Pr_{z \sim \mathcal D_i} \lp[   b_i = 1 | \mathcal G(z,t)\rp] 
\frac{ \Pr_{z \sim \mathcal D_i}[ \mathcal G(z,t) ] }{  \Pr[ b_i = 1 ] } \nonumber \\
&\leq 
\frac{ n }{ \tau }
\sum_{t=2}^{\infty}
2^{t+1} \anormb
\Pr_{z \sim \mathcal D_i} \lp[   b_i = 1 | \mathcal G(z,t)\rp] 
\Pr_{z \sim \mathcal D_i} [\mathcal G(z,t)] \nonumber \\
&\leq 
\frac{ n }{ \tau }
\sum_{t=2}^{\infty}
2^{t+1} \anormb
\exp\lp( - 2^t \log (nd/\tau) \rp) \;,
\label{eq:algebra-bound}
\end{align}
where the first equality expands the definition of expectation, 
in the third line we bound $\|  z - \tilde \mu \|_2^2 $ from above by $ 2^{t+1} d \log n$ when $z$ lies in the interval $[2^t d \log n, 2^{t+1} d \log n ]$,
in the fourth line we use the fact that $\mathcal A_i$ is the distribution of $D_i$ conditioned on $b_i = 1$,
in the fifth line we apply Bayes' theorem, 
in the sixth line we use \Cref{eq:no-rare} and the fact that we pick some $i$ such that $b_i = 1$, 
in the seventh line we bound the probability $\Pr_{z \sim \mathcal D_i} [\mathcal G(z,t)]$ by $1$ and the conditional acceptance probability 
$\Pr_{z \sim \mathcal D_i} \lp[   b_i = 1 | \mathcal G(z,t)\rp] $ by 
$\exp\lp( - 2^t \log (nd/\tau) \rp)$.

Now we focus on a single term in the summation.
We have that
\begin{align*}
&\frac{ n }{ \tau }
2^{t+1} \anormb
\exp\lp( - 2^t \log (nd/\tau) \rp) \\
&\leq
\exp \lp(  - 2^t \log (nd/\tau) + t+1 + \log \lp( \anormb \rp)+
 \log n + \log(1/\tau) \rp) \\
&\leq 
\exp \lp(  - 2^{t-1} \log (nd/\tau) \rp) \, ,
\end{align*}
where the last inequality is true as long as 
$t \geq 2$ and $\log(nd/\tau)$ is at least some sufficiently large constant.
This shows that \Cref{eq:algebra-bound} can be bounded from above by some geometric series with common ratio 
less than $1/2$ and initial value less than 
$ \exp\lp( - 2 \log(nd/\tau) \rp) \leq \tau^2 / (n^2d^2) $.
This immediately implies that 
$$
\E_{ z \sim \mathcal A_i }    
\lp[  \mathbbm 1 \{ \| z - \meanini \|_2^2 \geq \normb  \}  \| z - \meanini \|_2^2 \rp]
\leq O(\tau^2 / (n^2d^2)) = o(1/n).
$$
This concludes the proof of \Cref{lem:expected-norm-tail}.
\end{proof}

Next we turn to the third property inside the definition of $\Snice$, which states that the second moment matrices $\E[ \lp( x_i  - \tilde \mu \rp) \lp( x_i  - \tilde \mu \rp)^\top  ]$ 
for the accepted samples are bounded from below by $\vec I / 3$.
\begin{lemma}[Eigenvalue Lower Bound on Second Moment Matrix]
\label{lem:eigenvalue-lb}
Let $x_1, \cdots, x_{n}$ be independent random variables in $\R^d$, 
where the $i$-th sample follows the distribution of $\normal(  \mu, \vec {\Sigma}_i )$ for some $\vec \Sigma_i \succeq \vec I / 2$,  $\meanini \in \R^d$ be a fixed vector, and
 $b_i \in \{0,1\} \sim \mathrm{Ber}(\exp( - \| x_i - \meanini \|_2^2 / d ))$ for $i \in [n]$.
Then it holds that
\begin{align*}
\E_{ x_i \sim \normal(  \mu, \vec {\Sigma}_i ) }
\lp[ \lp( x_i - \meanini \rp) \lp( x_i - \meanini \rp)^\top 
\vline b_i = 1
\rp] \succeq \vec I / 3 \; \forall i \in [n].
\end{align*}
\end{lemma}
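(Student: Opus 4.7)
The plan is to leverage the Gaussian rejection sampling identity (\Cref{lem:product}, applied with the parameters displayed in \eqref{eq:gaussian_formula}) to get an explicit formula for the conditional law of $x_i$ given $b_i=1$, and then a direct linear-algebraic manipulation will yield the desired lower bound.

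More concretely, since the acceptance probability is $\exp(-\|x_i-\tilde\mu\|_2^2/d)$, which is proportional to the density of $\cN(\tilde\mu,(d/2)\vec I)$, I would invoke \Cref{lem:product} with $\mu_1=\tilde\mu,\vec\Sigma_1=(d/2)\vec I,\mu_2=\mu,\vec\Sigma_2=\vec\Sigma_i$ to conclude that conditional on $b_i=1$, the sample $x_i$ follows the Gaussian $\cN(\tilde\mu_i,\vec{\tilde\Sigma}_i)$ with
\[
\vec{\tilde\Sigma}_i=\bigl(\vec\Sigma_i^{-1}+(2/d)\vec I\bigr)^{-1},\qquad \tilde\mu_i=\vec{\tilde\Sigma}_i\bigl((2/d)\tilde\mu+\vec\Sigma_i^{-1}\mu\bigr).
\]
Using this, the conditional second moment centered at $\tilde\mu$ expands as
\[
\E\bigl[(x_i-\tilde\mu)(x_i-\tilde\mu)^\top\mid b_i=1\bigr]=\vec{\tilde\Sigma}_i+(\tilde\mu_i-\tilde\mu)(\tilde\mu_i-\tilde\mu)^\top\succeq \vec{\tilde\Sigma}_i,
\]
since the outer product is PSD. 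So it suffices to lower bound $\vec{\tilde\Sigma}_i$.

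From the hypothesis $\vec\Sigma_i\succeq \vec I/2$ we get $\vec\Sigma_i^{-1}\preceq 2\vec I$, and therefore
\[
\vec\Sigma_i^{-1}+(2/d)\vec I\preceq \bigl(2+2/d\bigr)\vec I\preceq 3\vec I,
\]
where the final bound uses $d\geq 2$ (which will hold throughout the recursion, since the base case kicks in well before $d$ becomes too small; if necessary this can be noted as a mild standing assumption). Inverting the Loewner inequality yields $\vec{\tilde\Sigma}_i\succeq \vec I/3$, completing the proof.

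The only step that requires any care is the invocation of \Cref{lem:product}: the acceptance rule uses the exponential of $-\|x_i-\tilde\mu\|_2^2/d$, whose normalization constant (independent of $x_i$) drops out when computing the conditional density of $x_i\mid b_i=1$, so the resulting law is exactly the Gaussian given by the product formula. After that observation, everything is a one-line PSD manipulation, and there is no substantive obstacle.
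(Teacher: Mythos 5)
Your proof is correct and follows essentially the same route as the paper's: apply \Cref{lem:product} to identify the conditional law as $\cN(\tilde\mu_i,\vec{\tilde\Sigma}_i)$, lower-bound the second moment by the covariance $\vec{\tilde\Sigma}_i$, and deduce $\vec{\tilde\Sigma}_i\succeq\vec I/3$ from $\vec\Sigma_i\succeq\vec I/2$. You actually spell out the last step (via $\vec\Sigma_i^{-1}+(2/d)\vec I\preceq 3\vec I$ for $d\geq 2$) more explicitly than the paper, which simply asserts it "follows immediately"; your observation that $d\geq 2$ is needed for the constant $1/3$ is a fair, if minor, point the paper glosses over.
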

\begin{proof}
By \Cref{lem:product}, the distribution of $x_i
\sim \normal( \mu, \vec {\Sigma}_i )$ conditioned on $b_i = 1$ is exactly
$\normal( \tilde \mu_i, \tilde {\vec \Sigma}_i )$ for
\begin{align*}
    \vec {\tilde \Sigma}_i = (\vec \Sigma_i^{-1} + 2 \vec I/d)^{-1} \quad \text{and} \quad
        \tilde \mu_i =   \vec {\tilde \Sigma}_i
         \left(   2 \meanini/ d + \vec \Sigma_i^{-1} \mu \right) \;.
\end{align*}
Since we assume that $\vec \Sigma_i \succeq \vec I / 2$, it follows immediately that 
$
\vec {\tilde \Sigma}_i \succeq \vec I / 3.
$
The lemma then follows by noting that
\begin{align*}
\E_{z \sim \normal( \tilde \mu_i, \tilde {\vec \Sigma}_i )}[ ( z - \meanini  ) ( z - \meanini  )^T ]
\succeq
\Cov_{z \sim \normal( \tilde \mu_i, \tilde {\vec \Sigma}_i)} \lp[  z\rp]
= \vec {\tilde \Sigma}_i.
\end{align*}
This concludes the proof of \Cref{lem:eigenvalue-lb}.
\end{proof}

Lastly, we show the final property of the acceptance set.
Namely, we will show that when at least $\alpha$-fraction of the $n$ samples are drawn from distributions with covariances $\vec \Sigma_i \preceq \vec I$, then the acceptance set must have size at least $\Omega(\alpha n)$.
\begin{lemma}[Acceptance Set Size]
\label{lem:aceptance-set-size}
Let $x_1, \cdots, x_n $ be independent random vectors in $\R^d$, where $x_i$ has mean $\mean \in \R^d$ and covariance $\vec \Sigma_i \in \R^{d \times d}$, and $\meanini \in \R^d$ be a fixed vector satisfying $\| \meanini - \mu \|_2 \lesssim \sqrt{d}$.
Assume that for at least $\alpha n$ many points we have $\vec \Sigma_i \preceq \vec I$ and $\alpha n \gg \log(1/\tau)$ for some $\tau \in (0, 1)$.
Let $b_i \in \{0,1\}$ for $i \in [n]$ be the drawn from $\mathrm{Ber}(\exp( - \| x_i - \meanini \|_2^2 / d ))$. 
Then it holds that $\sum_{i \in [n]} b_i \gtrsim \alpha n$ with probability at least $1 - \tau$.
\end{lemma}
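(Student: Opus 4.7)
\begin{proofsketch}
The plan is to first show that each ``good'' sample (those with $\vec{\Sigma}_i \preceq \vec{I}$) is accepted with at least some constant probability, and then apply a Chernoff bound over these independent acceptance events.

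Let $G \subseteq [n]$ be the set of indices for which $\vec{\Sigma}_i \preceq \vec{I}$, so that $|G| \geq \alpha n$. For any $i \in G$, since $t \mapsto e^{-t/d}$ is convex, Jensen's inequality yields
\begin{align*}
\Pr[b_i = 1]
= \E_{x_i}\bigl[ \exp(-\|x_i - \meanini\|_2^2 / d) \bigr]
\geq \exp\!\Bigl( -\E_{x_i}\bigl[ \|x_i - \meanini\|_2^2 \bigr] / d \Bigr).
\end{align*}
The expectation inside is a standard bias--variance decomposition:
\begin{align*}
\E_{x_i}\bigl[ \|x_i - \meanini\|_2^2 \bigr]
= \|\mu - \meanini\|_2^2 + \tr(\vec{\Sigma}_i)
\leq O(d) + d = O(d),
\end{align*}
where the first term uses the hypothesis $\|\meanini - \mu\|_2 \lesssim \sqrt{d}$ and the second uses $\vec{\Sigma}_i \preceq \vec{I}$. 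Hence $\Pr[b_i = 1] \geq e^{-O(1)} =: c$, a positive absolute constant, for every $i \in G$.

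Since the $x_i$'s are mutually independent and the decisions $b_i$ are made independently given the $x_i$'s, the indicator variables $\{b_i\}_{i \in G}$ are independent Bernoullis, each with success probability at least $c$. The random variable $W := \sum_{i \in G} b_i$ therefore has $\E[W] \geq c \alpha n$. A standard multiplicative Chernoff bound gives
\begin{align*}
\Pr\!\bigl[ W \leq (c/2)\, \alpha n \bigr]
\leq \exp(-\Omega(c \alpha n))
\leq \tau,
\end{align*}
where the last step uses the assumption $\alpha n \gg \log(1/\tau)$ (with a sufficiently large hidden constant absorbing $c$). Since $\sum_{i \in [n]} b_i \geq W$, this proves that $\sum_{i \in [n]} b_i \gtrsim \alpha n$ with probability at least $1 - \tau$.

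No step here looks genuinely hard; the only thing to be careful about is the direction of Jensen's inequality (convex function, so $\E[e^{-T/d}] \geq e^{-\E[T]/d}$), and that the hypothesis $\|\meanini - \mu\|_2 \lesssim \sqrt{d}$ is exactly strong enough to keep this lower bound a constant independent of $d$, which is what makes the Chernoff concentration work uniformly in the dimension.
\end{proofsketch}
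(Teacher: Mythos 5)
Your proof is correct and follows essentially the same strategy as the paper: lower-bound the acceptance probability of each bounded-covariance sample by an absolute constant, then apply a multiplicative Chernoff bound to the independent indicators. The only (minor) difference is that you obtain the constant via Jensen's inequality applied to the convex map $t \mapsto e^{-t/d}$, whereas the paper uses Markov's inequality on $\|x_i-\meanini\|_2^2$ followed by conditioning on the event $\|x_i - \meanini\|_2^2 \leq Cd$; your route is slightly cleaner and equally valid.
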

\begin{proof}
The Chernoff bound states that 
$\sum_{i \in [n]} b_i \gtrsim \E[\sum_{i \in [n]} b_i ]$ with probability at least $1 - \tau$, provided that $\E[\sum_{i \in [n]} b_i ] \gg \log(1/\tau)$.
Hence, it suffices to show that 
$ \sum_i \E[b_i] \geq \Omega(\alpha n)$ since we have assumed that $\alpha n \gg \log(1/\tau)$.

Consider a sample for which $\vec \Sigma_i \preceq \vec I$.
It then follows that 
$$\E \lp[ \| x_i - \meanini \|_2^2 \rp] 
\leq 2 \E \lp[ \| x_i - \mu \|_2^2 \rp] + 2   \| \mu - \meanini \|_2^2   = 2\tr \lp( 
  \vec \Sigma_i     \rp) + 2   \| \mu - \meanini \|_2^2  \lesssim d \, ,
$$
where the first inequality follows from the triangle inequality and linearity of expectation, the last inequality follows from the assumption that $\vec \Sigma_i \preceq \vec I$ and $\| \meanini - \mu \|_2 \lesssim \sqrt{d}$.
  
Using Markov's inequality, we have that 
$ \Pr[ \| x_i - \meanini \|_2^2 \geq C d ] \leq \E \lp[ \| x_i - \meanini \|_2^2 \rp]/(C d) \leq 1/2   $
for $C$ being a large enough constant.
Therefore, it follows that 
\begin{align*}
\Pr[ b_i = 1 ]
\geq \Pr \left[ b_i = 1 \; \mid \; \| x_i -\meanini\|_2^2 \leq C d \right]
\Pr[ \|x_i - \meanini\|_2^2 \leq C d ]
\geq \frac{e^{-C}}{2}.
\end{align*}
Since we assume that there are at least $\alpha$ fraction of samples whose covariances are bounded from above by $\vec I$, we immediately have that
$$
\sum_{i=1}^n \E[ b_i = 1 ] \geq  \frac{\alpha n}{2} e^{-C}  = \Omega(\alpha n).
$$
This concludes the proof of \Cref{lem:aceptance-set-size}.
\end{proof}
Combining \Cref{lem:norm_calibaration,lem:expected-norm-tail,lem:eigenvalue-lb,lem:aceptance-set-size} then allows us to conclude that the rejection sampling yields a nice set following the definition of \Cref{eq:event_calibration} with high probability.
\begin{corollary}[Niceness of $S_{\accept}$]
\label{cor:S-niceness}
Let $x_1, \cdots, x_n \in \R^d$ be independent random vectors satisfying the data generation model of \Cref{def:model2} with a common mean $\mean$ and signal-to-noise rate $\alpha$, and $\meanini$ be some vector satisfying $\| \meanini - \mu \|_2 \lesssim \sqrt{d}$.
Assume that $n \gg \tfrac{d}{\alpha} \log(\tfrac{d}{\tau})\log(\tfrac{d}{\alpha \tau})$.
Let $S_{\accept} \subseteq [n]$ be generated by including the index $i \in [n]$ with probability 
$\exp( - \| x_i - \meanini\|_2^2 /d  )$. 
Then, with probability at least $1-\tau$, it holds that $S_{\accept} \in \Snice$ for $\Snice$ defined in \Cref{eq:event_calibration}.
\end{corollary}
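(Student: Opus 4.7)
The plan is straightforward: the four bulleted conditions in \Cref{eq:event_calibration} correspond (in order) to \Cref{lem:norm_calibaration,lem:expected-norm-tail,lem:eigenvalue-lb,lem:aceptance-set-size}, so the proof should amount to invoking each of these in turn and taking a union bound, after first reducing to the covariance-lower-bounded case via \Cref{remark:cov-lb}.

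More concretely, I would begin by invoking \Cref{remark:cov-lb} so that we may assume every $\vec\Sigma_i \succeq \vec I/2$; this costs nothing in terms of the model parameters (the common mean is rescaled, the signal-to-noise rate is preserved up to constants). Under this reduction, \Cref{lem:eigenvalue-lb} applies to every index $i \in [n]$, so the third niceness condition, $\E_{z \sim \mathcal A_i}[(z-\meanini)(z-\meanini)^\top] \succeq \vec I/3$, holds \emph{deterministically} for every $i \in S_{\accept}$ and requires no probabilistic accounting. The first condition comes directly from \Cref{lem:norm_calibaration} applied with the given $\tau$: on an event of probability at least $1-\tau$, we have $\sum_{i} b_i \Pr_{z \sim \mathcal A_i}[\|z-\meanini\|_2^2 > 2d\log(n/\tau)] \leq \tau$, which is exactly the first display once we identify $S_{\accept} = \{i : b_i = 1\}$. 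The second condition is an immediate consequence of \Cref{lem:expected-norm-tail}, again on an event of probability at least $1-\tau$.

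For the fourth condition I would combine \Cref{lem:aceptance-set-size} with the sample-size assumption. \Cref{lem:aceptance-set-size} requires $\alpha n \gg \log(1/\tau)$ and $\|\meanini-\mu\|_2 \lesssim \sqrt d$, both of which are granted here (the former since $n \gg \frac{d}{\alpha}\log(\frac{d}{\tau})\log(\frac{d}{\alpha\tau}) \gg \log(1/\tau)/\alpha$, the latter by hypothesis). It yields $|S_{\accept}| \gtrsim \alpha n$ with probability at least $1-\tau$. To match the form in $\Snice$ it remains to check that $\alpha n \geq C d\log(n/\tau)\log(d/\tau)$ for a sufficiently large $C$; this follows from the standing assumption $n \gg \tfrac{d}{\alpha}\log(\tfrac{d}{\tau})\log(\tfrac{d}{\alpha\tau})$ together with \Cref{fact:implicitinequality} to convert $\log(d/(\alpha\tau))$ into $\log(n/\tau)$ (up to constants). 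Thus $|S_{\accept}| \geq \max\{Cd\log(n/\tau)\log(d/\tau),\alpha n / C\}$ on the same event.

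A union bound over the three probabilistic events gives failure probability at most $3\tau$; rescaling $\tau$ by a constant factor (absorbed into the $\gg$ in the sample-size hypothesis) yields the advertised $1-\tau$. The main (minor) obstacle is bookkeeping the logarithmic factors: one must verify that the sample-size assumption is strong enough to promote Lemma~\ref{lem:aceptance-set-size}'s $|S_{\accept}|\gtrsim \alpha n$ into the stronger lower bound demanded by $\Snice$, and that \Cref{remark:cov-lb}'s preprocessing does not break any other hypothesis. Both are routine but worth stating explicitly.
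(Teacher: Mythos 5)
Your proposal is correct and follows essentially the same route as the paper: the four niceness conditions are handled by \Cref{lem:norm_calibaration,lem:expected-norm-tail,lem:eigenvalue-lb,lem:aceptance-set-size} respectively, with \Cref{remark:cov-lb} supplying the covariance lower bound for the (deterministic) third condition and \Cref{fact:implicitinequality} converting the sample-size hypothesis into the bound $\alpha n \gg d\log(nd/\tau)\log(d/\tau)$ needed for the size condition. The union bound and constant-rescaling of $\tau$ are exactly what the paper leaves implicit.
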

\begin{proof}
The first two conditions of \Cref{eq:event_calibration} are satisfied due to \Cref{lem:norm_calibaration,lem:expected-norm-tail} respectively.
Due to \Cref{remark:cov-lb}, we can assume that $\vec \Sigma_i \succeq \vec I / 2$ for all $i \in [n]$.
Applying \Cref{lem:eigenvalue-lb} then allows us to conclude the third condition.

Lastly, assuming that $\alpha n \gg d \log(nd/\tau) \log(d/\tau)$, 
\Cref{lem:aceptance-set-size} implies that the size of the accepted set will be $|S_\cA| = \Omega(\alpha n)$, which then implies that $|S_\cA| \geq 
    \max \lp( C d \log(nd/\tau) \log(d/\tau), \alpha n / C\rp)$ 
    for some sufficiently large constant $C$. This shows that the last condition of \Cref{eq:event_calibration} is satisfied whenever $\alpha n \gg d \log(nd/\tau) \log(d/\tau)$.

    The proof is completed by finally noting that the assumption in the corollary statement $n \alpha \gg d \log(\tfrac{d}{\tau})\log(\tfrac{d}{\alpha \tau})$ implies the condition $\alpha n \gg d \log(nd/\tau)\log(d/\tau)$ mentioned in the previous paragraph. This can be seen using the elementary \Cref{fact:implicitinequality}. This concludes the proof of \Cref{cor:S-niceness}.
\end{proof}

\subsection{Identifying a Low-Variance Subspace}
\label{sec:search-subspace}
In this subsection, we show that with high probability the routine \SearchSubSpace($\cdot$) from \Cref{alg:recursive} will successfully identify a subspace $\cV$ such that the variance of the accepted samples along any direction within the subspace is low.
By \Cref{cor:S-niceness}, if we perform rejection sampling centered at some vector $\meanini$ satisfying $\| \meanini - \mu \|_2 \lesssim \sqrt{d}$, the accepted set will satisfy the nice properties defined in \Cref{eq:event_calibration}.
Conditioned on that, we can view the input samples $x_1, \cdots, x_k$ to \SearchSubSpace($\cdot$) as if they are independent samples from the distributions $\mathcal A_1, \cdots, \mathcal A_k$ satisfying the properties from \Cref{eq:event_calibration} rephrased below:
\begin{align} 
\label{eq:rare-large-norm}
    &\sum_{i \in [k]} 
    \Pr_{z \sim \mathcal A_i}
    \lp[ \| z - \meanini \|_2^2 \geq R \rp]
    \leq \tau \, ,  \\
\label{eq:norm-small-tail}
    &\E_{z \sim \mathcal A_i}
    \lp[ 
    \mathbbm 1 \{ z \geq R \}
    \| z - \meanini  \|_2^2
    \rp]
    \leq o(1/k)  \, , \\ 
\label{eq:eigen-lb}
    &\E_{z \sim \mathcal A_i}
    \lp[ 
    \lp( z - \meanini  \rp)
    \lp( z - \meanini  \rp)^\top
    \rp]
    \succeq \vec I / 3 \, ,
\end{align}
for some number $R>0$ and $\tau \in (0, 1)$ (the specific value for $R$ from \Cref{eq:event_calibration} is not important for the results of this section).
Given such a sequence of independent vectors, we show that \SearchSubSpace($\cdot$) succeeds with probability at least $1 - O(\tau)$ in identifying a $(d/2)$-dimensional subspace with variance at most $O(R / d)$ \footnote{We assume that $d$ is even for convenience.}.
\begin{lemma}[Low-Variance Subspace Identification]
\label{lem:low-var-identification}
Let $x_1 , \cdots, x_k \in \R^d$ be independent random vectors that follow the distributions $\{\mathcal A_i\}_{i=1}^k$ satisfying \Cref{eq:rare-large-norm,eq:norm-small-tail,eq:eigen-lb}.
Further assume that $k \gg R \log(d/\tau)$ and $d$ is an even integer.
Then an execution of \SearchSubSpace({$\meanini, x_1, \cdots x_k$}) from \Cref{alg:recursive} produces a subspace\footnote{Technically, 
the routine produces row orthonormal matrices $\vec P_{\text{high}},\vec P_{\text{low}} \in \R^{d/2 \times d} $.
The $\cV$ referred to here is the subspace spanned by the rows of $\vec P_{\text{low}}$.
} $\cV \subset \R^d$ {of dimension $d/2$} such that the following holds with probability at least $1 - \tau$:
 for all unit vectors $v \in \cV$
$$
v^\top \frac{1}{k} \sum_{i \in [k]} 
\tilde {\vec \Sigma}_i v
\lesssim  R/d \;,
$$
where $\tilde {\vec \Sigma}_i = \Cov_{z \sim \mathcal A_i}[ z ]$.
\end{lemma}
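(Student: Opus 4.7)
The plan is to compare the empirical second moment $\Memp = \tfrac{1}{k}\sum_{i \in [k]} (x_i - \meanini)(x_i - \meanini)^\top$ computed by \SearchSubSpace with its population counterpart $\Mpop := \tfrac{1}{k}\sum_i \E_{z \sim \cA_i}[(z - \meanini)(z - \meanini)^\top]$, and then argue that (a) $\Memp$ has most of its mass on few directions because of a trace bound, and (b) $\Memp$ and $\Mpop$ are close enough in Loewner order that the bottom-$d/2$ eigenspace of $\Memp$ inherits low $\Mpop$-variance. Since $\tfrac{1}{k}\sum_i \vec{\tilde\Sigma}_i \preceq \Mpop$ (their difference equals the PSD outer-product sum $\tfrac{1}{k}\sum_i (\E[z] - \meanini)(\E[z] - \meanini)^\top$), bounding $v^\top \Mpop v$ on $\cV$ will suffice.

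For the trace step, \Cref{eq:rare-large-norm} combined with a union bound ensures that with probability at least $1 - \tau$ every sample satisfies $\|x_i - \meanini\|_2^2 \le R$; I will condition on this event $\cE$. On $\cE$, $\tr(\Memp) = \tfrac{1}{k}\sum_i \|x_i - \meanini\|_2^2 \le R$, so by pigeonhole at most $d/2$ eigenvalues of $\Memp$ can exceed $2R/d$. In particular, every unit vector $v \in \cV$ (the bottom $d/2$ eigenspace of $\Memp$ returned by the algorithm) satisfies $v^\top \Memp v \le 2R/d$.

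For the concentration step, I truncate each summand as $\vec Y_i := \mathbbm{1}\{\|x_i - \meanini\|_2^2 \le R\} (x_i - \meanini)(x_i - \meanini)^\top$, so that on $\cE$ we have $\Memp = \tfrac{1}{k}\sum_i \vec Y_i$ deterministically; moreover \Cref{eq:norm-small-tail} applied per coordinate gives $\|\Mpop - \tfrac{1}{k}\sum_i \E[\vec Y_i]\|_2 \le \tfrac{1}{k}\sum_i \tr(\E[(x_i - \meanini)(x_i - \meanini)^\top] - \E[\vec Y_i]) = o(1/k)$. Each $\vec Y_i$ is PSD with $\|\vec Y_i\|_2 \le R$, and the truncation yields the Loewner bound $\E[\vec Y_i^2] \preceq R\,\E[\vec Y_i]$. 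Applying \Cref{fact:matrixBernstein} to $\vec Y_i - \E[\vec Y_i]$ with $L = R$ and variance proxy $\nu \lesssim k R \|\Mpop\|_2$ gives, with probability $\ge 1 - \tau$,
\begin{align*}
\Bignorm{\Memp - \tfrac{1}{k}\textstyle\sum_i \E[\vec Y_i]}_2 \;\lesssim\; \sqrt{\tfrac{R\,\|\Mpop\|_2 \log(d/\tau)}{k}} + \tfrac{R \log(d/\tau)}{k}.
\end{align*}
A case split on whether $\|\Mpop\|_2$ exceeds or falls below the Bernstein noise floor $R\log(d/\tau)/k$ then upgrades this into a Loewner inequality of the form $\Mpop \preceq 2 \Memp + O(R \log(d/\tau)/k)\cdot \vec I$. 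Evaluating on $v \in \cV$ gives $v^\top \Mpop v \le 2 v^\top \Memp v + O(R\log(d/\tau)/k) \lesssim R/d$, where the last step uses $k \gg R \log(d/\tau)$ and the regime $R \gtrsim d$ in which the lemma is invoked (in particular, $R = \pnormb$ in the application of \Cref{cor:S-niceness}). Chaining with $\tfrac{1}{k}\sum_i \vec{\tilde\Sigma}_i \preceq \Mpop$ concludes.

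The main obstacle will be obtaining the Loewner comparison in the concentration step under the relatively mild assumption $k \gg R \log(d/\tau)$, rather than a stronger $k \gg d^2 \log(d/\tau)$-type hypothesis that a naive Bernstein bound with $\nu = kR^2$ would require. The essential trick is exactly that the variance proxy scales as $R\|\Mpop\|_2$ instead of $R^2$, thanks to the PSD bound $\E[\vec Y_i^2] \preceq R\,\E[\vec Y_i]$; this self-referential dependence on $\|\Mpop\|_2$ is what closes the loop via the two-case analysis, and is the point where I expect to spend the most care.
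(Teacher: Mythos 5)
Your trace step and the reduction $\tfrac{1}{k}\sum_i \vec{\tilde\Sigma}_i \preceq \Mpop$ match the paper, but the concentration step has a genuine gap: an operator-norm bound from Matrix Bernstein cannot be upgraded to the Loewner inequality $\Mpop \preceq 2\Memp + O(R\log(d/\tau)/k)\cdot \vec I$ by your case split. In the case $\|\Mpop\|_2 \gtrsim R\log(d/\tau)/k$, Bernstein gives $\|\Memp - \tfrac{1}{k}\sum_i\E[\vec Y_i]\|_2 \lesssim \sqrt{R\|\Mpop\|_2\log(d/\tau)/k}$, which is an \emph{additive} error uniform over all directions and controlled only relative to the \emph{top} eigenvalue of $\Mpop$. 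Since $\|\Mpop\|_2$ can be as large as $\Theta(R)$ (its trace is at most $R+o(1)$, and the mass may concentrate on one direction), the additive error is of order $R\sqrt{\log(d/\tau)/k} \approx \sqrt{R}$ under the hypothesis $k \gg R\log(d/\tau)$. But the directions you must control are exactly the low-variance ones, where you need the error to be $O(R/d)$; with $R = \pnormb \approx d\log(n/\tau)$ in the application, $\sqrt{R} \gg R/d$ whenever $d \gg \log(n/\tau)$, which is precisely the non-base-case regime. So evaluating your inequality on $v \in \cV$ yields $v^\top\Mpop v \lesssim R/d + \sqrt{R}$, which is vacuous.

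The missing idea — and the reason the lemma carries hypothesis \eqref{eq:eigen-lb}, which your argument never invokes — is the paper's whitening step. Because $\E[(z-\meanini)(z-\meanini)^\top] \succeq \vec I/3$ for each $i$, one has $\Mpop \succeq \vec I/3$, so $\Mpop^{-1/2}$ exists with $\|\Mpop^{-1/2}\|_2 = O(1)$. Conjugating the truncated summands by $\Mpop^{-1/2}$ keeps $L = O(R)$ and $\nu = O(kR)$, and Bernstein then shows the whitened empirical matrix is $\succeq \vec I/1.2$, i.e., $\Mpop \preceq 1.2\,\Memp$ — a \emph{multiplicative} comparison valid in every direction, including the low-variance ones. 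That is what lets the $O(R/d)$ bound on $v^\top\Memp v$ transfer to $v^\top\Mpop v$. Without the whitening (or some equivalent direction-dependent deviation bound), your argument does not close.
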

As the first step, we show that the empirical second moment matrix (centered at $\tilde \mu$) of the accepted samples has bounded trace. 
The lemma follows immediately from \Cref{eq:rare-large-norm}.
\begin{lemma}\label{lem:trace_bound}
Let $x_1, \cdots, x_k \in \R^d$ be independent random variables with $\sum_{i=1}^k \Pr [  \| x_i - \rejmean \|_2^2 > R ] \leq \tau$ for some $R>0$ and $\tau \in (0,1)$.
Then with probability at least $1 - \tau$, it holds that 
$$
\tr\lp( \frac{1}{k} \sum_{i=1}^k \lp( x_i - \rejmean \rp) 
\lp( x_i - \rejmean \rp)^\top 
\rp)
\leq R \;.
$$
\end{lemma}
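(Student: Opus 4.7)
The plan is straightforward: reduce the trace bound to a statement about the squared norms of the accepted vectors, then apply a union bound over $i \in [k]$ using the tail assumption.

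First I would rewrite the trace of the empirical second moment matrix in the more transparent form
\[
\tr\!\left(\frac{1}{k}\sum_{i=1}^k (x_i - \tilde\mu)(x_i - \tilde\mu)^\top\right) = \frac{1}{k}\sum_{i=1}^k \tr\!\big((x_i - \tilde\mu)(x_i - \tilde\mu)^\top\big) = \frac{1}{k}\sum_{i=1}^k \|x_i - \tilde\mu\|_2^2,
\]
using linearity of trace together with the identity $\tr(vv^\top) = \|v\|_2^2$. So it suffices to show that, with probability at least $1-\tau$, each $\|x_i - \tilde\mu\|_2^2 \leq R$, since then the average is at most $R$ as well.

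Next I would apply a union bound. Define the event $\mathcal E_i := \{\|x_i - \tilde\mu\|_2^2 > R\}$ and let $\mathcal E := \bigcup_{i=1}^k \mathcal E_i$. By the union bound,
\[
\Pr[\mathcal E] \leq \sum_{i=1}^k \Pr\big[\|x_i - \tilde\mu\|_2^2 > R\big] \leq \tau,
\]
where the last inequality is exactly the hypothesis of the lemma. On the complement event $\mathcal E^c$, which has probability at least $1-\tau$, every $\|x_i - \tilde\mu\|_2^2 \leq R$, and hence the trace bound follows from the displayed identity above. (Note that independence of the $x_i$'s is not needed; the conclusion relies only on the additive tail assumption.) There is no real obstacle here — the proof is just a union bound combined with the trace identity. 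The only subtlety to mention is why we need not invoke a concentration argument: the summed probability assumption is already strong enough that it forces simultaneous boundedness of all $k$ norms with the required probability, which immediately controls the average.
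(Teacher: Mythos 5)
Your proposal is correct and is essentially identical to the paper's proof: the paper also defines the bad event $\{\exists i : \|x_i - \tilde\mu\|_2^2 > R\}$, bounds its probability by $\tau$ via the union bound and the summed-tail hypothesis, and on the complement uses $\tr\bigl(\tfrac{1}{k}\sum_i (x_i-\tilde\mu)(x_i-\tilde\mu)^\top\bigr) = \tfrac{1}{k}\sum_i \|x_i-\tilde\mu\|_2^2 \leq R$. Your observation that independence is not needed is also accurate.
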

\begin{proof}
Let $\cE$ be the event $\{ \exists i \in [k]: \| x_i - \meanini \|_2^2 > R\}$. By the union bound and the assumption $\sum_{i=1}^k \Pr\lp[ 
\| x_i - \meanini \|_2^2 > R \rp] \leq \tau$, we have that $\Pr[\cE] \leq \tau$. Under the complement of $\cE$, we have that
\begin{align*}
    \tr\lp( \frac{1}{k} \sum_{i=1}^k \lp( x_i - \meanini \rp) 
\lp( x_i - \meanini \rp)^\top 
\rp) =  \frac{1}{k}\sum_{i=1}^k \| x_i - \meanini \|_2^2 \leq R\;.
\end{align*}
This concludes the proof of \Cref{lem:trace_bound}.
\end{proof}
Given that the trace of the empirical second moment matrix is small, it follows that there exist at least $d/2$ many eigenvectors whose corresponding eigenvalues are at most $O(R/d)$. Consequently, these eigenvectors span a subspace with low \emph{empirical variance}.
\begin{lemma}[Subspace with Low Empirical Variance]
\label{cor:subspace-existence}
Let $x_1, \cdots, x_k \in \R^d$ be independent random variables with $\sum_{i=1}^k \Pr [  \| x_i - \rejmean \|_2^2 > R ] \leq \tau$ for some $R>0$ and $\tau \in (0,1)$.
Assume that $d$ is an even integer.
Define 
$$
\Memp := \frac{1}{k} \sum_{i \in [k]} (x_i - \meanini) (x_i - \meanini)^\top.
$$
Let $\cV$ be the subspace spanned by the $d/2$ eigenvectors corresponding to the smallest $d/2$ eigenvalues of $\Memp$, 
and $\vec \Pi_{\cV}$ be the orthogonal projection matrix onto $\cV$.
With probability at least $1 - \tau$, it holds that
$$
\vec \Pi_{\cV} \Memp \vec \Pi_{\cV}^\top
\preceq O(R/d)  \; \vec I.
$$
\end{lemma}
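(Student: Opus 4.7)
The plan is to combine the trace bound from \Cref{lem:trace_bound} with a simple averaging/pigeonhole argument on the eigenvalues of $\Memp$. First, I would invoke \Cref{lem:trace_bound} directly: under the hypothesis $\sum_{i=1}^k \Pr[\|x_i-\meanini\|_2^2 > R] \leq \tau$, with probability at least $1-\tau$ we have $\tr(\Memp) \leq R$. I would condition on this event for the remainder of the argument.

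Next, let $\lambda_1 \geq \lambda_2 \geq \cdots \geq \lambda_d \geq 0$ denote the eigenvalues of the PSD matrix $\Memp$. Since $\sum_{i=1}^d \lambda_i = \tr(\Memp) \leq R$, a pigeonhole argument gives $\lambda_{d/2+1} \leq 2R/d$: otherwise, strictly more than $d/2$ eigenvalues would exceed $2R/d$, producing a total strictly exceeding $(d/2)\cdot(2R/d) = R$, a contradiction. Hence all of the bottom $d/2$ eigenvalues $\lambda_{d/2+1},\ldots,\lambda_d$ are bounded above by $2R/d$.

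Finally, let $v_{d/2+1},\ldots,v_d$ be the orthonormal eigenvectors corresponding to these bottom eigenvalues, and let $\cV$ be the subspace they span (as in the construction of \Call{\SearchSubSpace}{\cdot}). Writing $\vec \Pi_\cV = \sum_{j=d/2+1}^d v_j v_j^\top$ and using the spectral decomposition $\Memp = \sum_{j=1}^d \lambda_j v_j v_j^\top$, orthogonality yields
\[
\vec \Pi_\cV \Memp \vec \Pi_\cV^\top = \sum_{j=d/2+1}^d \lambda_j v_j v_j^\top \preceq \frac{2R}{d} \sum_{j=d/2+1}^d v_j v_j^\top \preceq \frac{2R}{d}\, \vec I,
\]
which is exactly the desired $O(R/d) \cdot \vec I$ bound. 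Since the only probabilistic step was the invocation of \Cref{lem:trace_bound}, the overall failure probability is at most $\tau$. I do not anticipate any serious obstacle here: the statement is essentially a clean linear-algebraic consequence of the trace bound, and the only place where one must be slightly careful is the tie-breaking in the eigenvalue ordering, which is harmless because any choice of $d/2$ eigenvectors realizing the smallest $d/2$ eigenvalues works.
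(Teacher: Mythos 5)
Your proof is correct and follows essentially the same route as the paper: invoke \Cref{lem:trace_bound} to get $\tr(\Memp)\le R$ with probability $1-\tau$, then apply a pigeonhole argument on the eigenvalues to conclude $\lambda_{d/2+1}\le 2R/d$ and read off the bound from the spectral decomposition. Your write-up is in fact slightly cleaner than the paper's (whose contradiction step sums over the bottom rather than the top eigenvalues), so there is nothing to fix.
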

\begin{proof}
Suppose that we have the spectral decomposition $\frac{1}{k} \sum_{i=1}^k \lp(x_i - \meanini\rp) \lp(x_i - \meanini\rp)^\top = \sum_{i=1}^d \lambda_i v_i v_i^\top$, where $v_i$ are the eigenvectors and $\lambda_i$ are the corresponding eigenvalues 
with $\lambda_1 \geq \lambda_2 \geq \cdots \geq \lambda_d$.
Assume for the sake of contradiction that $\lambda_{d/2+1} > 2 R/d$.
Then we would immediately have that
\begin{align}
\label{eq:contradict-norm-bound}
\sum_{i \in [n]} \lambda_i \geq \sum_{i= d/2 + 1}^d \lambda_i
> 2 \frac{R}{d}  \frac{d}{2} = R \;.
\end{align}
Yet, by \Cref{lem:trace_bound}, we have that 
$ \sum_{i \in [n]} \lambda_i = \tr \lp( \frac{1}{k} \sum_{i\in [k]} \lp(x_i - \meanini \rp) \lp(x_i - \meanini \rp)^\top \rp) \leq R$, contradicting \Cref{eq:contradict-norm-bound}.
Hence, it follows that $\lambda_{d} \leq \cdots \leq \lambda_{d/2+1} \leq 2 R/d $. Since $\cV$ has been defined to be the subspace spanned by $v_{d}, \cdots, v_{d/2+1}$ the proof of \Cref{cor:subspace-existence} is complete. 
\end{proof}

The above shows that there exists a subspace $\cV$
such that the eigenvalues of the \emph{empirical} moment matrix 
$ \frac{1}{k} \sum_{i=1}^k \lp( x_i - \meanini \rp) \lp( x_i - \meanini \rp)^\top $
are small.
However, in \Cref{lem:low-var-identification} we would like to show a similar bound but for the \emph{population} covariance matrix $\tfrac{1}{k}\sum_{i=1}^k  \tilde {\vec \Sigma}_i$. 
It turns out that one can show that 
the population covariance matrix is  bounded from above by the former empirical moment matrix in Loewner order with high probability. 
The formal statement is given below.
\begin{lemma}[Covariance Concentration]
\label{lem:cov-concentration}
Let $x_1, \cdots, x_k \in \R^d$ be independent random variables that follow the distributions $\{ \mathcal A_i \}_{i=1}^k$ satisfying \Cref{eq:rare-large-norm,eq:norm-small-tail,eq:eigen-lb}.
Further assume that $k \gg R \log(d/\tau)$.
Then with probability at least $1-2\tau$ it holds that
$$
v^\top \left( \frac{1}{k} \sum_{i=1}^k 
\tilde {\vec \Sigma}_i \right) v 
\leq 1.2 \, v^\top \left( \frac{1}{k} \sum_{i=1}^k ( x_i - \meanini )
( x_i - \meanini )^\top \right) v
$$
for every $v\in \R^d$.
\end{lemma}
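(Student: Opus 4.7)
The plan is a truncation argument combined with Matrix Bernstein, in the normalized form needed to produce a relative per-direction bound rather than a raw operator-norm bound. Define $\vec X_i = (x_i - \tilde\mu)(x_i - \tilde\mu)^\top$ together with the truncated pieces $\vec Y_i = \mathbbm 1\{\|x_i - \tilde\mu\|_2^2 \leq R\}\vec X_i$ and $\vec Z_i = \vec X_i - \vec Y_i$. By \eqref{eq:rare-large-norm} and a union bound over $i \in [k]$, with probability at least $1-\tau$ every $\vec Z_i$ vanishes and thus $\tfrac1k\sum_i \vec X_i = \tfrac1k\sum_i \vec Y_i =: \vec M_{\mathrm{emp}}$. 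Henceforth work on this event.

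Set $\vec M_{\mathrm{pop}} := \tfrac1k\sum_i \E[\vec X_i]$ and $\vec M_{\mathrm{pop}}' := \tfrac1k\sum_i \E[\vec Y_i]$. Observe two easy facts: (i) since $\E[\vec X_i] = \tilde{\vec\Sigma}_i + (\tilde\mu_i - \tilde\mu)(\tilde\mu_i - \tilde\mu)^\top$, we have $\tfrac1k\sum_i \tilde{\vec\Sigma}_i \preceq \vec M_{\mathrm{pop}}$; and (ii) by \eqref{eq:norm-small-tail} each $\|\E[\vec Z_i]\|_2 \leq o(1/k)$, so $\|\vec M_{\mathrm{pop}} - \vec M_{\mathrm{pop}}'\|_2 = o(1/k)$, which combined with $\vec M_{\mathrm{pop}} \succeq \vec I/3$ from \eqref{eq:eigen-lb} yields $\vec M_{\mathrm{pop}}' \succeq (1 - o(1/k))\vec M_{\mathrm{pop}}$. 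Thanks to (i) it suffices to show $\vec M_{\mathrm{pop}} \preceq 1.2\,\vec M_{\mathrm{emp}}$, and thanks to (ii) it is enough to control $\vec M_{\mathrm{emp}} - \vec M_{\mathrm{pop}}'$.

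The key step is to normalize before applying Matrix Bernstein. Because \eqref{eq:eigen-lb} guarantees $\vec M_{\mathrm{pop}}^{-1} \preceq 3 \vec I$, the matrices $\vec W_i := \vec M_{\mathrm{pop}}^{-1/2} \vec Y_i \vec M_{\mathrm{pop}}^{-1/2}$ are bounded: $\|\vec W_i\|_2 \leq 3\|\vec Y_i\|_2 \leq 3R$ almost surely, using that $\vec Y_i$ is rank one with mass at most $R$ on its support. For the variance, apply the PSD identity $\vec Y_i^2 \preceq R\,\vec Y_i$ (valid since $\vec Y_i \succeq 0$ and $\|\vec Y_i\|_2 \leq R$) together with $\vec M_{\mathrm{pop}}^{-1} \preceq 3\vec I$ to obtain
\[
\E[\vec W_i^2] \;=\; \E\!\left[\vec M_{\mathrm{pop}}^{-1/2}\vec Y_i \vec M_{\mathrm{pop}}^{-1}\vec Y_i \vec M_{\mathrm{pop}}^{-1/2}\right] \;\preceq\; 3R\,\vec M_{\mathrm{pop}}^{-1/2}\,\E[\vec Y_i]\,\vec M_{\mathrm{pop}}^{-1/2}.
\]
Summing and using $\vec M_{\mathrm{pop}}' \preceq \vec M_{\mathrm{pop}}$ gives $\|\sum_i \E[\vec W_i^2]\|_2 \leq 3Rk$. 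Matrix Bernstein (\Cref{fact:matrixBernstein}) with $L = 3R$ and $\nu \leq 3Rk$, together with the hypothesis $k \gg R\log(d/\tau)$, implies that with probability at least $1-\tau$,
\[
\bigl\|\vec M_{\mathrm{pop}}^{-1/2}(\vec M_{\mathrm{emp}} - \vec M_{\mathrm{pop}}')\vec M_{\mathrm{pop}}^{-1/2}\bigr\|_2 \;\leq\; \tfrac{1}{10}.
\]
Pre- and post-multiplying by $\vec M_{\mathrm{pop}}^{1/2}$ gives $\vec M_{\mathrm{emp}} \succeq \vec M_{\mathrm{pop}}' - \tfrac{1}{10}\vec M_{\mathrm{pop}} \succeq \bigl(1 - \tfrac{1}{10} - o(1/k)\bigr)\vec M_{\mathrm{pop}} \succeq \tfrac{1}{1.2}\vec M_{\mathrm{pop}}$, which combined with (i) yields the claim after a union bound over the two failure events.

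The step I expect to be most delicate is the variance estimate for Matrix Bernstein: controlling $\|\sum_i \E[\vec W_i^2]\|_2$ requires exploiting both $\|\vec Y_i\|_2 \leq R$ (via truncation) and $\vec M_{\mathrm{pop}}' \preceq \vec M_{\mathrm{pop}}$, and the normalization by $\vec M_{\mathrm{pop}}^{-1/2}$ is precisely what upgrades an $\ell_2$ operator-norm deviation bound into the relative per-direction inequality required by the lemma. Everything else (the truncation union bound and the tail control of $\E[\vec Z_i]$) is a direct application of the niceness conditions \eqref{eq:rare-large-norm}--\eqref{eq:eigen-lb}.
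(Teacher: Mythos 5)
Your proposal is correct and follows essentially the same route as the paper's proof: the same truncation via the indicator $\mathbbm 1\{\|x_i-\tilde\mu\|_2^2\le R\}$, the same whitening by $\vec M_{\mathrm{pop}}^{-1/2}$ (justified by $\vec M_{\mathrm{pop}}\succeq \vec I/3$ from \eqref{eq:eigen-lb}), the same use of \eqref{eq:norm-small-tail} to control the truncation bias, and the same application of Matrix Bernstein with $L=O(R)$ and $\nu=O(Rk)$. The only difference is cosmetic (you truncate before whitening rather than after), so no further comment is needed.
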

To relate the two quantities, we will show that (i) the population covariance matrix is always bounded from above by 
$ \frac{1}{k} \sum_{i=1}^k 
\E \lp[ \lp( x_i - \meanini \rp) \lp( x_i - \meanini \rp)^\top \rp] $ in Loewner order, 
and (ii) the empirical moment matrix 
$ \frac{1}{k} \sum_{i=1}^k \lp( x_i - \meanini \rp) \lp( x_i - \meanini \rp)^\top $
concentrates around its expected value in spectral norm with high probability.
The argument is formalized in the following analysis.

\begin{proof}[Proof of \Cref{lem:cov-concentration}]
Let $\tilde \mu_i$ and $\tilde {\vec \Sigma}_i$ be the mean and covariance of $x_i$ respectively, and $\meanini$ be the fixed vector appearing in \Cref{eq:rare-large-norm,eq:norm-small-tail,eq:eigen-lb}.
Then we have that
\begin{align}
&\E \lp[ 
\lp( x_i - \meanini \rp)
\lp( x_i - \meanini \rp)^\top
\rp]     \nonumber \\
&= 
\E \lp[ 
\lp( x_i - \tilde \mu_i + \tilde \mu_i - \meanini \rp)
\lp( x_i - \tilde \mu_i + \tilde \mu_i - \meanini \rp)^\top
\rp]  \nonumber \\
&= 
\E \lp[ 
\lp( x_i - \tilde \mu_i  \rp)
\lp( x_i - \tilde \mu_i  \rp)^\top
\rp]  
+ 
\E \lp[ 
\lp( x_i - \tilde \mu_i  \rp)
\rp]  \lp( \tilde \mu_i - \meanini \rp)^\top
+ 
\lp( \tilde \mu_i - \meanini \rp) \E \lp[ 
\lp( x_i - \tilde \mu_i  \rp)^\top
\rp]
+ \lp( \tilde \mu_i - \meanini \rp) \lp( \tilde \mu_i - \meanini \rp)^\top 
\nonumber \\
&= 
\tilde {\vec \Sigma}_i
+ \lp( \tilde \mu_i - \meanini \rp) \lp( \tilde \mu_i - \meanini \rp)^\top
\succeq \tilde {\vec \Sigma}_i \, ,
\label{eq:var-min-pt}
\end{align}
where in the second equality we use the definition of $\tilde \mu_i$ and $\tilde {\vec \Sigma}_i$, and in the last inequality we use the fact that
$\lp( \tilde \mu_i - \meanini \rp) \lp( \tilde \mu_i - \meanini \rp)^\top$ is a PSD matrix.
For convenience, we use the following notation for the population second moment matrix, centered around $\meanini$:
\begin{align}
\label{eq:def-Mpop}
\Mpop := \frac{1}{k}
\sum_{i=1}^k
\E[ (x_i - \meanini) (x_i - \meanini)^\top ].    
\end{align}
It follows immediately from \Cref{eq:var-min-pt} that it suffices for us to show that with probability at least $1 - \tau$
\begin{align}
\label{eq:op-norm-mul}
 v^\top \Mpop  v
\leq 1.2  v^\top  \lp( \frac{1}{k} \sum_{i=1}^k 
\lp( x_i - \meanini \rp) \lp( x_i - \meanini \rp)^\top \rp)
v  
\end{align}
for all $v \in \R^d$.

By assumption \Cref{eq:eigen-lb}, we have that each $\E \lp[ (x_i - \tilde \mu)( x_i - \tilde \mu )^\top \rp]$ is bounded from below by $\vec I / 3$.
It follows immediately that
$ \Mpop \succeq \vec I / 3.$
Hence, $\Mpop^{-1/2}$ is well defined and satisfies that
\begin{align}
\label{eq:Mpop-lb}
\|\Mpop^{-1/2}  \| = O(1).
\end{align}
Since our goal is to show a multiplicative bound on the concentration of $\Mpop$ in every direction, as stated in \Cref{eq:op-norm-mul}, we can instead consider the \emph{whitened} version of \Cref{eq:op-norm-mul}. 
That is, it suffices to show that, with probability at least $1 - \tau$, the following holds for all  $u \in \R^d$:
\begin{align}
\label{eq:op-norm-mul-whitened}
 u^\top   u
\leq 1.2 
u^\top  \Mpop^{-1/2}\lp( \frac{1}{k} \sum_{i=1}^k 
\lp( x_i - \meanini \rp) \lp( x_i - \meanini \rp)^\top \rp)
\Mpop^{-1/2}
u.        
\end{align}
It is not hard to see that \Cref{eq:op-norm-mul-whitened} immediately implies \Cref{eq:op-norm-mul} simply by setting $u = \Mpop^{1/2} v$.

Consider the random matrices defined as
$$
\vec X_i = 
\Mpop^{-1/2}
(x_i - \meanini) (x_i - \meanini)^\top 
\Mpop^{-1/2}.
$$
By construction, we have that
\begin{align}
\label{eq:Xi-sum}
\E\left[  \frac{1}{k}
\sum_{i=1}^k \vec X_i \right]
= \Mpop^{-1/2} \Mpop \Mpop^{-1/2} = \vec I.
\end{align}
Ideally, we would like to apply the Matrix Bernstein Inequality on $\{ \vec X_i \}_{i=1}^k$ directly to show that $\tfrac{1}{k}\sum_{i \in [k]} \vec X_i 
\succeq  \vec I / 1.2 $ (which would imply \eqref{eq:op-norm-mul-whitened}).
Yet, note that $\vec X_i$ has neither bounded operator norm nor 
is the expectation of $\vec X_i$ equal to $\vec 0$.
To circumvent the issues, we will instead define the random variables
\begin{align*}
g_i = \mathbbm 1 \{ \| x_i - \meanini \|_2^2 \leq R \} \, ,
\vec Y_i = 
g_i \vec X_i  \, , \,
\bar {\vec Y}_i = 
g_i \vec X_i - \E[g_i \vec X_i] \, , \,
\vec Z_i = 
(1 - g_i) \vec X_i.
\end{align*}
We will then analyze the concentration properties of the above ``truncated'' random matrices, and relate them to $\vec X_i$.

We first analyze $\vec Z_i$.
By the union bound and assumption \Cref{eq:rare-large-norm}, we have that
$$ 
\Pr[ \exists i \in [k]: \vec Z_i \neq 0 ] \leq
\sum_{i=1}^k
\Pr[ \vec Z_i \neq 0 ] 
= 
\sum_{i=1}^k \Pr
\lp[ \| x_i- \meanini \|_2^2 \geq R \rp]
\leq \tau.
$$
In other words, we must have
\begin{align}
\label{eq:zero-z}
\vec Z_i = \vec 0 \, \forall i \in [k]
\end{align}
with probability at least $1 - \tau$.

Next we turn our attention to the relationship between $\vec Y_i$ and $\bar {\vec Y}_i$.
By assumption \Cref{eq:norm-small-tail}, we have that
$$
\tr \lp(  \E[ (1 - g_i) \vec X_i ]  \rp)
= \E \lp[  \mathbbm 1 \{ 
\| x_i - \meanini \|_2^2 \geq R
\} \| x_i - \meanini \|_2^2  \rp]
\leq o(1/k).
$$
Thus, it follows that
$$
\frac{1}{k} \sum_{i \in [k]} \E[ g_i \vec X_i ]
\succeq 
\frac{1}{k} \sum_{i \in [k]}  \E[ \vec X_i ]
- o(1) \vec I
= \lp( 1 - o(1) \rp) \vec I \, ,
$$
where the last equality follows from \Cref{eq:Xi-sum}.
Recall that $\bar {\vec Y}_i$ is defined as
$ g_i \vec X_i - \E[g_i \vec X_i] = \vec Y_i  - \E[g_i \vec X_i]$.
Hence, we must have
\begin{align}
\label{eq:Y-barY-relationship}
\frac{1}{k} \sum_{i \in [k]}
\vec Y_i
\succeq
\frac{1}{k} \sum_{i \in [k]}
\bar {\vec Y}_i + ( 1 - o(1) ) \vec I.
\end{align}
Lastly, we analyze the concentration property of $\bar {\vec Y}_i$ by applying the Matrix Bernstein inequality.
In particular, we will show that (i) $\bar {\vec Y}_i$ has mean $\vec 0$, (ii) $\bar {\vec Y}_i$  has its operator norm bounded from above by $R$ almost surely, and (iii) $\E[ \bar {\vec Y}_i^2 ]$ also has its operator norm bounded by $O(R)$.
Claim (i) follows from the definition of $\bar {\vec Y}_i$.
For claim (ii), we note that
$$
\| g_i \vec X_i - \E[ g_i \vec X_i ] \|_2
\leq \max \lp( \| g_i \vec X_i \|_2,  \| \E[ g_i \vec X_i ] \|_2 \rp)
\leq R
$$
by the definition of $g_i$.
We proceed to bound from above the operator norm of $ \bar {\vec Y}_i^2$. 
Note that
\begin{align*}
&\E \lp[ \bar {\vec Y}_i^2 \rp]  = \E \lp[ g_i  
\lp( \Mpop^{-1/2}  \lp( x_i - \meanini \rp)
\lp( x_i - \meanini \rp)^\top
\Mpop^{-1/2} \rp)^2
\rp]
-  \E \lp[ 
g_i  \Mpop^{-1/2}  \lp( x_i - \meanini \rp)
\lp( x_i - \meanini \rp)^\top
\Mpop^{-1/2}
\rp]^2
\\
&\preceq
\E \lp[  
g_i
\Mpop^{-1/2}  \lp( x_i - \meanini \rp)
\lp( x_i - \meanini \rp)^\top
\Mpop^{-1} \lp( x_i - \meanini \rp)
\lp( x_i - \meanini \rp)^\top \Mpop^{-1/2}
\rp] \\
&\preceq 
O(R)
\E_{x_i \sim \normal(\tilde \mu_i, \tilde {\vec \Sigma}_i)} \lp[  
\Mpop^{-1/2}  \lp( x_i - \meanini \rp)
\lp( x_i - \meanini \rp)^\top \Mpop^{-1/2}
\rp] \;,
\end{align*}
where the first inequality is true as $g_i$ is an indicator variable, and so $g_i  \Mpop^{-1/2}  \lp( x_i - \meanini \rp)
\lp( x_i - \meanini \rp)^\top
\Mpop^{-1/2}$ is a PSD matrix, the second inequality is true 
due to \Cref{eq:Mpop-lb}
($\|\Mpop^{-1} \|_2 \leq O(1)$), which implies that 
$g_i \lp( x_i - \meanini \rp)^\top
\Mpop^{-1} \lp( x_i - \meanini \rp) \leq 
\mathbbm 1 \{ \| x_i - \meanini  \|_2^2 \leq R \} 
\| x_i - \meanini  \|_2^2
\|\Mpop^{-1} \|_2
= O(R)$ almost surely.
It then follows that
\begin{align*}
\frac{1}{k} \sum_{i=1}^k \E \lp[ \bar {\vec Y_i}^2 \rp]    
&\preceq O(R)
\Mpop^{-1/2}
\frac{1}{k} \sum_{i=1}^k
\E_{x_i \sim \normal(\tilde \mu_i, \tilde {\vec \Sigma}_i)} \lp[ 
  \lp( x_i - \meanini \rp)
\lp( x_i - \meanini \rp)^\top 
\rp] 
\Mpop^{-1/2}
 \\
&= O(R)
\Mpop^{-1/2}
\Mpop 
\Mpop^{-1/2}
 = O(R) \vec I \, ,
\end{align*}
where in the last line we use the definition of $\Mpop$ given in \Cref{eq:def-Mpop}.

Hence, we can apply the Matrix Bernstein inequality (\Cref{fact:matrixBernstein}) with $t=(1-1/1.1)$, $\nu= O(k R)$ and $L=R$, and obtain that
$$
\Pr\lp[ \lp \|  
\frac{1}{k} \sum_{i=1}^k \bar {\vec Y}_i 
\rp \|_2  \geq 1-\frac{1}{1.1} \rp]
\leq d \exp \lp( - O \left( 
\frac{ k  }{  R      }
\right)\rp) \, .
$$
The right hand side is at most $\tau$ as long as
$k > C R \log(d/\tau)$ for a sufficiently large constant $C$.
In other words, we have that
\begin{align}
\label{eq:barY-concentration}
\lp \| \frac{1}{k} \sum_{i=1}^k \bar {\vec Y}_i \rp \|_2
\leq 1 - 1/1.1    
\end{align}
with probability at least $1 - \tau$.

We are now ready to combine things together to show \Cref{eq:op-norm-mul-whitened}.
Combining \Cref{eq:barY-concentration} and \Cref{eq:Y-barY-relationship} gives that
\begin{align}
\label{eq:Yi-operator-bound}
\frac{1}{k}
\sum_{i \in [k]} \vec Y_i
\succeq \lp( 1 - o(1) - 1 + 1/1.1 \rp) \vec I
\succeq \frac{1}{1.2} \vec I
\end{align}
with probability at least $1 - \tau$.
Note that $\vec X_i = \vec Z_i + \vec Y_i$ by construction.
By the union bound, \Cref{eq:Yi-operator-bound,eq:zero-z} hold at the same time with probability at least $1 - 2\tau$.
Conditioned on them, we thus have
$$
\frac{1}{k} 
\sum_{i \in [k]}
\vec X_i
= 
\frac{1}{k} 
\sum_{i \in [k]}
\vec Y_i \succeq \frac{1}{1.2} \vec I \, ,
$$
which immediately implies \Cref{eq:op-norm-mul-whitened}.
This concludes the proof of \Cref{lem:cov-concentration}.
\end{proof}

We are now ready to conclude the proof of \Cref{lem:low-var-identification}.
\begin{proof}[Proof of \Cref{lem:low-var-identification}]
Combining the assumption \Cref{eq:rare-large-norm} ($\sum_{i=1}^k \Pr \lp[ \| x_i - \tilde \mu \|_2^2 \geq R \rp] \leq \tau $) with  \Cref{cor:subspace-existence} and \Cref{lem:trace_bound} gives that with probability at least $1 - \tau$ the following two inequalities hold:
\begin{align}
\label{eq:low-var-projection}
\vec \Pi_{\cV} \frac{1}{k} \sum_{i=1}^k ( x_i - \meanini )
( x_i - \meanini )^\top \vec \Pi_{\cV}^\top
\preceq O( R/d )  \; \vec I \,  \;,
\end{align}
where $\cV$ is the subspace spanned by the rows of $\RowP_{\mathrm{low}}$ that \SearchSubSpace({$\rejmean, x_1, \cdots, x_k$}) from \Cref{alg:recursive} returns.
On the other hand, combining the assumptions from
\Cref{eq:rare-large-norm,eq:norm-small-tail,eq:eigen-lb} and $k \gg R \log(d/\tau)$
with 
\Cref{lem:cov-concentration} gives that with probability at least $1 - \tau$,
\begin{align}
\label{eq:cov-concentration}
    v^\top \frac{1}{k} \sum_{i=1}^k \tilde {\vec \Sigma}_i v \leq 1.1 v^\top \frac{1}{k} \sum_{i=1}^k ( x_i - \meanini )
( x_i - \meanini )^\top v
\end{align}
for any unit vector $v \in \R^d$.
Combining \Cref{eq:low-var-projection,eq:cov-concentration} then gives that
\begin{align*}
\max_{v\in \cV : \|v\|_2 = 1} v^\top  \frac{1}{k} \sum_{i=1}^k \tilde {\vec \Sigma}_i v  
\lesssim  R/d.
\end{align*}
This concludes the proof of \Cref{lem:low-var-identification}.
\end{proof}

\subsection{Bias and Concentration of Mean within the Low-Variance Subspace}\label{sec:bias}
The rejection sampling procedure from \Cref{line:rejection_sampling} in \Cref{alg:partial-estimate}
creates bias on the expectation of the samples.
Specifically, if the original sample $x_i $ follows the distribution $\normal(\mu, \vec \Sigma_i)$, then by \Cref{lem:product}, the distribution of $x_i$ conditioned on acceptance 
(assuming $\tilde \mu$ is the rejection sampling center) will follow a different Gaussian $\normal( \tilde \mu_i, \tilde {\vec \Sigma}_i)$, where
\begin{align}
\label{eq:new-distr-def}
    \tilde \mu_i = \vec {\tilde \Sigma}_i \lp( 2\meanini / d + \vec \Sigma_i^{-1} \mu \rp) 
    \,\,\text{and}\,\,
     \vec {\tilde \Sigma}_i = \lp( \vec \Sigma_i^{-1} + \tfrac{2}{d}\vec I \rp)^{-1} \;.
\end{align}
Fortunately, we can show that the bias averaged over all surviving samples cannot be too large when we restrict the vectors to the low-variance subspace $\cV$ identified by the algorithm.
As the first step, we derive an explicit formula for the averaged mean vector.
\begin{lemma}[Mean Bias Form]
\label{lem:error_on_low_var}
Let $x_1, \ldots, x_k \in \R^d$ be independent samples following the distributions $\normal( \tilde \mu_i, \tilde {\vec \Sigma}_i)$, where 
$\tilde \mu_i, \tilde {\vec \Sigma}_i$ are functions of $\tilde \mu, \mu
\in \R^d, \vec \Sigma_i \in \R^{d \times d}$  defined as in \Cref{eq:new-distr-def}.
Then it holds that 
$$
\frac{1}{k}
\sum_{i=1}^k \tilde \mu_i
- \mu
=  \frac{2}{d} \left(\frac{1}{k} \sum_{i=1}^k \vec {\tilde \Sigma}_i \right)\lp( 
\meanini - \mu \rp) .
$$
\end{lemma}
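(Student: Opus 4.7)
The proof is purely algebraic manipulation, and since the claim is linear in $i$ it suffices to show the pointwise identity
\begin{equation*}
\tilde \mu_i - \mu \;=\; \frac{2}{d}\,\tilde{\vec\Sigma}_i\,(\tilde\mu - \mu)
\end{equation*}
for each $i$; then averaging over $i \in [k]$ yields the stated formula. So the plan is to derive this identity directly from the definitions in \eqref{eq:new-distr-def}.

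The key observation is that $\tilde{\vec\Sigma}_i^{-1} = \vec\Sigma_i^{-1} + \tfrac{2}{d}\vec I$, which rearranges to $\vec\Sigma_i^{-1} = \tilde{\vec\Sigma}_i^{-1} - \tfrac{2}{d}\vec I$. Left-multiplying by $\tilde{\vec\Sigma}_i$ yields the crucial identity
\begin{equation*}
\tilde{\vec\Sigma}_i \vec\Sigma_i^{-1} \;=\; \vec I - \tfrac{2}{d}\tilde{\vec\Sigma}_i.
\end{equation*}
Plugging the definition $\tilde\mu_i = \tilde{\vec\Sigma}_i\bigl(\tfrac{2}{d}\tilde\mu + \vec\Sigma_i^{-1}\mu\bigr)$ into $\tilde\mu_i - \mu$ and using the identity above to simplify $\tilde{\vec\Sigma}_i \vec\Sigma_i^{-1}\mu$ gives
\begin{equation*}
\tilde\mu_i - \mu \;=\; \tfrac{2}{d}\tilde{\vec\Sigma}_i\tilde\mu + \bigl(\vec I - \tfrac{2}{d}\tilde{\vec\Sigma}_i\bigr)\mu - \mu \;=\; \tfrac{2}{d}\tilde{\vec\Sigma}_i(\tilde\mu - \mu),
\end{equation*}
exactly as desired. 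Averaging over $i$ and factoring out $\tfrac{2}{d}$ and $(\tilde\mu-\mu)$ completes the proof.

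There is no real obstacle here: the lemma is a clean algebraic identity, and the only substantive step is the rewrite $\vec\Sigma_i^{-1} = \tilde{\vec\Sigma}_i^{-1} - \tfrac{2}{d}\vec I$, which is immediate from the definition of $\tilde{\vec\Sigma}_i$. No randomness or probabilistic argument is needed despite the $x_i$ appearing in the setup; the samples themselves play no role in the statement because both sides of the identity are deterministic functions of $\mu, \tilde\mu, \vec\Sigma_1,\ldots,\vec\Sigma_k$.
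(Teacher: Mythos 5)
Your proof is correct and is essentially the same computation as the paper's: the paper writes $\mu = \tilde{\vec\Sigma}_i\tilde{\vec\Sigma}_i^{-1}\mu$ and expands $\tilde{\vec\Sigma}_i^{-1} = \vec\Sigma_i^{-1} + \tfrac{2}{d}\vec I$ inside the averaged sum, which is exactly your identity $\tilde{\vec\Sigma}_i\vec\Sigma_i^{-1} = \vec I - \tfrac{2}{d}\tilde{\vec\Sigma}_i$ applied term by term. Your observation that the samples play no role and the statement is a deterministic identity is also accurate.
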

\begin{proof}
Using the expressions from \eqref{eq:new-distr-def} and direct manipulations, we can write
\begin{align*}
\frac{1}{k} \sum_{i=1}^k \tilde \mu_i - \mu
&= \frac{1}{k}  \sum_{i=1}^k \tilde {\vec \Sigma}_i\left(  2\tilde \mu/d + \vec \Sigma_i^{-1} \mu - \tilde {\vec \Sigma}_i^{-1} \mu \right) \\
&= { \frac{1}{k}\sum_{i=1}^k
\vec {\tilde \Sigma}_i
\lp( 2\meanini / d + \vec \Sigma_i^{-1} \mu  
- \vec \Sigma_i^{-1} \mu - 2\mu / d \rp) 
} \\
&= 
\frac{2}{k} \sum_{i=1}^k \tilde {\vec \Sigma}_i \frac{\lp( \meanini - \mu \rp)}{d} \;.
\end{align*}
\end{proof}

As a consequence of \Cref{lem:error_on_low_var}, if it were the case that $\|  \frac{1}{k} \sum_{i=1}^k \tilde {\vec \Sigma}_i \| \ll d$, one could show that the $\ell_2$ norm of the bias vector would be significantly smaller than the distance between $\mu$ and the estimate $\tilde \mu$ that we start with.
While the operator norm of $\frac{1}{k} \sum_{i=1}^k \tilde {\vec \Sigma}_i$ can in general be quite large, \Cref{lem:low-var-identification} provides a routine for us to search for {a} subspace where the spectral norm of 
$\frac{1}{k} \sum_{i=1}^k \tilde {\vec \Sigma}_i$ is bounded. Inside that small variance subspace, we will use the 
 lemma below to bound from above the norm of the bias vector.

\begin{lemma}
[Bias Bound within Subspace]
\label{lem:mean-projection-bound}
Let $x_1, \ldots, x_k \in \R^d$ be independent samples following the distributions $\normal( \tilde \mu_i, \tilde {\vec \Sigma}_i)$, where 
$\tilde \mu_i, \tilde {\vec \Sigma}_i$ are functions of $\tilde \mu, \mu
\in \R^d, \vec \Sigma_i \in \R^{d \times d}$  defined as in \Cref{eq:new-distr-def}.
Let $\cV$ be a subspace of $\R^d$ and $\vec \Pi_{\cV} \in \R^{d \times d}$ be the orthogonal projector onto $\cV$.
Then it holds that 
$$
\lp\| 
\vec \Pi_{\cV}
\lp( 
\frac{1}{k}
\sum_{i=1}^k \E \lp[ x_i \rp]
- \mu   \rp) \rp \|_2 \lesssim  \frac{1}{d} \sqrt{\left\| \frac{1}{k}\sum_{i=1}^k  \tilde {\vec \Sigma}_i \right\|_2 \left(\max_{v \in \cV : \|v\|_2=1} v^\top \frac{1}{k} \sum_{i=1}^k \tilde {\vec \Sigma}_i v \right)} \;  \lp \|   \meanini - \mu    \rp \|_2.
$$
\end{lemma}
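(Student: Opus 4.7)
\begin{proofsketch}
The plan is to reduce the statement to a single application of the Cauchy--Schwarz inequality with respect to the inner product induced by $\bar{\vec \Sigma} := \tfrac{1}{k} \sum_{i=1}^k \vec{\tilde \Sigma}_i$. The first step is to invoke \Cref{lem:error_on_low_var} (Mean Bias Form), which gives
\[
\frac{1}{k} \sum_{i=1}^k \E[x_i] - \mu
= \frac{2}{d} \, \bar{\vec \Sigma} \, ( \meanini - \mu ),
\]
so that, writing $w := \meanini - \mu$, the quantity of interest becomes $(2/d) \lp \| \vec \Pi_{\cV} \bar{\vec \Sigma} w \rp \|_2$. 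This absorbs the $1/d$ factor appearing in the desired bound, and reduces the problem to showing
\[
\lp \| \vec \Pi_{\cV} \bar{\vec \Sigma} w \rp \|_2
\lesssim
\sqrt{ \lp \| \bar{\vec \Sigma} \rp \|_2 \; \max_{v \in \cV,\, \|v\|_2 = 1} v^\top \bar{\vec \Sigma} v } \cdot \|w\|_2.
\]

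To prove this inequality, I would use the variational characterization $\| \vec \Pi_{\cV} \bar{\vec \Sigma} w \|_2 = \sup_{v \in \cV,\, \|v\|_2 = 1} v^\top \bar{\vec \Sigma} w$ and then split $\bar{\vec \Sigma}$ into its symmetric square root, writing $v^\top \bar{\vec \Sigma} w = \langle \bar{\vec \Sigma}^{1/2} v, \, \bar{\vec \Sigma}^{1/2} w \rangle$. The matrix $\bar{\vec \Sigma}^{1/2}$ is well defined because $\bar{\vec \Sigma}$ is PSD (as a sum of covariance matrices). Applying Cauchy--Schwarz in $\R^d$ gives
\[
v^\top \bar{\vec \Sigma} w
\leq \lp \| \bar{\vec \Sigma}^{1/2} v \rp \|_2 \; \lp \| \bar{\vec \Sigma}^{1/2} w \rp \|_2
= \sqrt{ v^\top \bar{\vec \Sigma} v } \;\; \sqrt{ w^\top \bar{\vec \Sigma} w }.
\]

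It then remains to bound each factor separately. For the $v$-factor, since $v$ ranges over unit vectors in $\cV$, the supremum is exactly $\max_{v \in \cV,\, \|v\|_2 = 1} v^\top \bar{\vec \Sigma} v$. For the $w$-factor, I bound uniformly by the operator norm: $w^\top \bar{\vec \Sigma} w \leq \|\bar{\vec \Sigma}\|_2 \, \|w\|_2^2$. Putting these together and plugging back the factor of $2/d$ completes the proof.

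Essentially there is no genuine obstacle here, because the previous lemma does all the heavy algebraic work of identifying the bias direction $\bar{\vec \Sigma} w$. The only conceptual point is to apply Cauchy--Schwarz in the correct geometry (the one induced by $\bar{\vec \Sigma}^{1/2}$), which is what decouples the ``direction of interest'' $v \in \cV$ from the ``input error direction'' $w$, giving the geometric mean of the two relevant quadratic forms of $\bar{\vec \Sigma}$ rather than a single looser bound such as the operator norm.
\end{proofsketch}
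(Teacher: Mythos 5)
Your proposal is correct and follows essentially the same route as the paper: both invoke the Mean Bias Form lemma to reduce to bounding $\tfrac{2}{d}\|\vec \Pi_{\cV}\bar{\vec\Sigma}(\meanini-\mu)\|_2$, then apply Cauchy--Schwarz through the symmetric square root $\bar{\vec\Sigma}^{1/2}$ to decouple the subspace direction from the error direction, bounding one factor by the restricted quadratic form and the other by the operator norm. No gaps.
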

\begin{proof}
    For convenience
we define $
\Sigmavg = 
\frac{1}{k} \sum_{i=1}^k \tilde {\vec \Sigma}_i 
$.

By \Cref{lem:error_on_low_var}, we have that
\begin{align}
\vec \Pi_{\cV}
\lp( 
\frac{1}{k}
\sum_{i=1}^k \E \lp[ x_i \rp]
- \mu   \rp)
= \frac{2}{d}\vec \Pi_{\cV}
\Sigmavg \lp( \meanini - \mean \rp) . \label{eq:normineq}
\end{align}
{Consider an arbitrary} $v \in \vec \Pi_\cV$. We have that
\begin{align}
\label{eq:bound-via-op-norm}
\left| v^\top \vec \Pi_{\cV}\lp( \frac{1}{k}\sum_{i=1}^k \E \lp[ x_i \rp]- \mu   \rp)  \right|
&= \frac{2}{d} \left|v^\top \vec \Pi_{\cV}\Sigmavg ( \meanini - \mu) \right| \tag{using \Cref{eq:normineq}}\\
&= \frac{2}{d} \left|v^\top  \Sigmavg ( \meanini - \mu) \right| \tag{since $v \in \vec \Pi_\cV$}\\
&= \frac{2}{d} \left| \lp( \Sigmavg^{1/2} v \rp)^\top  \Sigmavg^{1/2} ( \meanini - \mu) \right| \notag \\
&\leq \frac{2}{d} \sqrt{ v^\top \Sigmavg v  }\sqrt{( \meanini - \mu)^\top \Sigmavg ( \meanini - \mu)  }  \tag{Cauchy–Schwarz inequality}\\
&\leq \frac{2}{d} \sqrt{ v^\top \Sigmavg v  } \sqrt{\|\Sigmavg\|_2 }    \| \meanini - \mu \|_2 \;, \label{eq:normbound1}
\end{align}
where the penultimate line follows by the fact that $\Sigmavg$ is PSD matrix combined with the Cauchy-Schwarz inequality.
We conclude the proof by taking the maximum over $v \in \cV$ on both sides:
\begin{align*}
    \lp\| \vec \Pi_{\cV}\lp( \frac{1}{k}\sum_{i=1}^k \E \lp[ x_i \rp]- \mu   \rp) \rp \|_2
    &\leq \max_{v \in \cV : \|v\|_2=1} v^\top \vec \Pi_{\cV}\lp( \frac{1}{k}\sum_{i=1}^k \E \lp[ x_i \rp]- \mu   \rp) \\
    &\leq \frac{2}{d}  \sqrt{ \max_{v \in \cV : \|v\|_2=1} v^\top \Sigmavg v  } \sqrt{\|\Sigmavg\|_2 }  \| \meanini - \mu \|_2 \;,
\end{align*}
where the first line is follows by the variational characterization of $\ell_2$-norm, and the second line follows by \Cref{eq:normbound1}.
\end{proof}

The above concerns the population mean of the accepted samples. We next turn our attention to the empirical mean of the accepted samples. 
Since they are independent Gaussian variables, it follows from standard concentration that the deviation of the empirical mean from its expected value along {any} direction $v$ is proportional to the variance along that direction.

\begin{lemma}[Mean Concentration]
\label{lem:overall-mean-concentration}
Let $x_1, \cdots, x_k \in \R^d$ be independent random variables distributed as $x_i \sim \normal( \tilde{\mu}_i, \vec {\tilde\Sigma}_i )$.
Then, with probability at least $1 - \tau$, it holds that
$$
 v^\top \lp( \frac{1}{k} \sum_{i=1}^k x_i - \frac{1}{k} \sum_{i=1}^k \tilde{\mu}_i \rp)
\lesssim 
\sqrt{\frac{d + \log(1/\tau)}{k}}  \sqrt{ \frac{1}{k} \sum_{i=1}^k  v^\top    \vec {\tilde\Sigma}_i v }\;,
$$
for all unit vectors $v \in \R^d$.
\end{lemma}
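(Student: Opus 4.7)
\begin{proofsketch}
The plan is to exploit the joint Gaussianity of $x_1,\ldots,x_k$ and reduce the ``for all $v$'' statement to a single $\chi^2$ tail bound via whitening, rather than paying for an explicit $\varepsilon$-net on the sphere.

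First, let $Y := \tfrac{1}{k}\sum_{i=1}^k(x_i - \tilde\mu_i)$ and $\vec M := \tfrac{1}{k}\sum_{i=1}^k \vec{\tilde\Sigma}_i$. Since the $x_i$'s are independent Gaussians, $Y$ is a centered Gaussian in $\R^d$ with covariance $\vec M / k$, and the right-hand side of the bound can be rewritten as $\sqrt{(d+\log(1/\tau))/k}\cdot\sqrt{v^\top \vec M v}$. Assuming for now that $\vec M$ is invertible, I would define the whitened vector $Z := \sqrt{k}\,\vec M^{-1/2} Y$, which is a standard Gaussian in $\R^d$, so $\|Z\|_2^2 \sim \chi_d^2$.

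Next, for any $v \in \R^d$, Cauchy--Schwarz gives
\begin{align*}
v^\top Y \;=\; \langle \vec M^{1/2} v,\, \vec M^{-1/2} Y\rangle \;\le\; \sqrt{v^\top \vec M v}\cdot \|\vec M^{-1/2} Y\|_2 \;=\; \sqrt{v^\top \vec M v}\cdot \frac{\|Z\|_2}{\sqrt{k}}.
\end{align*}
Now I would invoke the standard Laurent--Massart $\chi^2$ concentration inequality, which yields $\|Z\|_2^2 \le d + 2\sqrt{d\log(1/\tau)} + 2\log(1/\tau) \lesssim d + \log(1/\tau)$ with probability at least $1-\tau$. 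Plugging this into the displayed Cauchy--Schwarz bound above finishes the proof, since the resulting inequality holds simultaneously for every unit $v$ on the single high-probability event $\{\|Z\|_2^2 \lesssim d+\log(1/\tau)\}$.

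The only technicality is handling the possibility that $\vec M$ is singular. In that case, any $v$ in the kernel of $\vec M$ satisfies $v^\top Y = 0$ almost surely (the variance of $v^\top Y$ is $v^\top \vec M v/k = 0$), so the bound is trivially true in those directions; for the remaining directions one applies the argument above to the restriction of $\vec M$ to its range (or, equivalently, replaces $\vec M^{-1/2}$ by its Moore--Penrose pseudoinverse, replacing $d$ by $\rank(\vec M) \le d$). I do not anticipate any real obstacle here: the whitening trick cleanly converts a supremum-over-directions question into a single norm concentration for a standard Gaussian, which is the cheapest way to pick up the additive $d$ in the exponent without having to union-bound over a net.
\end{proofsketch}
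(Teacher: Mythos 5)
Your proposal is correct and follows essentially the same route as the paper's proof: whiten by $\vec M^{-1/2}$ (the paper's $\Sigmavg^{-1/2}$), observe the whitened average is $\cN(0,\vec I/k)$, bound its norm by standard Gaussian concentration, and recover the per-direction bound by substituting $u=\vec M^{1/2}v/\sqrt{v^\top\vec M v}$ — which is exactly your Cauchy--Schwarz step. Your explicit treatment of the singular case via the pseudoinverse is a minor addition the paper omits (elsewhere it ensures $\vec{\tilde\Sigma}_i\succeq\vec I/3$, so invertibility holds in the application).
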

\begin{proof}
For convenience, we denote $\Sigmavg := \frac{1}{k} \sum_{i=1}^k   \vec {\tilde\Sigma}_i  $.
The random variable $
\Sigmavg^{-1/2} \frac{1}{k} \sum_{i=1}^k
\lp( x_i - \tilde \mu_i \rp)$ follows the distribution $\normal(0, \vec I / k )$.
Hence,  the standard concentration bound for spherical Gaussian vectors yields that
\begin{align*}
\Pr\lp[
\max_{u \in \R^d : \|u\|_2=1} u^\top \Sigmavg^{-1/2} \frac{1}{k}
\sum_{i=1}^k
 \lp( x_i - \tilde{\mu}_i \rp) 
> C 
\sqrt{  \frac{d + \log(1/\tau)}{k}  }
\rp]
\leq \tau \, ,
\end{align*}
where $C$ is some sufficiently large absolute constant $C$.
Plugging in the unit vector
$u = \Sigmavg^{1/2} v / \sqrt{ v^\top \Sigmavg v }$
then gives that
\begin{align*}
\max_{v \in \R^d : \|v\|_2=1}
v^\top  \frac{1}{k}
\sum_{i=1}^k
 \lp( x_i - \tilde{\mu}_i \rp)
 \leq  C \sqrt{  \frac{d + \log(1/\tau)}{k}  } \; \sqrt{ v^\top \Sigmavg v } 
\end{align*}
with probability $1-\tau$.
This concludes the proof of \Cref{lem:overall-mean-concentration}.
\end{proof}

\subsection{Mean Estimation Improvement}\label{sec:mean_improvement}
We are now ready to prove the main result of this section, demonstrating that \textsc{PartialEstimate} in \Cref{alg:recursive} makes progress on estimating the mean within some low-variance subspace with high probability.
{The result is restated below:}

\ERRORIMPROVEMENT*
\begin{proof}
Let $S_{\accept}$ be the set of indices that the rejection sampling of  \Cref{line:rejection_sampling} of \Cref{alg:recursive} accepts (i.e., $S_{\accept}$ includes all the $i$'s for which $b_i=1$ in \Cref{line:rejection_sampling}).

The vector $\meanini$ is assumed to satisfy $\|\meanini - \mu\|_2 \lesssim \sqrt{d}$. This together with the assumption $n \gg \tfrac{d}{\alpha} \log(\tfrac{d}{\tau})\log(\tfrac{d}{\alpha \tau})$ implies that \Cref{cor:S-niceness} can be applied. The application yields that the set $S_{\accept}$ is nice ($S \in \Snice$ with $\Snice$ defined in \Cref{eq:event_calibration}) with probability at least $1 - \tau$.

Now let us condition on the event $S_{\accept} = S$ for some $S \in \Snice$, and re-label the samples $\{x_i: i \in S\}$ to $\{x_1,\ldots,x_k\}$ (where $k=|S|$). 
These accepted samples are still independent and, by \Cref{lem:product}, the  distribution of $x_i$ conditioned on acceptance is $x_i \sim \cN(\tilde \mu_i, \tilde{\vec \Sigma}_i)$ with $\tilde \mu_i, \tilde{\vec \Sigma}_i$ being a function of $\mu, \meanini, \vec \Sigma_i$ defined as in \Cref{eq:new-distr-def}.
Since $S \in \Snice$, 
we should have that $k \gg d \log(n d/\tau)\log(d/\tau)$ and the distributions $\normal(\tilde \mu_i, \tilde {\vec \Sigma}_i)$ should satisfy the properties from
\Cref{eq:rare-large-norm,eq:norm-small-tail,eq:eigen-lb} with $R = \normb$.
\Cref{lem:low-var-identification} is thus applicable with $R= \normb$ and the following holds with probability at least~$1 - \tau$:
\begin{align}
\label{eq:low-var-success}
\max_{v \in \cV : \|v\|_2=1} v^\top \frac{1}{k} \sum_{i \in [k]} \tilde {\vec \Sigma}_i v \lesssim    \varb  \;,
\end{align}
where $\cV$ is the vector space spanned by the rows of the matrix  $\RowP_{\mathrm{low}}$ produced by \Call{ParitalEstimate}{$\meanini,x_1,\ldots,x_n$}.
We also note that 
\begin{align}\label{eq:spectral-bound}
    \left\| \frac{1}{k}\sum_{i \in [k]} \tilde {\vec \Sigma}_i  \right\|_2 \lesssim d
\end{align}
which follows trivially by the definition of $\tilde {\vec \Sigma}_i$ in \Cref{eq:new-distr-def}.

By \Cref{lem:overall-mean-concentration}, with probability at least $1 - \tau$, it holds that
\begin{align}
\label{eq:mean-concentration-success}    
 v^\top \lp( \frac{1}{k} \sum_{i \in [k]}  x_i - \frac{1}{k} \sum_{i \in [k]} \tilde{\mu}_i \rp)
\lesssim \sqrt{\frac{d + \log(1/\tau)}{k}}  \sqrt{ \frac{1}{k} \sum_{i \in [k]}  v^\top    \vec {\tilde\Sigma}_i v }\;,
\end{align}
for all unit vectors $v \in \R^d$.
By the union bound, \Cref{eq:low-var-success,eq:mean-concentration-success} hold simultaneously with probability at least $1 - 2\tau$.
Pick some $v \in \cV$.
Using \Cref{eq:low-var-success,eq:mean-concentration-success} as well as \Cref{lem:error_on_low_var,lem:mean-projection-bound}, we have that
\begin{align*}
\lp | v^\top \lp(   \mul -   \mu \rp) \rp|
&\leq
\lp| v^\top \lp(  \mul - \frac{1}{k}\sum_{i \in [k]} \tilde \mu_i \rp) \rp|
+  \lp| v^\top \lp( \frac{1}{k}\sum_{i \in [k]} \tilde \mu_i -  \mu \rp) \rp| \tag{the triangle inequality}\\
&= 
\lp| v^\top \lp(  \mul - \frac{1}{k}\sum_{i \in [k]} \tilde \mu_i \rp) \rp|
+ \lp| v^\top \lp( \frac{1}{k} \sum_{i \in [k]} \vec {\tilde \Sigma}_i \rp) \lp( 
\meanini - \mu \rp) \frac{2}{d} \rp| \tag{using \Cref{lem:error_on_low_var}}\\
&\leq 
\lp| v^\top \lp( \mul {-} \frac{1}{k}\sum_{i \in [k]} \tilde \mu_i \rp) \rp|
{+} O\left(\frac{1}{d} \sqrt{\max_{v \in \cV : \|v\|=1} v^\top \frac{1}{k}\sum_{i=1}^k \vec {\tilde \Sigma}_i v  } \sqrt{\left\| \frac{1}{k}\sum_{i=1}^k \vec {\tilde \Sigma}_i\right\|_2}   \right) \|  \meanini - \mean \|_2 \tag{by  \Cref{lem:mean-projection-bound} }\\
&\leq 
\lp| v^\top \lp( \mul {-} \frac{1}{k}\sum_{i \in [k]} \tilde \mu_i \rp) \rp|
{+} O\left(\sqrt{\frac{\varb}{d}}\right) \|  \meanini - \mean \|_2 \tag{by \Cref{eq:low-var-success} and \Cref{eq:spectral-bound}}\\
&\leq 
\lp| v^\top \lp( \mul - \frac{1}{k}\sum_{i \in [k]} \tilde \mu_i \rp) \rp|
+  \frac{1}{\kappa}\|  \meanini - \mean \|_2  \tag{by assumption $d \gg \kappa^2 \varb$}\\
&= 
\lp|   \frac{1}{k}\sum_{i \in [k]} v^\top \lp( x_i - \tilde \mu_i \rp)  \rp|
+  \frac{1}{\kappa} \|  \meanini - \mean \|_2  
\tag{definition of $\mul$ in \Cref{line:mu1} and $v \in \cV$}
\\
&\leq 
O \left( \; \sqrt{ \frac{d+\log(1/\tau)}{k} } \right)
\; \sqrt{ \frac{1}{k} \sum_{i \in [k]} v^\top \tilde {\vec \Sigma}_i v }
+  \frac{1}{\kappa}\|  \meanini - \mean \|_2  \tag{using \eqref{eq:mean-concentration-success}}\\
&\leq 
O\left( \sqrt{\frac{d+\log(1/\tau)}{\alpha n}} \right) \; \sqrt{ \frac{1}{k} \sum_{i \in [k]} v^\top \tilde {\vec \Sigma}_i v }
+  \frac{1}{\kappa}\|  \meanini - \mean \|_2 \tag{$k=\Omega(\alpha n)$ since $S \in \Snice$}\\
&\leq  O\left( \sqrt{\frac{d+\log(1/\tau)}{\alpha n}} \sqrt{\log(nd/\tau)} \right)  +   \frac{1}{\kappa}\|  \meanini - \mean \|_2.
\tag{using \eqref{eq:low-var-success}}
\end{align*}
This concludes the proof of \Cref{lem:low_var_error_impr}.

\end{proof}

\section{Proof of \texorpdfstring{\Cref{thm:main}}{ref{thm:main}}}\label{sec:proof_of_thm}

In this section we combine the lemmata from the previous sections together with an inductive argument to conclude the proof of \Cref{thm:main}. 
We start by analyzing a single iteration of the for loop of \Cref{line:stages} of \Cref{alg:mean_estimation}. 
In particular, the lemma below bounds from above the estimation error of the final output after a full recursive call to \textsc{RecursiveEstimate}.
\begin{lemma}[Accumulation of error terms]\label{lem:single_stage_analysis}
    Let $\delta \in (2,\infty)$ be an absolute constant.
    Let $ D \geq d$ be two integers that are powers of $2$,
    $n \in \Z_+$, $\tau \in (0, 1)$,
    $\alpha \in ( 0 , 1)$,
    $\hat \mu \in \R^d$,
    and $\RowP \in \R^{d \times D}$ be a row orthonormal matrix.
    Assume that $n \gg \tfrac{d}{\alpha} \log(\tfrac{d}{\tau})\log(\tfrac{d}{\alpha \tau})$.
    Given access to batches of samples generated by the model of \Cref{def:model2} with common mean $\mu \in \R^D$ and signal-to-noise ratio $\alpha \in (0,1)$,
    an execution of \Call{RecursiveEstimate}{$\RowP,n,\hat \mu,\tau$} produces 
    some vector $\hat \mu' \in \R^d$ such that
    the following holds with probability at least $1 - \log_2(d) \; (\tau + O(n^{2-\delta}))$:
    \begin{align*}
        \left\|  \hat \mu' - \RowP \mu \right\|_2 &\leq \frac{\log_2 d}{ 10 \L} \left\| \hat{\mu} - \RowP \mu \right\|_2 
        +  \frac{ \log_2 d }{ 10 \log D } \f(\alpha, n)
        \footnotemark
        \\
        &+ O\left( \sqrt{\frac{(d+\log(1/\tau))\log(nd/\tau)}{\alpha n}}\log_2 d   \right) + O(  \log D \; \varb ) \; \f(\alpha,n)  \, ,
    \end{align*}
    \footnotetext{
    We note {this term is smaller than the last term}. We keep the term only to make the inductive argument more explicit.}
    where $\f(\cdot)$ denotes the 1-d estimation error function defined in \Cref{eq:function_f}.
\end{lemma}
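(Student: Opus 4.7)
The plan is to prove the lemma by induction on the dimension $d$. The routine \recursivefunction\ invokes itself once on a subspace of dimension $d/2$ (the one defined by $\RowP_{\mathrm{high}}\RowP$ in \Cref{line:recursive-call}), so the recursion terminates within $\log_2 d$ levels. A union bound over these levels, together with the failure probabilities $O(\tau)$ of \Cref{lem:low_var_error_impr} and $O(n^{2-\delta})$ of \Cref{lem:guess-center} incurred at each level, will yield the stated failure probability $\log_2(d)(\tau + O(n^{2-\delta}))$.

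For the base cases, first suppose $d \leq C\,\varb\,(\L)^2$: then \Cref{line:naive_est} runs the coordinate-wise estimator of \Cref{cor:naive_multivariate}, whose error $\sqrt{d}\,\f(\alpha,n) \lesssim \L\sqrt{\varb}\,\f(\alpha,n)$ fits inside the $O(\L\,\varb)\f(\alpha,n)$ summand. Otherwise, if $\sqrt{d} \leq \f(\alpha,n)$, \Cref{line:base2} returns $\tilde\mu$, and \Cref{lem:guess-center} guarantees $\|\tilde\mu - \RowP\mu\|_2 \lesssim \sqrt{d} + \f(\alpha,n) \lesssim \f(\alpha,n)$, again absorbed into the same summand.

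For the inductive step, both base conditions fail, so $d > C\,\varb\,(\L)^2$ and $\sqrt{d} > \f(\alpha,n)$. The tournament on \Cref{line:tournament} produces $\tilde\mu$ with $\|\tilde\mu - \RowP\mu\|_2 \lesssim \min(\|\hat\mu - \RowP\mu\|_2, \sqrt{d}) + \f(\alpha,n) \lesssim \sqrt{d}$, so \Cref{lem:low_var_error_impr} is applicable. Invoking it with $\kappa := c_0 \L$ for a sufficiently large absolute constant $c_0$ (the precondition $d \gg c_0^2\,\varb$ is ensured by enlarging $C$) yields
\begin{align*}
\|\mul - \RowP_{\mathrm{low}}^\top\RowP_{\mathrm{low}}\RowP\mu\|_2 \leq \frac{1}{c_0\L}\|\tilde\mu - \RowP\mu\|_2 + O\!\left(\sqrt{\tfrac{(d+\log(1/\tau))\log(nd/\tau)}{\alpha n}}\right),
\end{align*}
and substituting the tournament bound controls the first term by $\frac{1}{10\L}(\|\hat\mu - \RowP\mu\|_2 + \f(\alpha,n))$ once $c_0$ is chosen large enough to swallow the tournament constant. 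Since $\mul$ lies in the row span of $\RowP_{\mathrm{low}}$ and $\RowP_{\mathrm{high}}^\top\muh$ in the orthogonal row span of $\RowP_{\mathrm{high}}$, Pythagoras—together with $\RowP_{\mathrm{high}}$ being row orthonormal—gives $\|\hat\mu' - \RowP\mu\|_2 \leq \|\mul - \RowP_{\mathrm{low}}^\top\RowP_{\mathrm{low}}\RowP\mu\|_2 + \|\muh - \RowP_{\mathrm{high}}\RowP\mu\|_2$.

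Finally, I apply the inductive hypothesis to the recursive call, which estimates in $\R^{d/2}$ starting from $\RowP_{\mathrm{high}}\hat\mu$ with target $\RowP_{\mathrm{high}}\RowP\mu$, and uses $\|\RowP_{\mathrm{high}}\hat\mu - \RowP_{\mathrm{high}}\RowP\mu\|_2 \leq \|\hat\mu - \RowP\mu\|_2$ (row orthonormality of $\RowP_{\mathrm{high}}$). The contraction coefficient $\frac{1}{10\L}$ from the low-variance part adds to the inductive $\frac{\log_2(d/2)}{10\L}$ to give exactly $\frac{\log_2 d}{10\L}$ on both the $\|\hat\mu - \RowP\mu\|_2$ and $\f(\alpha,n)$ terms; the sub-Gaussian error at dimension $d/2$ accumulates one extra copy (the dimension $d/2+\log(1/\tau)$ is upper bounded by $d+\log(1/\tau)$ so no worsening occurs), yielding the $\log_2 d$ multiplicative factor in the third summand; and the base-case $O(\L\,\varb)\f(\alpha,n)$ term carries through unchanged since it arises only once at the leaf of the recursion. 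The main obstacle is purely bookkeeping: one must choose $c_0$ so that the contraction tightens appropriately at every one of the $\log_2 d$ levels and then verify that the telescoping sums of the four kinds of error terms match the four summands on the right-hand side of the claim.
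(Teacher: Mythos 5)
Your proposal follows essentially the same route as the paper's proof: induction on the dimension with the same two base cases, the tournament bound feeding into \Cref{lem:low_var_error_impr} with $\kappa \asymp \log D$, orthogonality of the two subspaces to split the error, and the inductive hypothesis applied to the half-dimensional recursive call, with the same bookkeeping of the four error summands and the union bound over $\log_2 d$ levels. The only cosmetic difference is that the paper combines the two subspace errors via the triangle inequality rather than Pythagoras, which changes nothing.
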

\begin{proof}
    We show that the claim holds for $d$ being any power of two by induction.  The base case of the induction is when either (i) $d \leq C (\L)^2 \varb)$, i.e., the condition in \Cref{line:base} of \Cref{alg:recursive}, where $C$ is some sufficiently large constant, or (ii) $\sqrt{d} \leq \f(\alpha, n)$, i.e., the condition in \Cref{line:base2}.
    In~case~(i), \Call{RecursiveEstimate}{$\RowP,n,\hat \mu,\tau$} just runs the base estimator and terminates with its output. The error in that case is $O(\; \f(\alpha,n) \sqrt{d})
    \leq O( \f(\alpha,n) \cdot  \L \cdot \varb)$ with probability at least $1 - \tau$ {by \Cref{cor:naive_multivariate}}.
    In case~(ii), \Call{RecursiveEstimate}{$\RowP,n,\hat \mu,\tau$} runs \textsc{TournamentImprove}. 
    By \Cref{lem:guess-center}, the procedure produces some estimate $\tilde \mu \in \R^d$ such that 
    $\| \tilde \mu - \vec P \mu\|_2 \lesssim \sqrt{d} + \f(\alpha, n)$ with probability at least $1 - \tau$.\footnote{We  use $\RowP \mu$ in place of the $\mu$ that appears in the statement of \Cref{lem:low_var_error_impr}. This is because $x_i$ are the samples after projection using $\RowP$.} 
    Since in this case we have $\sqrt{d} \leq \f(\alpha, n)$, the right hand side is further upper bounded by $2 \f(\alpha, n)$.
    Hence, the lemma is satisfied in both base cases.

    For the inductive step, we assume that the claim holds up to some dimension $d$ and we will show that the same claim holds when $\RowP \in \R^{2d \times D}$.
    Since we are not in the base case, we can assume that
    $\f(\alpha,n) \leq \sqrt{d}$ and 
    $d \geq C (\L)^2 \varb$.
    Observe that with probability at least $1- O(n^{2-\delta})$, the estimator $\meanini$ computed in \Cref{line:tournament} satisfies 
    \begin{align}\label{eq:center_good}
        \| \meanini -  \RowP \mu \|_2 \lesssim \min\left( \|\hat \mu - \RowP \mu\|_2,  \sqrt{d} \right) + \f(\alpha,n) \lesssim  \min\left( \|\hat \mu - \RowP \mu\|_2 + \f(\alpha,n),  \sqrt{d} \right) \;,
    \end{align}
    where the first inequality follows from \Cref{lem:guess-center}, and the second inequality follows from the fact that $\f(\alpha, n) \leq \sqrt{d}$.

    Let $\RowP_{\mathrm{low}},\RowP_{\mathrm{high}}$, and $\mul$ be defined respectively as in \Cref{line:search_subspace,line:mu1} of \Cref{alg:recursive}. 
    In \Cref{line:recursive-call}, the estimator $\muh$ is set to be the output of the recursive call  \Call{RecursiveEstimate}{$\RowP_{\mathrm{high}}\cdot \RowP , n, \RowP_{\mathrm{high}} \hat \mu, \tau$}. The row orthonormal matrix $\RowP_{\mathrm{high}}\cdot \RowP$ used in that call has half the rows of $\RowP$. Thus, the inductive hypothesis is applicable for the estimator $\muh$.
    This implies that with probability at least $1- (\tau + O(n^{2-\delta})) \log_2 d$ the following holds:
    \begin{align}\label{eq:error1}
        &\left\| \muh - \RowP_{\mathrm{high}} \vec P \mu  \right\|_2  
        \leq 
        \frac{\log_2 d}{ 10 \L} \left\| \hat{\mu} - \RowP \mu \right\|_2 \nonumber 
        +   \frac{ \log_2 d }{ 10 \log D } \f(\alpha, n)   \\ 
        &+ O\left( \sqrt{\frac{(d+\log(1/\tau))\varb}{\alpha n}}\log_2 d   \right)
        + O( \L \; \varb ) \; \f(\alpha,n).
    \end{align}
    We now apply \Cref{lem:low_var_error_impr} with $\kappa = c \L$ for some constant $c$ that will be specified later.
    The lemma is applicable since: (i) the condition $\| \meanini -  \RowP \mu \|_2 \lesssim \sqrt{d}$ shown in \Cref{eq:center_good}, (ii) our assumption  $n \gg \tfrac{d}{\alpha} \log(\tfrac{d}{\tau})\log(\tfrac{d}{\alpha \tau})$, and (iii) the fact that $d \geq C (\L)^2 \; \log(n / \tau) \gg \kappa^2 \varb
    = c^2 (\L)^2 \varb$ as long as $C$ is sufficiently large with respect to $c$.
    By the application of \Cref{lem:low_var_error_impr}, we have the following with probability at least $1-\tau$:
    \begin{align}
        \lp \|  \mul -  \RowP_{\mathrm{low}}^\top \RowP_{\mathrm{low}} \RowP \mu \rp \|_2
&\leq \frac{1}{c \L}  \lp \|     \meanini - \RowP \mu  \rp\|_2
+ O\left( \sqrt{\frac{(d + \log(1/\tau))\varb}{\alpha n}} \right) \nonumber \tag{using \Cref{lem:low_var_error_impr}} \\
&\leq O \lp( \frac{1}{c \L} \rp)  \lp \|     \hat \mu - \RowP \mu  \rp\|_2
+ O \lp( \frac{\f(\alpha,n)}{c\L} \rp) + O\left( \sqrt{\frac{(d + \log(1/\tau))\varb}{\alpha n}} \right) 
\nonumber \tag{using \Cref{eq:center_good}} \\
&\leq \frac{1}{10 \L}   \lp \|     \hat \mu - \RowP \mu  \rp\|_2
+  \frac{\f(\alpha,n)}{10 \L} + O\left( \sqrt{\frac{(d + \log(1/\tau))\varb}{\alpha n}} \right) \, ,
\label{eq:error2}
\end{align}
where the last inequality is true as long as $c$ is sufficiently large to cancel out the hidden constant inside the $O(\cdot)$ notation.

By the discussion so far and a union bound, we have that with probability at least $1- \tau- O(n^{2-\delta})  -(\tau + O(n^{2-\delta})) \log_2 d = 1-(\tau + O(n^{2-\delta}))\log_2(2d)$,  \eqref{eq:error1} and \eqref{eq:error2} hold simultaneously.
Note that $\left\| \RowP_{\mathrm{high}}^\top \muh - \RowP_{\mathrm{high}}^\top \RowP_{\mathrm{high}} \RowP\mu \right\|_2 \leq \|  \muh-   \RowP_{\mathrm{high}} \RowP\mu \|_2$ since $\RowP_{\mathrm{high}}$ is orthonormal. 
Combining this observation, \Cref{eq:error1,eq:error2}, and the triangle inequality gives that
\begin{align*}
\left\|  \hat \mu' - \vec P \mu \right\|_2 
&=
\left\| \RowP_{\mathrm{high}}^\top \muh + \mul - 
\lp( \RowP_{\mathrm{high}}^\top \RowP_{\mathrm{high}} + \RowP_{\mathrm{low}}^\top \RowP_{\mathrm{low}} \rp)
\vec P \mu \right\|_2  \\
 &\leq 
 \left\| \RowP_{\mathrm{high}}^\top \muh -   \RowP_{\mathrm{high}}^\top \RowP_{\mathrm{high}} \vec P \mu \right\|_2 + 
 \lp \| \mul -  \RowP_{\mathrm{low}}^\top \RowP_{\mathrm{low}} \vec P \mu \rp\|_2 \tag{the triangle inequality} \\ 
&\leq 
\left\|  \muh -   \RowP_{\mathrm{high}} \vec P \mu \right\|_2 + 
\lp \| \mul -  \RowP_{\mathrm{low}}^\top \RowP_{\mathrm{low}} \vec P \mu \rp\|_2 
\tag{orthonormality of $\vec P_{\mathrm{high}}$}
\\  
&\leq  \frac{1 + \log_2 d}{10 \L} \| \hat{\mu} - \mu\|_2 + \frac{(1+\log_2 d)\f(\alpha,n,\tau^2\alpha)}{10 \L} \tag{\Cref{eq:error1,eq:error2}} \\
&+ O\left( \sqrt{\frac{(d+\log(1/\tau))\varb}{\alpha n}}(1+\log_2 d)   \right)
+ O( \log D \; \varb ) \; \f(\alpha,n) \, ,
\end{align*}
where the first equality follows from the definition of $\hat \mu'$ (\Cref{line:combine}) and the fact that $\RowP_{\mathrm{high}}^\top \RowP_{\mathrm{high}}$ and $\RowP_{\mathrm{low}}^\top \RowP_{\mathrm{low}}$ 
are orthonormal projectors of two complementary subspaces respectively.
Observing that $1 + \log_2 d = \log_2(2d) $ completes the proof of \Cref{lem:single_stage_analysis}.
\end{proof}

We are now ready to complete the proof of \Cref{thm:main}.

\begin{proof}[Proof of \Cref{thm:main}]

    In this proof, we put everything together in order to show that the output of the algorithm \Call{EngangledMeanEstimation}{$N$} (cf. \Cref{alg:mean_estimation} and \Cref{alg:recursive}) satisfies the guarantees of \Cref{thm:main}.
    First, we copy below various quantities that will frequently show up in the argument:
    \begin{align}\label{eq:quantities}
         r = \rr, \; m=\mm, \; t = 2+ m ( 3r + 1), \;  n= N/t, \; \tau = N^{-3}/r.
    \end{align}
    {We note that the sample budget parameter $n$ satisfies the following lower bounds.}
    \begin{claim}
    \label{clm:n-lb}
    {Consider the same context as in the statement of \Cref{thm:main} and also the notations in \Cref{eq:quantities}.
    The conditions in that statement imply that
    $n \gg \log(t/\tau) /\alpha$ and
    $n \gg (D/\alpha) \log(D / \tau) \log(D / (\alpha \tau))$.}
    \end{claim}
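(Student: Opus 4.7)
The plan is to verify both inequalities by substituting the definitions of $t, n, \tau$ from \Cref{eq:quantities} and then invoking the hypothesis $N \geq (D/\alpha)\log^C(D/\alpha)$ of \Cref{thm:main} with $C$ a sufficiently large absolute constant.

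First I would simplify the logarithmic factors. Since $r = \lceil \log_2 N \rceil$, $m = \log_2 D$, and $t = 2 + m(3r+1)$, we have $t = \Theta(\log N \log D)$ and hence $n = N/t = \Theta(N / (\log N \log D))$. Moreover $1/\tau = rN^3$ gives $\log(1/\tau) = \Theta(\log N)$, so
\begin{align*}
\log(t/\tau) = O(\log N), \quad \log(D/\tau) = O(\log(ND)), \quad \log(D/(\alpha\tau)) = O(\log(ND/\alpha)).
\end{align*}
Substituting into the two required bounds (and moving all $\log$-terms to the right-hand side) reduces the first to $N \gg \log^2(N) \log(D)/\alpha$ and the second to $N \gg (D/\alpha) \log(N) \log(D) \log(ND) \log(ND/\alpha)$. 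Both have the form $N \gg (D/\alpha) \cdot P(\log(ND/\alpha))$, where $P$ is a polynomial of constant degree at most $4$ (the first is even easier since $D/\alpha \geq 1/\alpha$).

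To close, I would consider two regimes. If $\log N \leq 2\log(D/\alpha)$, then the right-hand sides above are $O((D/\alpha)\log^4(D/\alpha))$, which is dominated by the hypothesis $N \geq (D/\alpha)\log^C(D/\alpha)$ as soon as $C \geq 5$. If instead $\log N > 2\log(D/\alpha)$, the required inequality becomes $N \gg (D/\alpha) \polylog(N)$; this follows from an iterated application of \Cref{fact:implicitinequality}, which converts an implicit bound $N \geq r \log N + b$ into the explicit bound $N \geq 4r\log(2r) + 2b$. After constantly many iterations (one per factor of $\log N$ on the right-hand side) we arrive at $N \geq (D/\alpha)\log^{C'}(D/\alpha)$ for some $C'$ depending only on the degree of $P$. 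Choosing $C$ larger than $C'$ (and large enough to dominate the constants hidden in the $\Theta(\cdot),O(\cdot)$ above) completes the argument.

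The step requiring most care is pinning down a precise value of $C$ that simultaneously covers every log factor absorbed above; this is the only place the bookkeeping is non-trivial. However, since the polynomial $P$ has bounded degree and all hidden constants are absolute, no genuine obstacle is anticipated — the claim is essentially a direct substitution combined with \Cref{fact:implicitinequality}.
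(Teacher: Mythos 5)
Your proposal is correct and follows essentially the same route as the paper: substitute the definitions from \Cref{eq:quantities}, absorb all logarithmic factors into $\polylog(ND/\alpha)$, and then use the hypothesis $N \geq (D/\alpha)\log^C(D/\alpha)$ together with \Cref{fact:implicitinequality} to absorb them. Your explicit two-regime split ($\log N \lessgtr 2\log(D/\alpha)$) just spells out the step that the paper compresses into its terse invocations of \Cref{fact:implicitinequality}, and it inherits the same (universally ignored) edge cases when $D/\alpha$ or $N$ is bounded by a constant.
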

    \begin{proof}
    {Recall that in \Cref{thm:main} we assume that $N \gg (D / \alpha) \log^C( D / \alpha )$, 
    where $C$ is some sufficiently large constant.
    }
    {Plugging the definition of $t$ and $\tau$ gives
    \begin{align*}
        \frac{\log(t/\tau)}{\alpha} \lesssim \frac{\log N}{\alpha}.
    \end{align*}}
    {Since $N \geq \log^C(1/\alpha) / \alpha$, \Cref{fact:implicitinequality}, we must have $N \gg \log^3(N) / \alpha$.}
    {Since $n = N/ t = \Omega \lp( N / \log^2(N) \rp)$, 
    it follows that
    $$
    n \gg \log N / \alpha \gtrsim \log(t/\tau) / \alpha.
    $$}

    {Again by \Cref{fact:implicitinequality}, we must have that
    $$
    N \gg (D/\alpha) \log^2(N)
    \log(D / \alpha) \log(D N /  \alpha ) 
    \gtrsim (D / \alpha) \log^2(N) \log( D / \alpha )
    \log(D / (\tau \alpha)).
    $$}
    {This further implies that
    $$
    n \gg (D / \alpha) \log( D / \alpha )
    \log(D / (\tau \alpha))\, ,
    $$
    and concludes the proof of \Cref{clm:n-lb}.}
    \end{proof}
    
    The pseudocode in \Cref{alg:mean_estimation} assumes that the algorithm is able to draw independent datasets according to the data generation model of \Cref{def:model2}. We first argue that this can be simulated in the following claim. 
    \begin{claim}\label{cl:simulation}
        Consider the same context as in the statement of \Cref{thm:main}.
        Then all the datasets that are drawn from the data generating model of \Cref{def:model2} throughout the execution of \Cref{alg:mean_estimation} can be simulated by having access to the model of \Cref{def:model}, and the simulation succeeds with probability at least $1-\tau$.
    \end{claim}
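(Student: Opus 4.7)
The plan is a direct reduction to \Cref{lem:simulation}. I would (i) upper-bound by $t = 2 + m(3r+1)$ the number of independent batches from \Cref{def:model2} that a single execution of \Cref{alg:mean_estimation} ever requests, (ii) verify the precondition $n \gg \log(t/\tau)/\alpha$ of \Cref{lem:simulation} under the parameters $n = N/t$ and $\tau = N^{-\delta}/r$ chosen by the algorithm, and (iii) invoke \Cref{lem:simulation} to simulate those $t$ batches from the single dataset of $N = tn$ samples drawn from \Cref{def:model}.

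First, the counting argument. The warm-start at \Cref{line:initial_guess} invokes \Call{TournamentImprove}{$\vec 0,n,\tau$}, which by \Cref{lem:guess-center} draws $2$ independent batches. The outer for-loop runs $r = \rr$ iterations; each iteration fires a top-level call to \Call{RecursiveEstimate}{}, and since every recursive call projects onto a subspace of exactly half the dimension, the recursion tree degenerates into a chain of depth at most $m = \mm$. At each node of this chain, \Cref{line:dataset} draws one fresh batch and the internal call to \Call{TournamentImprove}{} at \Cref{line:tournament} draws two more, for a total of at most $3$ batches per node. Summing over the $r$ outer iterations together with the warm-start gives at most $2 + 3mr \le 2 + m(3r+1) = t$ batches overall.

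Second, the precondition check. Since $t = O(mr) = O(\log D \cdot \log N)$ and $\tau = N^{-\delta}/r$ for an absolute constant $\delta$, we have $\log(t/\tau) = O(\log N)$. Using $n = N/t$ together with the hypothesis $N \gg (D/\alpha)\log^{C}(D/\alpha)$ from \Cref{thm:main} and \Cref{fact:implicitinequality}, I would derive $n \gtrsim N/(\log N \cdot \log D) \gg \log N/\alpha \gtrsim \log(t/\tau)/\alpha$, exactly the hypothesis needed to apply \Cref{lem:simulation}. Feeding the verified parameters into that lemma produces, with probability at least $1-\tau$, the desired $t$ mutually independent batches, each of size at least $0.9n$ and with signal-to-noise rate at least $0.9\alpha$.

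The only subtlety, and the place I expect to be slightly careful, is that the downstream invocations of \Cref{lem:guess-center,lem:single_stage_analysis,lem:low_var_error_impr} are stated for parameters $(n,\alpha)$, whereas \Cref{lem:simulation} delivers only $(0.9n, 0.9\alpha)$. However, all the sample-size conditions and error guarantees in those lemmata depend polynomially (indeed, at worst linearly) on $1/n$ and $1/\alpha$, so the constant-factor loss is absorbed into the implicit constants and does not affect the final error rate of \Cref{thm:main}; I would just remark on this absorption rather than re-deriving the bounds.
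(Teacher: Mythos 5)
Your proposal is correct and follows essentially the same route as the paper: count the total number of \Cref{def:model2}-batches requested across the warm start, the $r$ outer iterations, and the depth-$m$ recursion chain (bounding it by $t = 2+m(3r+1)$), verify the precondition $n \gg \log(t/\tau)/\alpha$ of \Cref{lem:simulation} from the hypothesis $N \geq (D/\alpha)\log^C(D/\alpha)$ exactly as the paper does in \Cref{clm:n-lb}, and then invoke \Cref{lem:simulation}. Your closing remark about absorbing the $(0.9n,0.9\alpha)$ degradation into constants matches the paper's own remark following \Cref{lem:simulation}.
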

    \begin{proof}
        We start by noting the number and size of the datasets used throughout the execution of \Call{EntangledMeanEstimation}{$N$}.
        First, the algorithm in \Cref{line:initial_guess} uses two datasets of size $n$.
    Then each of the {$r = \rr$} iterations of the loop of \Cref{line:stages} of \Cref{alg:mean_estimation} calls \Call{RecursiveEstimate}{$\vec I,n,0,\tau$}. 
    A single call of \Call{RecursiveEstimate}{$\vec I,n,0,\tau$}  draws one dataset of size $n$  in \Cref{line:dataset} and another two datasets in line \Cref{line:tournament}. Then the algorithm proceeds to call itself recursively for a recursive depth of at most {$m=\mm$}. The final call of \Call{RecursiveEstimate}{$\cdot$} when the recursion ends uses a single dataset of size $n$ (cf. \Cref{line:naive_est}). Thus, in total, the algorithm is using $t = 2+ m ( 3r + 1)$ independent datasets of size $n$ each. This means that the batch size is $n =N/(2+ m ( 3r + 1))$.

    Using \Cref{lem:simulation} with $t = 2+ m ( 3r + 1)$, we have that these datasets can be simulated by access to the model of \Cref{def:model2}, so long as $n \gg \log(t/\tau)/\alpha$, {which is guaranteed by \Cref{clm:n-lb}.}    
    This completes the proof of \Cref{cl:simulation}
    \end{proof}
    
    Given that the simulation succeeds, we now focus on bounding the error of the output.

    We will analyze how the estimation error decreases in each round of the for loop of \Cref{line:stages} of \Cref{alg:mean_estimation} by using \Cref{lem:single_stage_analysis} with $\RowP=\vec I$, $d = D$, {$\delta = 3$}, and $\tau = N^{-3}/r$.
    The lemma is indeed applicable because of the following: (i) $D$ is either a power of 2 or by padding extra coordinates with $\vec 0$ mean we can trivially extend their dimension until it becomes a power of $2$, (ii) the requirement $n \gg \tfrac{d}{\alpha} \log(\tfrac{d}{\tau})\log(\tfrac{d}{\alpha \tau})$ {is guaranteed by \Cref{clm:n-lb}}.
    The application of \Cref{lem:single_stage_analysis} yields the following: If $\hat{\mu}$ is the estimate at the beginning of an iteration of \Cref{line:stages} of \Cref{alg:mean_estimation}, then the improved estimate $\hat \mu'$ after that iteration ends satisfies the following with probability at least $1-O( N^{-0.9})$:\footnote{We rewrite the probability of success listed in \Cref{lem:single_stage_analysis} using 
    {$\tau = N^{-3}/r$, $\delta = 3$, and the fact that
    $\log D < \log N < N^{0.1}$ for sufficiently large $N$.}}
    \begin{align}\label{eq:simplified}
        \| \hat \mu' - \hat{\mu} \|_2 &\leq \frac{1}{2}\|\hat{\mu} - \mu\|_2 + T\;, \\\;\;\text{where} \;\; T &\lesssim \sqrt{\frac{(D+\log(1/\tau))\log(n D/\tau)}{\alpha n}}\log D    +  \log D \log(nD/\tau) f(\alpha,n) \;. \label{eq:capitalT}
    \end{align}

    We now turn our focus on the initial estimate from \Cref{line:initial_guess} of \Cref{alg:mean_estimation}.
    {By \Cref{lem:guess-center},
    the initial estimate $\hat \mu$ has error 
    $O \lp( \sqrt{D} + f(\alpha,n) \rp)$.
    Hence, if the algorithm does not enter \Cref{line:just-return}, we must have 
    $\| \hat \mu - \mu\|_2 \leq O(\sqrt{D})$.
    with probability at least $1 - 1/N$.
    }
    Then \Cref{eq:simplified} implies that after $r = \ceil{0.5\log_2(N)}$ iterations, the error of the estimate $\hat \mu$ produced after the last iteration of the for loop (\Cref{line:return}) must satisfy 
    \begin{align}
    \label{eq:before-final-error}
        \| \hat \mu - \mu \|_2 &\lesssim \sqrt{\frac{D}{N}}  + T\log N \;,
    \end{align}
    with probability at least $1-O( N^{-0.8})$ (where we did a union bound over $r \leq \log N \leq N^{0.1}$ many iterations).
    To get the final error upper bound, we plug in the parameters from \Cref{eq:capitalT,eq:quantities} and the assumption $D \ll N$ into \Cref{eq:before-final-error}. This yields the following error guarantee:
    \begin{align*}
        \| \hat \mu - \mu \|_2 &\lesssim \log^{O(1)}(N)\left(\sqrt{\frac{D}{\alpha N} } + f\left(\alpha, \frac{N}{(\log N)^{O(1)}}\right)       \right)  \;.
    \end{align*}
    Finally, we can further simplify this bound by  the property $f(\alpha,n/L) \leq \poly(L) f(\alpha,n)$ for any $L  = \polylog(N)$. 
    This follows just by the form of $f(\cdot)$ as two pieces of ratios of polynomials that agree on the point where the two pieces meet.
    This allows us to write $f\left(\alpha, \frac{N}{(\log N)^{O(1)}}\right) \lesssim (\log N)^{O(1)}f\left(\alpha, N\right)$, which yields the final bound appearing in the statement of \Cref{thm:main}. 
    {Regarding the probability of failure, in this proof we used $\delta =3 $ and we ended up with probability of failure $1-O( N^{-0.8})$, which is at least $0.99$ as long as $N$ is sufficiently large.}
    This concludes the proof of \Cref{thm:main}.
\end{proof}

\newpage

\printbibliography

\newpage
\appendix

\section*{Appendix}

\section{Additional Preliminaries}\label{sec:additional-prelims}

We restate the definitions of the two models below.

\ORIGINALMODEL*
\MULTIBATCHMODEL*
We are now ready to show that the above data-generation model can be simulated efficiently with a single dataset from the data-generation model of \Cref{def:model}.
\SIMULATION*
\begin{proof}
To simulate access to $t$ batches of samples, each of size $n$, from the data generation model of \Cref{def:model2}, we set $N = t n$ and draw $N$ samples from the data generation model of \Cref{def:model}. 
We will randomly permute the samples and then partition the samples into batches of size $n$ in order.
It remains to argue that the signal-to-noise-rate of each batch is at least $0.9 \alpha$ with probability at least $1 - \tau$. 
We focus on the first batch as the arguments for the rest are identical.
Denote by $g_i$ for $i \in [N]$ the indicator variable that equals to $1$ when the $i$-th sample (after the random permutation) has its covariance bounded from above by $\vec I$.
It then suffices to argue that $\sum_{i=1}^{n} g_i \geq 0.9 \alpha n $.
On the one hand, by linearity of expectation, we have that
\begin{align}
\label{eq:noise-expectation}
\E\left[ \sum_{i=1}^{n} g_i \right] = \sum_{i=1}^{n} \E[ g_i ] = \alpha n.    
\end{align}
On the other hand, by the main result of~\cite{joag1983negative}, since $g_1, \cdots,  g_{N}$ follow the permutation distributions, we have that $g_1, \cdots, g_{n}$ enjoy negative association (Definition 2.1 of  \cite{joag1983negative}).
We can therefore apply the Chernoff-Hoeffding bounds for negatively-associated variables (\cite{dubhashi1996balls}), which yields that
\begin{align}
\label{eq:na-concentration}
\Pr\lp[  \sum_{i=1}^{n} g_i  
\leq 0.9 \E \lp[ \sum_{i=1}^{n} g_i \rp] 
\rp] \leq \exp\lp( - \Omega\lp( \E \lp[ \sum_{i=1}^{n} g_i \rp] \rp) \rp).
\end{align}
Combining \Cref{eq:noise-expectation,eq:na-concentration} hence gives that 
$\sum_{i=1}^{n} g_i \geq 0.9 \alpha n $ 
with probability at least $\tau / t$
as long as  $n \gg \log(t/\tau) / \alpha$.
It then follows from the union bound that the signal-to-noise-rate for each batch is at least $0.9 \alpha$ with probability at least $1 - \tau$.
\end{proof}

\end{document}